\DeclareRobustCommand{\cyrtext}{%
  \fontencoding{T2A}\selectfont\def\encodingdefault{T2A}}
\DeclareRobustCommand{\textcyr}[1]{\leavevmode{\cyrtext #1}}
\DeclareTextSymbolDefault{\textquotedbl}{T1}
\theoremstyle{definition}
\newtheorem{defn}{\protect\definitionname}
\theoremstyle{definition}
 \newtheorem{example}{\protect\examplename}
\theoremstyle{plain}
\newtheorem{fact}{\protect\factname}
\theoremstyle{plain}
\newtheorem{prop}{\protect\propositionname}
\theoremstyle{plain}
\newtheorem{lem}{\protect\lemmaname}
\theoremstyle{plain}
\newtheorem{cor}{\protect\corollaryname}
\theoremstyle{definition}
\newtheorem*{example*}{\protect\examplename}
\providecommand{\corollaryname}{Corollary}
\providecommand{\definitionname}{Definition}
\providecommand{\examplename}{Example}
\providecommand{\factname}{Fact}
\providecommand{\lemmaname}{Lemma}
\providecommand{\propositionname}{Proposition}
\begin{document}

\title{Expectations, Networks, and Conventions}

\author{Benjamin Golub \and Stephen Morris}
\begin{abstract}
In coordination games and speculative over-the-counter financial markets,
solutions depend on \emph{higher-order average expectations}: agents'
expectations about what counterparties, on average, expect \emph{their}
counterparties to think, etc. We offer a unified analysis of these
objects and their limits, for general information structures, priors,
and networks of counterparty relationships. Our key device is an \emph{interaction
structure} combining the network and agents' beliefs, which we analyze
using Markov methods. This device allows us to nest classical beauty
contests and network games within one model and unify their results.
Two applications illustrate the techniques: The first characterizes
when slight optimism about counterparties' average expectations leads
to \emph{contagion of optimism }and extreme asset prices. The second
describes the \emph{tyranny of the least-informed}: agents coordinating
on the \emph{prior} expectations of the one with the worst private
information, despite all having nearly common certainty, based on
precise private signals, of the ex post optimal action.
\end{abstract}

\date{September 2020}

\thanks{Golub: Department of Economics, Harvard University, Cambridge, U.S.A.,
bgolub@fas.harvard.edu, website: bengolub.net.
Morris: Department of Economics, MIT,
U.S.A., semorris@mit.edu, website: economics.mit.edu/faculty/semorris.
We are grateful for conversations with Nageeb Ali, Dirk Bergemann,
Larry Blume, Ben Brooks, Andrea Galeotti, Jason Hartline, Tibor Heumann,
Matthew O. Jackson, Bobby Kleinberg, Eric Maskin, Dov Samet, and Omer
Tamuz; for comments from Selman Erol, Alireza Tahbaz-Salehi and Muhamet
Yildiz, who served as discussants; as well as many questions and comments
from seminar and conference participants. We especially thank Ryota
Iijima for several inspiring conversations early in this project.
Cristian Gradinaru and Georgia Martin, and Eduard Talam\`{a}s provided
excellent assistance in preparing the manuscript.}
\maketitle

\section{Introduction}

Consider a situation in which each agent has strong incentives to
match the behavior of others. An outcome that agents coordinate on
in such a setting has been called a \textit{convention} in philosophy
and economics (see \citet*{Lewis1969}, \citet*{young1996economics},
and \citet*{shwi96}). In deciding how to coordinate, agents will
take into account their beliefs about (i) the state of the world,
which determines the best action; and (ii) one another's actions.
Agents may differ from one another, and hence have an incentive to
choose differently, for three reasons: first, because they are asymmetrically
informed; second, because they interpret the same information differently\textemdash that
is, they have different priors; and third, because they differ in
whom they want to coordinate with. Which conventions emerge in such
an environment will depend on the information asymmetries, the heterogeneous
prior beliefs, and the network describing the coordination motives
of agents. Our purpose is to characterize this dependence.

\medskip{}

We informally describe a simple model of this environment: a coordination
game with linear best responses. Nature draws an \emph{external state}
$\theta$, and each agent $i$ chooses a real-valued \emph{action}
$a^{i}$ based on some private information. This occurs simultaneously.
Agents' payoffs capture two motivations: First, they seek to coordinate
with a \emph{basic random variable} $y(\theta)$\textemdash a random
variable, common to everyone, that depends only on the external state;
second, they seek to take actions that are close to the actions that
others take\textemdash the various $a^{j}$ for $j\neq i$. The disutilities
they experience are proportional to the squares of the differences
between $a^{i}$ and these various targets; this feature induces best
responses linear in an agent's expectations of $y$ and others' actions.
A \emph{network} of weights captures the coordination concerns of
the agents\textemdash that is, \emph{which} others each agent cares
about coordinating with. 

If just the coordination motive were present, with no desire to match
the basic random variable, there would be a continuum of equilibria.
Indeed, for any action, there would be an equilibrium with everyone
choosing that action. The choice of action would be an \emph{arbitrary}
convention. We will be interested in the case where the convention
is not arbitrary, because agents put some weight on the accuracy motive\textemdash matching
the basic random variable\textemdash while still being strongly motivated
to choose actions close to others' actions. In this case, it turns
out there is a unique equilibrium. When the weight on others' actions
is high, it can be shown that agents essentially choose a common action.
We call this action\textemdash the common action played in the limit\textemdash the
\emph{convention}. If there were common knowledge of the external
state $\theta$, the convention would be equal to $y(\theta)$, the
value everyone seeks to match. But we are interested in characterizing
the convention when there is incomplete information about the state. 

The convention will depend on higher-order expectations of the agents.
Suppose Ann cares mainly about coordinating with Bob, who cares mainly
about coordinating with Charlie. (Recall that the agents all care
a little about matching their own expectations of the external variable.)
Then Ann's expectation of Bob's expectation of Charlie's expectation
of the external variable becomes relevant for Ann's decision. In this
scenario, each agent is seeking to coordinate with only one other,
but in the general model each seeks to match a (weighted) average
of the actions of several others. By an elaboration of the above reasoning
about Ann, Bob, and Charile, \emph{higher-order average expectations}
become relevant: Each agent cares about the average of his neighbors'
expectations of the average of \emph{their} neighbors' expectations
of the external variable, and so on. Thus our analysis of coordination
games leads naturally to a study of higher-order average expectations.
We will define the \emph{consensus} \emph{expectation} to be (essentially)
the limit of such higher-order average expectations as the order becomes
large. The consensus expectation will equal the convention that obtains
in the linear best-response game described above, in the limit as
agents' coordination concerns dominate. We will focus on this limit,
though many of the techniques we will develop can be extended to study
the case where coordination motives are not dominant.

\medskip{}

We will report three kinds of substantive results about consensus
expectations. To establish these results, we introduce a key technical
device: a Markov matrix on the union of agents' signals, which we
call an \emph{interaction structure}, capturing both the network and
agent's beliefs. A key observation is that consensus expectations
are determined by the stationary distribution corresponding to the
Markov matrix. We now present the substantive results, and we discuss
the technique in more detail at the end of the Introduction.

\subsection*{Unifying and Generalizing Network and Asymmetric Information Results.}

The first results unify and generalize facts known in the literatures
on network games and on asymmetric information: 

(a) Suppose agents have the same information but may have heterogeneous
beliefs about $\theta$\textemdash that is, different priors, which
are commonly known. Then the consensus expectation is simply a weighted
average of agents' heterogeneous prior expectations of the external
random variable. The weight on an agent's expectation is his \emph{eigenvector
centrality} in the network. This corresponds to the seminal result
of \citet*{Ballester2006} on equilibrium actions in certain network
games being weighted averages of individuals' ideal points, with someone's
weight determined by the extent to which others want to directly and
indirectly coordinate with him. The appearance of network centrality
here\textemdash a statistic of individuals defined from the matrix
of coordination weights\textemdash is a consequence of the matrix
algebra that naturally appears when studying higher-order average
expectations.\footnote{For recent surveys of economic applications related to network centrality,
see \citet*[Section 2.2.4]{JacksonBook}, \citet*{acemoglu-shocks},
\citet*{zenou-games}, and \citet*{golub-sadler}.}

(b) If there is \emph{asymmetric} information, but agents have common
prior beliefs, then the consensus expectation is equal to the (common
prior) ex ante expectation of the external state. Thus consensus expectations
are independent of the network structure, and also independent of
all features of the information structure except the common prior.
This result turns out to be a corollary of the result of \citet*{samet1998iteratedA}. 

(c) Embedding both (a) and (b), if agents have both heterogeneous
prior beliefs and asymmetric information but a \emph{common prior
on signals},\footnote{That is, a common prior on how the signal random variables are distributed.}
then the consensus expectation is equal to a weighted average of agents'
different ex ante expectations of the basic random variable. Just
as in (a), the weight on an agent's expectation is his \emph{eigenvector
centrality} in the network.\footnote{This decomposition separates, in a suitable sense, the effects of
the network and the beliefs. A companion paper, \citet*{GolubMorris2016},
gives the necessary and sufficient condition on the information structure
for this sort of weighted average decomposition to be possible.} This goes beyond existing work on network games with incomplete information
due to \citet*{cadp15}, \citet*{Marti2015}, \citet*{behm15} and
\citet*{bbdj15}: we will discuss these connections in Section \ref{sec:cpa-signals}
when we have introduced the model and key results. 

\subsection*{Contagion of Optimism.}

Our second category of results studies \emph{second-order optimism}.\emph{
}We assume that each agent, given any signal, assesses his average
counterparty as more optimistic than himself about the value of the
basic random variable, unless the agent himself has a first-order
expectation that is already very high (close to the highest induced
by any signal). Agents whose expectations are high may be somewhat
pessimistic: they may assess the average counterparty as less optimistic
than themselves. 

We study when arbitrarily slight second-order optimism leads consensus
expectations to be very high\textemdash near highest possible expectation
of $y$\textemdash via a contagion of optimism through higher-order
expectations. The proof is via a reduction to a Markov chain inequality.
The key subtlety in the analysis is: how much pessimism can be allowed
without destroying the contagion of optimism? We give a bound that
answers this question, and describe a sense in which this bound is
tight (Section \ref{subsec:optimism-tightness}). Recent work of \citet*{han16}
discusses a different contagion of optimism in a CARA-normal rational
expectations model. We examine connections with related models in
Section \ref{subsec:optimism-related}. 

\subsection*{Tyranny of the Least-Informed.}

Third, we consider a setting where agents start with heterogeneous
priors about the external state but share a \emph{common interpretation
of signals}.\footnote{An \emph{interpretation} of a signal random variable is its conditional
distribution given the state, in line with the terminology of \citet*{Kandel}
and \citet*{Acemoglu2016}. } That is, agents observe signals of the external state. They agree
on the probability of any particular signal of a given agent conditional
on any external state.\footnote{In the environment we study for this application, the signals are
conditionally independent given the state, so that signals are correlated
only through the state.} However, their priors over external states may differ, and thus their
interim beliefs may not be compatible with a common prior. Given common
interpretation of signals, it makes sense to define notions of more
and less precisely informed agents, because the distributions of signals
given the external state (and thus the levels of noise in them) are
common knowledge. 

We show that, in a suitable sense, the consensus expectation approximates
the ex ante expectation of the agent whose private information is
\emph{least} precise. This is true even if all agents have very precise
private signals about the state, as long as the least-informed has
signals sufficiently less precise than others. The quantitative details
of how to define ``sufficiently'' are subtle, and rely on a Markov
chain connection that we discuss next.

\subsection*{The Interaction Structure and Markov Formalism.}

The techniques underlying the results discussed above are based on
a Markov matrix description of higher-order average expectations.
While we defer most of the details until Section \ref{sec:The-Interaction-Structure},
when we have more notation, the basic idea is simple. We define a
Markov process whose state space is the \emph{union} of all agents'
signals. Transition probabilities between any two states combine both
the network weights and the subjective probabilities of the agents.
In particular, the transition probability from a signal $t^{i}$ of
agent $i$ to a signal $t^{j}$ of agent $j$ is defined as the product
of (i) the network weight that $i$ places on $j$ and (ii) the subjective
probability that agent $i$, given signal $t^{i}$, places on $t^{j}$.
We call the transition matrix of this Markov process the \emph{interaction
structure}, and it is our key technical device. This formalism treats
beliefs and network weights entirely symmetrically. This symmetric
treatment enables the analysis to be reduced to Markov chain results,
which provide both a tool and novel insights.\footnote{The symmetric treatment follows \citet*{morris1997interaction}. As
discussed in detail in Section \ref{subsec: samet}, our approach
echoes \citet*{samet1998iteratedA} in using a Markov process to represent
incomplete information, although our Markov process is actually a
different one in significant ways.} Other work on network games with incomplete information\textemdash \citet*{cadp15},
\citet*{Marti2015}, \citet*{behm15} and \citet*{bbdj15}\textemdash does
not use this general device and must develop more tailored techniques.

The essence of our approach is that the iteration of the Markov matrix
associated with the interaction structure enables a brief, explicit
description of higher-order average expectations: The $n$$^{\text{th}}$-order
average expectations can be obtained by suitably combining the $n$-step
transition probabilities of the Markov process with the first-order
expectations associated to various signals.

To study consensus expectations, we consider the limit as $n$ grows
large. Under suitable conditions, in this limit the Markov transition
probability to any state\textemdash regardless of where the process
starts\textemdash becomes the stationary probability of that state.
This can be used to show that \emph{the stationary distribution of
the Markov process determines the consensus expectation}. Indeed,
the consensus expectation turns out to be a weighted average of first-order
expectations given various signals $t^{i}$. The weight on a signal
$t^{i}$ is its weight in the stationary distribution of the Markov
process.

Thus, our results on the consensus expectation are proved by studying
the stationary distribution of the Markov process and deriving properties
of it from more primitive assumptions about the environment. For example,
in the analysis of the contagion of optimism, the essential idea is
that when second-order optimism holds, probability mass in the Markov
process flows on average to signals associated with higher first-order
expectations of the basic random variable. It can be shown that, as
a consequence, states with high first-order expectations have a larger
share of the stationary probability. By our description of the consensus
expectation as a weighted average of first-order expectations, with
weights given by the stationary probabilities, it follows that the
consensus expectation is high.

Other results rely on different reasoning. The most technically involved
arguments are the ones associated with the tyranny of the least-informed.
These arguments rely on perturbation bounds for Markov chains, which
are used to show that the priors of highly informed agents cannot
play a substantial role in the stationary distribution that determines
the consensus expectation. Overall, our main methodological claim\textemdash illustrated
by the various applications\textemdash is that the structure of higher-order
expectations is illuminated by the Markov formalism. 

\medskip{}

The remainder of the paper is organized as follows. Section \ref{sec:model}
presents the environment, defines higher-order average expectations
and consensus expectations, and illustrates them with some simple
examples. Section \ref{sec:why-matter} motivates higher-order average
expectations and consensus expectations by discussing a coordination
game and an asset market where they are relevant. Section \ref{sec:The-Interaction-Structure}
presents our key technical device, the interaction structure, and
the correspondence between higher-order average expectations and statistics
of a Markov process. Section \ref{sec:consensus-and-network} relates
the interaction structure to the underlying network. Section \ref{sec:cpa-signals}
relates consensus expectations to agents' priors. Together these results
unify and extend the known network games and incomplete-information
results. Section \ref{sec:optimism} focuses on higher-order optimism,
while Section \ref{sec:Ignorance} reports our results on the tyranny
of the least-informed. Section \ref{sec:discussion} is a discussion
of relations to the literature, subtleties, and extensions.

\section{Model\label{sec:model}}

\subsection{The Information Structure}

\subsubsection{States, Signals, and Expectations}

\label{sec:info-structure}There is a finite set $\Theta$ of \emph{states
of the world}. There is a finite set $N$ of agents. Associated to
each agent $i\in N$ is a finite set $T^{i}$ of \emph{signals} (i.e.,
possible signal realizations). and these sets of signals are disjoint
across agents. Let $T=\prod_{i\in N}T^{i}$ be the product of all
the signal spaces, with a typical element being a tuple $t=(t^{i})_{i\in N}$;
let $T^{-i}=\prod_{j\in N\setminus\left\{ i\right\} }T^{j}$ be the
product of the signal spaces of all the others, viewed from $i$'s
perspective. An agent's signal fully determines all the information
he has, including the information he has about others' signals. Let
$\Omega=\Theta\times T$ be the set of all \emph{realizations}.

For each $i$ and signal $t^{i}$, there is a \emph{belief} $\pi^{i}(\cdot\mid t^{i})\in\Delta(\Theta\times T^{-i})$\textemdash that
is, a probability distribution over $\Theta\times T^{-i}$. This is
the \emph{interim }or \emph{conditional} belief that agent $i$ has
when he gets signal $t^{i}$. We introduce some notation to refer
to marginal distributions: $\pi^{i}(t^{j}\mid t^{i})$ denotes the
probability this belief assigns to agent $j$'s signal being $t^{j}$.\footnote{That is, $\pi^{i}(\{(\hat{\theta},\hat{t}^{-i}):\widehat{\theta}\in\Theta,\hat{t}^{j}=t^{j}\})$.}
For states $\theta\in\Theta$, the notation $\pi^{i}(\theta\mid t^{i})$
has an analogous definition. We refer to $\bm{\pi}=(\pi^{i}(\cdot\mid t^{i}))_{i\in N,\:t^{i}\in T^{i}}$
as the \emph{information structure}.\emph{}\footnote{As always, uncertainty about how signals are generated can be built
into this description of an information structure. Thus, following
\citet*{Harsanyi1968}, the information structure itself is taken
to be common knowledge. For more on this see \citet*[p. 1237]{auma76}
and \citet*{Brandenburger1993}.} In situations where only interim beliefs matter, we will use the
language of \emph{types}. That is, we will identify each signal with
a corresponding (belief) type of the agent. If signal $t^{i}$ induces
a certain belief over $\Theta\times T^{-i}$, we will say that type
$t^{i}$ (of agent $i$) has that belief. We will call $T$ the \emph{type
space}. On the other hand, when we wish to emphasize the ex ante stage
and the literal process of drawing signals, we will use the language
of signals.

A random variable measurable with respect to $i$'s information is
a function $x^{i}\colon T^{i}\to\mathbb{R}$, i.e., an element of
$\mathbb{R}^{T^{i}}$ (this set being defined as the set of functions
from signals in $T^{i}$ to real numbers). Given a random variable
$z:\Omega\to\mathbb{R}$, let $E^{i}z\in\mathbb{R}^{T^{i}}$ give
$i$'s conditional expectation of $z$. It is defined by 
\begin{equation}
(E^{i}z)(t^{i})=\sum_{(\theta,t^{-i})\in\Theta\times T^{-i}}\pi^{i}(\theta,t^{-i}\mid t^{i})\:z(\theta,t^{i},t^{-i}).\label{def:conditional-ex}
\end{equation}
The summation runs over all $(\theta,t^{-i})$, and states are weighted
using the probabilities assigned by the interim belief $\pi^{i}(\cdot\mid t^{i})$.
We will often abuse notation, as we have done here, by dropping parentheses
in referring to elements of $\Omega$ in the arguments of beliefs
and random variables.

\subsubsection{Priors\label{subsec:Priors}}

The information structure was defined above in terms of agents' interim
beliefs, i.e., their beliefs about external states and others' signals
conditional on their own signals. This interim information is enough
to define higher-order average expectations and to state our main
results. However, we are interested in the ex ante interpretation
of our results: There is a prior stage before agents observe their
own signals, and thus where they face uncertainty as to what signals
they will observe. 

We write $(\mu^{i})_{i\in N}$ for agents' ex ante beliefs, with $\mu^{i}\in\Delta(T^{i})$.
Combined with conditional beliefs $\pi^{i}(\cdot\mid t^{i})\in\Delta(\Theta\times T^{-i})$,
there is a prior $\mathbf{P}^{\mu^{i}}\in\Delta(\Omega)$ on the entire
space of realizations, assigning to any $(\theta,t)\in\Omega$ a probability\footnote{Note that the probability under $\mathbf{P}^{\mu^{i}}$ of any subset
of $\Omega$ can be written as a sum of probabilities defined in equation
(\ref{eq:def-prior}), and a similar statement holds for the interim
probabilities $\pi^{i}(\cdot\mid t^{i})$.}
\begin{equation}
\mathbf{P}^{\mu^{i}}\left(\theta,t\right)=\sum_{t^{i}\in T^{i}}\mu^{i}(t^{i})\:\pi^{i}(\theta,t^{-i}\mid t^{i}).\label{eq:def-prior}
\end{equation}
If one started from agent $i$'s prior $\mathbf{P}^{\mu^{i}}\in\Delta(\Omega)$,
one would define conditional beliefs $\pi^{i}(\cdot\mid t^{i})\in\Delta(\Theta\times T^{-i})$
by updating according to Bayes' rule. 

The probability measure $\mathbf{P}^{\mu^{i}}$ gives rise to an ex
ante expectation operator, 
\[
\mathbf{E}^{\mu^{i}}z=\sum_{\omega\in\Omega}\mathbf{P}^{\mu^{i}}(\omega)\:z(\omega)=\sum_{t^{i}\in T^{i}}\mu^{i}(t^{i})\:E^{i}z.
\]
To emphasize when an ex ante perspective is being taken, we adopt
the convention that ex ante probabilities, expectations, etc. are
in bold.

We will later be interested in what an agent's ex ante beliefs would
be if we had fixed his conditional beliefs $\pi^{i}(\cdot\mid t^{i})\in\Delta(\Theta\times T^{-i})$
but endowed him with alternative prior beliefs. Priors for $i$ other
than the true priors $\mu^{i}$ are denoted by $\lambda^{i}\in\Delta(T^{i})$,
and we use $\lambda^{i}$ in place of $\mu^{i}$ in the notations
introduced above.

\subsection{The Network\label{sec:network}}

For each pair of agents, $i$ and $j\text{,}$ there is a number $\gamma^{ij}\in[0,1]$,
where $\sum_{j\in N}\gamma^{ij}=1$, with the interpretation that
agent $i$ assigns ``weight'' $\gamma^{ij}$ to agent $j$. A matrix
$\Gamma$, whose rows and columns are indexed by $N$ and whose entries
are $\gamma^{ij}$, records these weights and is called the \emph{network}.
The fact that the weights of any agent add up to $1$ corresponds
to this matrix being row-stochastic.

The network is to be contrasted with the \emph{information structure}
encoded in the interim beliefs $\pi^{i}(\cdot\mid t^{i})$. One interpretation
of the network weight $\gamma^{ij}$, which will be used when we discuss
coordination games, is that it measures how much agent $i$ cares
about the action of $j$. We define $N_{i}$, the \emph{neighborhood}
of \emph{i}, to be the set of $j$ such that $\gamma^{ij}>0$, and
the elements of $N_{i}$ are $i$'s \emph{neighbors.} Note that $j$
may be a neighbor of $i$ without $i$ being a neighbor of $j$.

We now define an important set of statistics arising from the network.
\begin{defn}
\label{def:eigenvector-centrality}The \emph{eigenvector centrality
weights }of the agents are the entries of the unique row vector $e\in\Delta(N)$
satisfying $e\Gamma=e$\textemdash i.e., for each $i$,
\[
e^{i}=\sum_{j\neq i}e^{j}\gamma^{ji}.
\]
\medskip{}
Assuming that $\Gamma$ is irreducible, the Perron\textendash Frobenius
Theorem states that the eigenvector centrality weights are well-defined\textemdash that
there is indeed a unique such vector\emph{ $e$}. Moreover, the theorem
says that all the eigenvector centrality weights are positive.
\end{defn}

\subsection{Higher-Order Average Expectations}

\label{sec:iterated-average-expectations}

We now define higher-order average expectations. A \emph{basic} random
variable is a random variable measurable with respect to the external
states, i.e., a function $y:\Theta\to\mathbb{R}$, or an element of
$\mathbb{R}^{\Theta}$. Consider a random variable $y\in\mathbb{\mathbb{R}}^{\Theta}$
and define\footnote{Here, abusing notation, we have identified $y\in\mathbb{R}^{\Theta}$
in the obvious way with a random variable $z\in\mathbb{R}{}^{\Theta\times T}$,
namely, with the random variable $z$ for which $z(\theta,t)=y(\theta)$
for each $(\theta,t)\in\Theta\times T$. Equation (\ref{eqn:def-iteration})
relies on a similar understanding.} 
\begin{equation}
x^{i}(1;y)=E^{i}y\label{eqn:def-first-step}
\end{equation}
for every $i\in N$. This is $i$'s \emph{first-order expectation},
given $i$'s own signal, of $y$.

We can now define the key objects we will focus on: the iterated expectations,
or \emph{higher-order average expectations}. For $n\geq2$, given
$(x^{i}(n))_{i\in N}$, define 
\begin{equation}
x^{i}(n+1;y,\Gamma)=\sum_{j\in N}\gamma^{ij}E^{i}x^{j}(n;y,\Gamma).\label{eqn:def-iteration}
\end{equation}
This is $i$'s subjective expectation of the average of the random
variables corresponding to the previous iteration of the process;
the average is taken with respect to the network weights. 

When we do not wish to emphasize the dependence on $y$ and $\Gamma$,
or when they are clear from context, we omit these arguments.

Note that equation (\ref{eqn:def-iteration}), despite the presence
of an iteration, is defined in a static environment: Higher-order
average expectations do not correspond to dynamic updating over time,
but rather to a hierarchy of beliefs when agents are simultaneously
given different information. For this reason, these will figure in
the solution of a static game (see Section \ref{subsec:game}, and
a contrast with dynamics in Section \ref{de groot-1}). 

\subsection{Examples}
\begin{example}
If we have $\gamma^{ij}=1/|N|$ for all $i,j$, then every agent is
weighting all others equally. Such averages will turn out to be relevant
for beauty contests with homogeneous weights: $x^{i}(n)$ is a random
agent's expectation of a random agent's expectation \ldots{} of a random
agent's expectation of $y$. 
\end{example}
\begin{example}
\label{exa:cycle}Suppose the only nonzero entries of $\Gamma$ are
$\gamma^{i,i+1}=1$, where indices are interpreted modulo $|N|$,
the number of agents. This corresponds to agents being arranged in
a cycle, with each paying attention to the one with the next index.
Take, for example, $|N|=3$ (see Figure \ref{fig:cycle}). Then 
\[
x^{1}(3)=E^{1}E^{2}E^{3}y.
\]
We could continue this process, and then we would essentially look
at $(E^{3}E^{1}E^{2})^{a}E^{3}y$, where $a$ is some positive integer
(possibly with $E^{2}$ or $E^{1}E^{2}$ appended to the front). Our
study of higher-order average expectations will allow us to study
the limiting properties of this sequence.
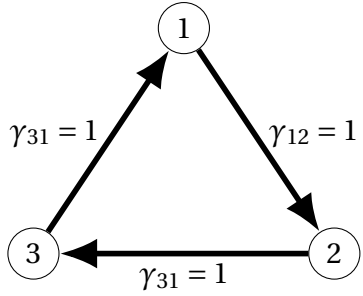
\begin{figure}
\begin{centering}
\begin{tikzpicture}[scale=1]    \coordinate [label=left:{$\gamma_{31}=1$}] (zero) at (2,1.6); \coordinate [label=right:{$\gamma_{12}=1$}] (zero) at (4,1.6); \coordinate [label=below:{$\gamma_{31}=1$}] (zero) at (3,0); \node[circle, draw] (1) at (3,3) {$1$};  \node[circle, draw] (3) at (1,0) {$3$};  \node[circle, draw] (2) at (5,0) {$2$}; \draw[line width=.75mm,-{Latex[length=5mm, width=4mm]}] (1) to (2); \draw[line width=.75mm,-{Latex[length=5mm, width=4mm]}] (2) to (3); \draw[line width=.75mm,-{Latex[length=5mm, width=4mm]}] (3) to (1);           \end{tikzpicture}
\par\end{centering}
\caption{The network of Example \ref{exa:cycle}.}
\label{fig:cycle}
\end{figure}
\end{example}

\subsection{Joint Connectedness: A Maintained Technical Assumption\label{sec:joint-connected}}

A key technical assumption\textemdash \emph{joint connectedness} of
the information structure and network\textemdash will be convenient
in formulating statements about limits of higher-order average expectations.
This assumption will be maintained unless we state otherwise.

Say that a signal $t^{j}$ (of an agent $j$) is a neighbor of a signal
$t^{i}$ (of agent $i$) if agent $j$ is a neighbor of $i$ (i.e.,
$\gamma^{ij}>0$) and agent $i$, when he observes signal $t^{i}$,
considers signal $t^{j}$ possible (i.e., $\pi^{i}(t^{j}\mid t^{i})>0$).
This defines a binary relation on the set of everybody's signals,
$S=\bigcup_{i\in N}T^{i}$. We say the information structure and network
are \emph{jointly connected} if every nonempty, proper subset $S'\subsetneq S$
contains some signal that is a neighbor of a signal not in $S'$.
We will discuss the content and significance of this assumption below
in Section \ref{sec:irred}.

\subsection{Consensus Expectations: Definition and Existence\label{subsec:consensus-expectation}}

An object central to our general theoretical results and the applications
will be a kind of limit of higher-order average expectations as we
consider many iterations. 
\begin{defn}
\label{def:consensus-expectation}For any information structure $\bm{\bm{\pi}}$,
network $\Gamma$, and basic random variable $y$, the \emph{consensus
expectation} $c(y;\bm{\pi},\bm{\Gamma})$ is defined to be any entry
of the vector
\begin{equation}
\lim_{\beta\uparrow1}\left(1-\beta\right)\left({\displaystyle \sum\limits _{n=1}^{\infty}\beta^{n-1}x^{i}(n;y)}\right),\label{eq:c-limit-introduced}
\end{equation}
for any $i$, if the limit exists (in the sense of pointwise convergence)
and is equal to a constant vector. 
\end{defn}
The vector in (\ref{eq:c-limit-introduced}) is sometimes called an
\emph{Abel average} of the sequence $\left(x^{i}(n;y)\right){}_{n=1}^{\infty}$
\citep*[see, e.g.,][]{Kozitsky2013}. Proposition \ref{prop:char}
in Section \ref{sec:The-Interaction-Structure} below asserts that
the consensus expectation is well-defined under the maintained assumption
of joint connectedness. 

The consensus expectation is equal to any entry of the simple limit
$\lim_{n\to\infty}x^{i}(n;y)$ if the latter exists. It also coincides
with the Cesàro limit, which is obtained by taking simple averages
over many values of $n$. We will discuss these issues further in
Section \ref{sec:consensus-expectation}.

\section{Why Higher-Order Average Expectations and Consensus Expectations
Matter}

\label{sec:why-matter}

We now discuss two economic problems where higher-order average expectations
arise. First, we consider the network game with incomplete information
discussed in the Introduction, where equilibrium actions are weighted
averages of higher-order average expectations. Second, we describe
a stylized asset market with fragmented markets, where asset prices
reduce to the solution of the game, and are thus also weighted averages
of higher-order average expectations. In each of these two cases,
we will (i) show how outcomes are characterized by higher-order average
expectations; (ii) motivate the study of consensus expectations\textemdash a
limit of higher-order average expectations; and (iii) interpret our
later results in the context of these applications. 

\subsection{Coordination\label{lbr}}

How will a group of agents coordinate their behavior when they have
strong incentives to take the same action as others but have different
beliefs about what the best action to take is? We consider a class
of games with linear best responses where each agent wants to set
her action equal to a weighted average of (i) her expectation of a
random variable and (ii) the weighted average of actions taken by
others. We show how the equilibrium is determined by higher-order
average expectations and then focus on the limit as coordination concerns
dominate. There will be a particular single action taken in this limit
by all agents after all signals\textemdash ``the convention.'' We
first describe the game. 

\subsubsection{The Game\label{subsec:game}}

We will consider an incomplete-information game where payoffs depend
on the states of the world, $\Theta$. Beliefs and higher-order beliefs
about $\Theta$ are described by the belief functions introduced in
Section \ref{sec:info-structure}.\footnote{If we identify types with their (interim) beliefs, we can say that
the beliefs are encoded in the type space.} The strategic dependencies are encoded in a network $\Gamma$. We
also assume that $\gamma^{ii}=0$ for all $i$.\footnote{The assumption that the diagonal is $0$ is the most natural one for
this game. Analogous results hold without this assumption and have
game-theoretic interpretations. See Section \ref{subsec:Heterogeneous}.} 

The game will also depend on $y$, a basic (i.e., $\theta$-measurable)
random variable with support in the interval $\left[0,M\right]$.\footnote{We will focus on the case where agents care about the same basic random
variable. \ But the analysis extends readily to the case where agents
care about different random variables, since their heterogeneous expectations
of random variables conditional on their signals can be interpreted
as agent-specific random variables. \ See Section \ref{heterogeneous-1}
for further discussion. \ } 

We will consider the ``$\beta$-game\textquotedbl{} parameterized
by $\beta\in\left[0,1\right]$. Each agent $i$ chooses an action
$a^{i}\in\left[0,M\right]$, and the best-response action of agent
$i$ after observing signal $t^{i}$ is given by 
\[
a^{i}=(1-\beta)E^{i}y+\beta{\displaystyle \sum\limits _{j\neq i}\gamma^{ij}E^{i}a^{j}},
\]
where other players' actions are viewed as random variables that depend
on their own signal realizations. The best response can be derived
from a quadratic loss function, where the ex post utility of agent
$i$ under realized state $\theta\in\Theta$, if action profile $a=\left(a^{i}\right)_{i\in N}\in\mathbb{R}^{N}$
is played, is 
\[
u^{i}\left(a^{i},\theta\right)=-\left(1-\beta\right)\left(a^{i}-y(\theta)\right)^{2}-\beta{\displaystyle \sum\limits _{j\neq i}\gamma^{ij}\left(a^{i}-a^{j}\right)^{2}\text{.}}
\]

A ``meetings'' interpretation of the weights $\gamma^{ij}$ is that
$i$ has to commit to an action before knowing which agent he will
interact with, and $i$ assesses that the probability of interacting
with $j$ is $\gamma^{ij}$. 

\subsubsection{Solution of the Game for Any $\beta<1$\label{subsec:solution-game}}

To summarize the previous section, the \emph{environment} in which
the game is played is described by a tuple consisting of an external
random variable, a network, and a coordination weight: $(y,\Gamma,\beta)$.
A strategy of agent $i$ in the incomplete-information game, $s^{i}:T^{i}\to\mathbb{R}$,
specifies an action for each signal. Write $s^{i}(t^{i})$ for the
action chosen by agent $i$ upon observing signal $t^{i}$. Then agent
$i$'s best response to strategy profile $s=\left(s^{i}\right)_{i\in N}$
is given by 
\begin{equation}
\text{BR}^{i}(s)=(1-\beta)E^{i}y+\beta{\displaystyle \sum\limits _{j\neq i}\gamma^{ij}E^{i}s^{j}}.\label{eq:best-response}
\end{equation}

To establish (\ref{rationalizable}), write $R^{i}(k)$ for the set
of $i$'s pure strategies surviving $k$ rounds of iterated deletion
of strictly dominated strategies.\footnote{One reason to focus on rationalizability is that because we do not
have a common prior, there is some inconsistency in using a solution
concept (equilibrium) which builds in common prior beliefs about strategic
behavior \citep*[see][]{dekel2004learning}. \label{fn:rationalizability-versus-equilibrium}} The map $\text{BR}(s):[0,M]^{S}\to[0,M]^{S}$ is a contraction mapping
(with Lipschitz constant $\beta$). Thus, the sets $R^{i}(k)$, which
are produced by the repeated application of this map to the set $[0,M]$,
must converge to a single point satisfying $s=\text{BR}(s)$, which
is an equilibrium of our game. A more detailed proof can be found
in an Appendix, Section \ref{subsec:proof-of-rationalizable}.

This analysis is the asymmetric version of the analysis in \citet*{mosh02}. 

\subsubsection{Conventions: Equilibrium for $\beta=1$ and $\beta\uparrow1$}
\begin{fact}
\label{fact:rationalizable}If $\beta<1$, the $\beta$-game in the
environment given by $(\Gamma,\beta,y)$ has a unique rationalizable
strategy profile, and it is given by\footnote{This game has a unique equilibrium (even with unbounded action spaces).
\ This follows from the observation that the game is best-response
equivalent to a team decision problem, and uniqueness in the team
decision problem is shown in \citet*{radner1962team}. \citet*{ui09}
gives a general statement of this result, expressed in the language
of Bayesian potential functions. Since this is a game with strategic
complementarities, bounded action spaces imply that the unique equilibrium
is the unique action strategy profile surviving iterated deletion
of strictly dominated strategies (\citet*{miro90}). Our proof of
Fact \ref{fact:rationalizable} is established by explicitly calculating
the iterated elimination of dominated strategies.} 
\begin{equation}
s_{*}^{i}\left(y,\Gamma,\beta\right)=\left(1-\beta\right)\left({\displaystyle \sum\limits _{n=1}^{\infty}\beta^{n-1}x^{i}(n;y,\Gamma)}\right)\text{.}\label{rationalizable}
\end{equation}
\end{fact}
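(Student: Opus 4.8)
The plan is to prove Fact~\ref{fact:rationalizable} by running the iterated deletion of strictly dominated strategies explicitly, starting from the box $[0,M]^{S}$, and showing that the surviving set shrinks down to exactly the single profile $s_{*}$ defined by \eqref{rationalizable}. Write $R(k)=\prod_{i}R^{i}(k)$ for the set of pure strategy profiles surviving $k$ rounds, with $R(0)=[0,M]^{S}$, and recall from \eqref{eq:best-response} that the best-response map is $\mathrm{BR}^{i}(s)=(1-\beta)E^{i}y+\beta\sum_{j\ne i}\gamma^{ij}E^{i}s^{j}$.

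First I would record two preliminary facts. (i) Because $y$ is valued in $[0,M]$, because conditional expectation sends $[0,M]$-valued random variables to $[0,M]$-valued ones, and because each row of $\Gamma$ is a probability vector, the recursion \eqref{eqn:def-iteration} gives, by induction on $n$, that $x^{i}(n;y,\Gamma)\in[0,M]^{T^{i}}$ for all $n$ and $i$; hence the series in \eqref{rationalizable} converges absolutely and its sum, a convex combination of elements of $[0,M]$ with weights $(1-\beta)\beta^{n-1}$, lies in $[0,M]^{T^{i}}$, so $s_{*}$ is a well-defined strategy profile. (ii) $s_{*}$ is an equilibrium: substituting $s_{*}^{j}=(1-\beta)\sum_{n\ge1}\beta^{n-1}x^{j}(n)$ into \eqref{eq:best-response}, moving $E^{i}$ and $\sum_{j\ne i}\gamma^{ij}$ inside the absolutely convergent sum, collapsing $\sum_{j\ne i}\gamma^{ij}E^{i}x^{j}(n)=x^{i}(n+1)$ via \eqref{eqn:def-iteration} (here $\gamma^{ii}=0$), and re-indexing yields $\mathrm{BR}^{i}(s_{*})=(1-\beta)x^{i}(1)+(1-\beta)\sum_{n\ge1}\beta^{n}x^{i}(n+1)=s_{*}^{i}$. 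A routine induction then shows $s_{*}\in R(k)$ for every $k$, since any equilibrium survives iterated deletion: if $s_{*}\in R(k)$, then $s_{*}^{i}=\mathrm{BR}^{i}(s_{*})$ is a best response against the belief concentrated on $s_{*}^{-i}\in R^{-i}(k)$, so $s_{*}^{i}\in R^{i}(k+1)$.

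The core of the argument is a matching squeeze on $R(k)$. Set $\sigma^{i}_{k}=(1-\beta)\sum_{n=1}^{k}\beta^{n-1}x^{i}(n)$, the $k$-th partial sum, so that $\sigma^{i}_{0}\equiv0$, each $\sigma^{i}_{k}\in[0,M]^{T^{i}}$, and $\sigma^{i}_{k}\to s_{*}^{i}$ pointwise. I would prove by induction on $k$ that every $s^{i}\in R^{i}(k)$ satisfies $\sigma^{i}_{k}(t^{i})\le s^{i}(t^{i})\le\sigma^{i}_{k}(t^{i})+\beta^{k}M$ for all $t^{i}$. The base case $k=0$ is immediate. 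For the step, a strategy $s^{i}\in R^{i}(k+1)$ is a best response to some belief over opponent profiles in $R^{-i}(k)$; by the inductive hypothesis the believed action of each type $t^{j}$ lies in $[\sigma^{j}_{k}(t^{j}),\sigma^{j}_{k}(t^{j})+\beta^{k}M]$, so by monotonicity of $E^{i}$ and $\sum_{j\ne i}\gamma^{ij}=1$ the aggregate $\sum_{j\ne i}\gamma^{ij}E^{i}s^{j}$ lies within $\beta^{k}M$ above $\sum_{j\ne i}\gamma^{ij}E^{i}\sigma^{j}_{k}$. Rewriting $\sum_{j\ne i}\gamma^{ij}E^{i}\sigma^{j}_{k}=(1-\beta)\sum_{n=1}^{k}\beta^{n-1}x^{i}(n+1)$ via \eqref{eqn:def-iteration} and plugging into \eqref{eq:best-response}, the lower bound on $\mathrm{BR}^{i}$ collapses, after re-indexing, to exactly $\sigma^{i}_{k+1}$ and the upper bound to $\sigma^{i}_{k+1}+\beta^{k+1}M$, completing the induction. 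Letting $k\to\infty$, since $\beta<1$ forces $\beta^{k}M\to0$ while $\sigma^{i}_{k}\to s_{*}^{i}$, we conclude $\bigcap_{k}R^{i}(k)=\{s_{*}^{i}\}$, which is the claim.

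I expect the only delicate point to be purely formal: making precise what ``surviving $k$ rounds of iterated deletion'' means in an incomplete-information game without a common prior, where one must allow beliefs over opponents' possibly mixed strategies rather than point beliefs. This causes no trouble here because the payoff structure makes the best response depend on opponents' play only through the single conditional expectation $\sum_{j\ne i}\gamma^{ij}E^{i}s^{j}$ of the network-weighted average action, and the interval bounds in the induction are preserved under arbitrary such beliefs. As a by-product the argument reproves the contraction claim (Lipschitz constant $\beta$) of Section~\ref{subsec:solution-game}, and hence uniqueness of equilibrium; the explicit partial-sum computation is precisely what additionally pins the rationalizable profile down to the Abel-type series $s_{*}$ in \eqref{rationalizable}.
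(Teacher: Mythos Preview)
Your proposal is correct and follows essentially the same approach as the paper: an explicit induction showing that after $k$ rounds of deletion each $R^{i}(k)$ is sandwiched between the partial sum $\sigma^{i}_{k}=(1-\beta)\sum_{n=1}^{k}\beta^{n-1}x^{i}(n)$ and $\sigma^{i}_{k}+\beta^{k}M$, then letting $k\to\infty$. The paper's write-up actually asserts equality $R^{i}(k)=[\sigma^{i}_{k},\sigma^{i}_{k}+\beta^{k}M]$ at each stage rather than your one-sided containment plus the separate observation that $s_{*}$ is an equilibrium, but the inductive computation and the conclusion are identical.
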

There is a sharp distinction between the game with $\beta<1$ and
the game with $\beta=1$. In the latter case, there is a continuum
of equilibria, one for each $a\in[0,M]$. In these equilibria, agents
all choose the same action independent of their signals and thus of
the state. To see why, recall that every agent's action must be equal
to his (weighted) expectation of others' actions. But now consider
the highest action ever played in some equilibrium (i.e., given some
signal of some agent). The agent $i$ taking that highest action at
some signal $t^{i}$ must be sure that that highest action is being
taken by every other agent $j$ who observes a signal $t^{j}$ that
$i$ considers possible when he observes signal $t^{i}$. Now, however,
the same logic applies to agent $j$ observing that signal $t^{j}$.
Continuing in this way, our joint connectedness assumption implies
that the highest action must be played by all agents for all signal
realizations. This argument and result appear in \citet*{shwi96},
who label the resulting play\textemdash constant across agents and
signals\textemdash a convention, because each agent is always choosing
the same action and is choosing that action because others do. 

To summarize: When $\beta<1$, there is a unique equilibrium, with
agents' actions depending on their higher-order expectations of $y$.
When $\beta=1$, there is a continuum of ``conventional'' equilibria\@.
What happens as $\beta\uparrow1$? The play is described by a limit
of unique equilibria, which turns out to be well-defined: 
\[
\lim_{\beta\uparrow1}\left(1-\beta\right)\left({\displaystyle \sum\limits _{n=1}^{\infty}\beta^{n-1}x^{i}(n)}\right).
\]
By an application of the argument of the previous paragraph to the
limiting payoffs, under joint connectedness the limit must feature
``conventional'' play, not depending on one's signal or identity.
The existence of the limit and a characterization of the action played
in it will be formalized in the next section; the main result is Proposition
\ref{prop:char}. The limit can be seen as a selection among the continuum
of equilibria of the $\beta=1$ game. It is telling us how conventional
play is determined when there is an arbitrarily small amount of dependence
of the payoffs on some basic random variable. The basic random variable
can be interpreted as a ``cue'' that orients players' coordination. 

Note that the statements made about the game before we started considering
the $\beta\uparrow1$ limit hold for any $\beta\in[0,1]$. From now
on, we will focus on the $\beta\uparrow1$ limit, motivated by the
interpretation of it just given, as a refinement of the coordination
game (as well as a parallel motivation we are about to present, based
on frequent trade of an asset).\footnote{\citet*{Weinstein2007} have argued that, in a fixed linear best response
game, very high-order beliefs have only a small impact on rationalizable
play; this constrasts with the better-known observation in \citet*{weinstein2007structure}
that very high-order beliefs can have an arbitrarily high impact in
general games. We get sensitivity to high level higher-order beliefs
in linear best response games because are looking at $\beta\uparrow1$
limit and thus a sequence of different games. Subtleties of such comparisons
are also discussed in \citet*{typicaltypes}.}

Our main results focus on $\beta$ begin very close to, but not equal
to, $1$. Some of our results apply to, or have implications for,
the more general situation with $\beta$ much smaller than $1$, and
we discuss that case when appropriate. 

\subsubsection{Conventions with High Coordination Weights: Preview of Main Results\label{subsec:game-applications-of-main-results}}

Our results in Sections \ref{sec:cpa-signals}, \ref{sec:optimism},
and \ref{sec:Ignorance} characterize that limit convention in some
environments: 
\begin{enumerate}
\item Under the common prior assumption, the convention is equal to the
common ex ante expectation of $y$. If agents share a common prior
on signals, but not necessarily on states, then the convention is
equal to a weighted average of (different) ex ante expectations that
the agents hold of $y$, with each agent's expectation weighted by
his eigenvector centrality in the interaction network $\Gamma$.
\item If all agents always have a small amount of second-order optimism
(believing that their average counterparty is a bit more optimistic
than they are), the convention will equal the highest interim expectation
ever held by any agent.
\item If there is common interpretation of signals and one agent is sufficiently
less informed than all other agents, then the convention will equal
the ex ante expectation of that least-informed agent.
\end{enumerate}

\subsection{Asset Pricing\label{asset pricing}}

\citet*[p. 156]{keynes19731936} famously likened investment to a
``beauty contest'' whose outcome depends on higher-order beliefs: 
\begin{quotation}
\ldots\ professional investment may be likened to those newspaper
competitions in which the competitors have to pick out the six prettiest
faces from a hundred photographs, the prize being awarded to the competitor
whose choice most nearly corresponds to the average preferences of
the competitors as a whole; so that each competitor has to pick, not
those faces which he himself finds prettiest, but those which he thinks
likeliest to catch the fancy of the other competitors, all of whom
are looking at the problem from the same point of view. It is not
a case of choosing those which, to the best of one's judgement, are
really the prettiest, nor even those which average opinion genuinely
thinks the prettiest. We have reached the third degree where we devote
our intelligences to anticipating what average opinion expects the
average opinion to be. And there are some, I believe, who practise
the fourth, fifth and higher degrees.\textquotedblright{} 
\end{quotation}
Keynes is presumably not suggesting that the newspaper competition
winner is completely independent of ``prettiness'' but rather that
each competitor has an incentive to try to match the average expectation
of prettiness, and then some average expectation of such average expectations,
and so on. We will study an asset pricing model where asset prices
will correspond to solutions of the coordination game above and thus
to the description of investment behavior that Keynes gives. 

\subsubsection{Asset Market\label{subsec:asset-market-description}}

Suppose that there are several populations or classes, indexed by
the elements of $N$, and each of these consists of a continuum of
infinitesimal \emph{traders}. \ There is an asset whose payoff will
depend on the realization of a random variable $y$ that is measurable
with respect to $\Theta$ and that takes values in $[0,M]$. The beliefs
and higher-order beliefs of traders in class $i$ about the state
space $\Theta$ will be given by the belief function $\pi^{i}$ defined
in the general model; in particular, they share the same belief function.
Each trader in class $i$ will also observe the same signal $t^{i}$.
Thus, they share the same interim beliefs. All traders are risk-neutral
and there is no discounting. A single unit of the asset will be traded
among all classes of traders. There is a network $\Gamma$, which
will determine where traders resell their assets in a way we are about
to describe.

The trading game works as follows. Time is discrete. At each time
$t$, one trader (say, in class $i$) enters owning the asset. With
probability $\beta$, the state is realized and the owner of the asset
consumes the realization of the asset (with the interpretation that
this corresponds to liquidity needs). \ He then exits the game. \ If
not (and so with probability $1-\beta$), a class of traders $j$
is selected randomly (and exogenously).\ The asset owner believes
that class $j$ is selected with probability $\gamma^{ij}$. \ The
owner must then sell the asset in a market consisting of all traders
of class $j$ who have not yet exited. \ There is Bertrand competition
in market $j$, with each buyer (i.e., remaining trader in class $j$)
offering a price $p$ and the seller (in class $i$) deciding to whom
to sell the asset. \ We then enter period $t+1$ with the chosen
buyer in class $j$ holding the asset. 

\subsubsection{Equilibrium Asset Prices}

We will consider symmetric Markov subgame perfect equilibria of the
asset trading game described in Section \ref{subsec:asset-market-description}.
By ``symmetric Markov,'' we mean that each trader's offer will depend
only on the class to which he belongs and the class of the current
owner from whom he is buying. 

The main result about this asset market is that there is a unique
symmetric, Markov, subgame-perfect equilibrium where, whenever the
asset is sold in market $j$, the traders in market $j$ with signal
$t^{j}$ offer a price equal to $s(t^{j})$, where $s=s^{\ast}\left(\beta\right)$
as defined in (\ref{rationalizable}), and owners sell to any trader
in class $j$ offering the highest price. \ In other words, traders
always set prices equal to the equilibrium of the linear best-response
game of the previous section. \ To see why, note first that a trader's
willingness to pay for the asset does not depend on whom he is buying
the asset from. \ Also, observe that in a symmetric equilibrium,
traders must be setting prices equal to their willingness to pay.
\ Thus equilibrium asset prices must satisfy equation (\ref{rationalizable})
i.e., the equilibrium condition from the linear best-response game.\ \  

Our analysis does depend on the restriction to symmetric Markov strategies
and equilibrium rather than rationalizability as a solution concept.\footnote{Recall footnote \ref{fn:rationalizability-versus-equilibrium} on
the comparison of rationalizability and equilibrium in this context.} If we did not impose the Markov assumption, there would be ``bubble''
equilibria, with the asset price growing exponentially. We also used
the assumption of equilibrium in our analysis, when we directly assumed
that the prices satisfy the equation (\ref{rationalizable}), rather
than (as we did in Section \ref{subsec:solution-game}) arguing that
this condition follows from some weaker solution concept. We used
symmetry when we assumed that all members of a given class price the
asset the same way whenever they have the opportunity to buy. 

\subsubsection{Asset Pricing with High-Frequency Trading: Preview of Main Results\label{subsec:Motivation}}

Taking the limit $\beta\uparrow1$ now corresponds to requiring faster
and faster trade while holding time preferences fixed. The high-frequency
limit price will be the same in every market for every signal. This
price turns out to be the consensus expectation.\footnote{\citet*{steiner2014price} have used high-frequency limits to show
a similar convergence to public random variables.} Our main results below have implications for the asset prices which
parallel the statements of \ref{subsec:game-applications-of-main-results}
applied to the game. 

\subsubsection{Techniques and Related Models of Asset Pricing}

In the special case where the network is uniform, we can could have
derived the same asset pricing formula in a standard dynamic CARA-normal
rational expectations model, with overlapping generations of agents,
as studied by \citet*{Grundy1989} and many others. In each period,
the market will shut down with probability $1-\beta$, and the current
old agents will consume a terminal value of the asset. If it does
not shut down, the old will sell the asset to the young. In each period,
the young will inherit the distribution of signals about the terminal
value of the old. The asset price will equal the forward looking risk-adjusted
iterated expectation of the value of the asset. If the variance of
noise traders in the market increased without bound, there would be
no learning in the market and the expected risk-adjusted price would
be equal to the iterated average expectation. \citet*{allen2006beauty}
show this for a finite truncation of this environment with $\beta=1$.
The dynamic CARA-normal rational expectations model is studied under
the common prior assumption. \citet*{banerjee-kremer} and \citet*{han16}
have studied the role of heterogeneous prior beliefs in the static
version of the model. 

This asset market combines features that appear in many other asset
pricing models, and we now review some of the connections. \citet*{HarrisonKreps}
study an asset market where an asset is re-traded in each period between
different risk-neutral agents with heterogeneous prior beliefs. They
focus on the minimal price paths, in order to rule out bubbles based
purely on everyone's expectation that prices will rise based on calendar
time; we achieve a similar effect with our stationarity assumption.
We allow asymmetric information but make exogenous the agent to whom
another agent must sell. \citet*{duffie-manso} study a random matching
model of trade, where traders are matched in pairs at each time period.
They focus on information percolation over time with a simple updating
rule, while we focus on effects due to higher-order beliefs; our matching
technology is also more general. \citet*{malamud-rostek} study markets
with an exogenous network structure of access to multiple markets,
but endogenize agents' choice of how much to trade in each market. 

A key simplification in our model of trading is that each agent is
infinitesimal, so any learning about the asset value does not affect
anyone's expectations. \citet*{steiner2014price} obtain the same
effect in a model of asymmetric information where agents do not condition
on others' information. They give a behavioral interpretation of this
restriction via coarse perceptions.\footnote{One can also give their results an interpretation in terms of heterogeneous
beliefs and asymmetric information.} Our model and that of \citet*{steiner2014price} both feature the
same dependence of prices only on public information among the agents;
the limit where trading becomes frequent is critical to this. 

\section{The Interaction Structure \label{sec:The-Interaction-Structure}}

\subsection{Interaction Structure\label{sec:interaction-structure-intro}}

One contribution of this paper is to show that the information structure
and the network structure can be seen from a unified perspective\textemdash in
studying higher-order average expectations and, consequently, for
our applications. In particular, we will define an \emph{interaction
structure}\textemdash a square matrix indexed by the set $S$ comprising
the union of everyone's signals\textemdash that simultaneously captures
beliefs and the network. This serves two purposes. First, it highlights
the symmetry between information and the network. Second, it facilitates
relating higher-order average expectations to a Markov matrix and
its iteration, which is an important technique for us.\footnote{\citet*{samet1998iteratedA} introduced and used a Markov process
as a representation of an information structure. We construct a related,
but different, process: Ours simultaneously captures the network and
agents' beliefs and operates on the union of signals instead of realizations.
See \citet*{GolubMorris2016} for the exact analogue of Samet's process.} Indeed, we will use a Markov process representation to deduce the
results that follow from results about Markov processes. 

Let $S=\bigcup_{i\in N}T^{i}$ be the union of the (disjoint) sets
of signals.\footnote{Recall that this object appeared in the definition of joint connectedness
in Section \ref{sec:joint-connected}. It should not be confused with
the product set $T=\prod_{i\in N}T^{i}$, whose elements are signal
\emph{profiles}.} Define $x(n):S\to\mathbb{R}$ by $[x(n)](t^{i})=[x^{i}(n)](t^{i})$.
In words, this one function is a parsimonious way of keeping track
of the higher-order average expectations of \emph{all} agents at stage
$n$. A random variable $y:\Theta\to\mathbb{R}$ that depends on the
external state is viewed as a vector indexed by $\Theta$, i.e., $y\in\mathbb{R}^{\Theta}$.
The first-order expectation map $y\mapsto x(1)$ can then be viewed
as a map $\mathbb{R}^{\Theta}\to\mathbb{R}^{S}$. Using the standard
bases for the domain and codomain, we can represent this map via a
matrix. Indeed, we can write $x(1)=Fy$, where $F$ is a matrix with
rows indexed by $T^{i}$ and columns indexed by $\Theta$, and whose
entries are 
\begin{equation}
F(t^{i},\theta)=\pi^{i}(\theta\mid t^{i}).\label{eq:F-matrix}
\end{equation}
Even though the rows and columns of this matrix are not ordered, we
can define matrix multiplication by stipulating that 
\[
(Fy)(t^{i})=\sum_{\theta\in\Theta}F(t^{i},\theta)y(\theta).
\]
It is immediate to check that with this definition, $(Fy)(t^{i})$
is indeed $i$'s subjective expectation of $y$ when $i$ receives
signal $t^{i}$.

Along the same lines, the formula of (\ref{eqn:def-iteration}), $x^{i}(n+1;y)=\sum_{j\in N}\gamma^{ij}E^{i}x^{j}(n;y)$,
can be described in matrix notation. Equation (\ref{eqn:def-iteration})
defines a linear map $\mathbb{R}^{S}\to\mathbb{R}^{S}$ such that
$x(n)\mapsto x(n+1)$. Taking the standard basis for $\mathbb{R}^{S}$
(as both the domain and codomain) we can write $x(n+1)=Bx(n)$, where
$B$ is a matrix with rows and columns indexed by $S$, and entries
\begin{equation}
B(t^{i},t^{j})=\gamma^{ij}\pi^{i}(t^{j}\mid t^{i}).\label{eqn:b-def}
\end{equation}
We call $B$ the \emph{interaction structure}. It captures the weights
(arising from both the network and agents' beliefs) that matter for
iterating agents' expectations. 

Combining the above, we find, for $n\geq1$, the short formula 
\begin{equation}
x(n)={B}^{n-1}{F}y,\label{eqn:short-xn}
\end{equation}
which describes the step-$n$ higher-order average expectations. Thus,
understanding their behavior boils down to studying powers of the
linear operator $B$. One can check that:
\begin{fact}
The interaction structure $B$ is row-stochastic. 
\end{fact}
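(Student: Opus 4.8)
The plan is to check the two defining conditions for row-stochasticity of $B$ directly from formula (\ref{eqn:b-def}): that every entry is nonnegative, and that each row sums to one. Nonnegativity is immediate, since $B(t^i,t^j)=\gamma^{ij}\,\pi^i(t^j\mid t^i)$ is the product of a network weight $\gamma^{ij}\in[0,1]$ and a probability $\pi^i(t^j\mid t^i)\in[0,1]$.

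For the row sums, I would fix a signal $t^i\in T^i$ and sum the corresponding row over the whole index set $S=\bigcup_{k\in N}T^{k}$, grouping the columns according to which agent's signal space they lie in:
\[
\sum_{s\in S}B(t^i,s)=\sum_{j\in N}\;\sum_{t^j\in T^{j}}\gamma^{ij}\,\pi^i(t^j\mid t^i)=\sum_{j\in N}\gamma^{ij}\left(\sum_{t^j\in T^{j}}\pi^i(t^j\mid t^i)\right).
\]
For each $j\neq i$ the inner sum is the marginal mass that the interim belief $\pi^i(\cdot\mid t^i)\in\Delta(\Theta\times T^{-i})$ places on $T^{j}$, which is $1$ because $\pi^i(\cdot\mid t^i)$ is a probability measure. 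For the diagonal block $j=i$ one uses the convention $\pi^i(t^i\mid t^i)=1$ (and $\pi^i(s^i\mid t^i)=0$ for $s^i\in T^i\setminus\{t^i\}$), which is exactly what makes (\ref{eqn:b-def}) agree with the definition (\ref{eqn:def-iteration}): there the term $\gamma^{ii}E^i x^i(n)$ reduces to $\gamma^{ii}x^i(n)$ since $x^i(n)$ is $i$-measurable, so the inner sum is again $1$. Hence $\sum_{s\in S}B(t^i,s)=\sum_{j\in N}\gamma^{ij}=1$, the last equality because $\Gamma$ is row-stochastic. Together with nonnegativity this proves the claim.

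I do not expect any genuine obstacle: the statement is a bookkeeping consequence of two facts already established, that each $\pi^i(\cdot\mid t^i)$ is a probability distribution and that $\Gamma$ is row-stochastic. The only points needing (minor) care are keeping track of the index set $S$ — so that the row is summed over \emph{all} agents' signal spaces, not merely $T^i$ — and pinning down the harmless convention for the diagonal entries $B(t^i,s^i)$ with $s^i\in T^i$ so that (\ref{eqn:b-def}) is literally consistent with the iteration (\ref{eqn:def-iteration}) it encodes. If, as in the coordination game, one imposes $\gamma^{ii}=0$, the diagonal block vanishes and even this subtlety disappears.
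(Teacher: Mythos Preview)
Your proof is correct and matches the paper's own verification almost verbatim: both split the row sum over $S$ by agent, use that each marginal $\sum_{t^j\in T^j}\pi^i(t^j\mid t^i)=1$, and conclude via row-stochasticity of $\Gamma$. Your treatment is slightly more careful in explicitly noting nonnegativity and in handling the $j=i$ diagonal block via the natural convention $\pi^i(t^i\mid t^i)=1$, which the paper leaves implicit.
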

To verify this, note that for each $t^{i}\in S$ we have 
\[
\sum_{t^{j}\in S}B(t^{i},t^{j})=\sum_{j\in N}\sum_{t^{j}\in T^{j}}\gamma^{ij}\pi^{i}(t^{j}\mid t^{i})=\sum_{j\in N}\gamma^{ij}\sum_{t^{j}\in T^{j}}\pi^{i}(t^{j}\mid t^{i})=1.
\]
The final equality for each $t^{i}$ follows because the distribution
$\pi^{i}(\cdot\mid t^{i})$ is a probability distribution over $T^{j}$
and $\Gamma$ is row-stochastic. 

We will occasionally emphasize the dependence of the matrices we have
defined on $\bm{\pi}=(\pi^{i}(\cdot\mid t^{i}))_{t^{i}\in T^{i},i\in N}$,
and the dependence of $B$ on the network $\Gamma$, by writing $F_{\bm{\pi}}$
and $B_{\bm{\pi},\Gamma}$, and similarly for derived objects.

The interaction structure $B$ allows us to recover a matrix corresponding
to one agent's beliefs about another. For any $i$ and $j$, if we
set $\gamma^{ij}$ to $1$ and all the other entries of $\Gamma$
to $0$, then $B$ restricts naturally to an operator $B^{ij}:\mathbb{R}^{T^{j}}\to\mathbb{R}^{T^{i}}$
sending $T^{j}$-measurable random variables to $i$'s conditional
beliefs about them. The entries of the matrix are $B^{ij}(t^{i},t^{j})=\pi^{i}(\cdot\mid t^{i})$.

Equation (\ref{eqn:short-xn}) entails a sharp separation between
(i) agents' first-order beliefs about $\Theta$, on the one hand,
and (ii) the network and their beliefs about each other's signals,
on the other. The former are encoded in $F$, and the latter in $B$. 

\subsection{The Consensus Expectation via the Interaction Structure\label{sec:consensus-expectation}}

In Section \ref{subsec:consensus-expectation}, we defined the consensus
expectation. The formalism we have introduced will allow us to prove
Proposition \ref{prop:char}, below, on its existence, and in the
process also to relate it to properties of the matrix $B$.

Recalling Definition \ref{def:consensus-expectation}, the consensus
expectation is the number in every entry of the following vector:
\begin{equation}
\lim_{\beta\uparrow1}\left(1-\beta\right)\sum_{n=1}^{\infty}\beta^{n-1}x(n;y)\label{limit}
\end{equation}
The notation introduced in Section \ref{sec:interaction-structure-intro}
above allows us to rewrite this as
\begin{equation}
\lim_{\beta\uparrow1}\left(1-\beta\right)\left({\displaystyle \sum\limits _{n=0}^{\infty}\beta^{n}B^{n}}\right)Fy\text{.}\label{eq:powers-Bn}
\end{equation}
In this section, we use the formalism we have introduced to explain
why this limit exists and why it is a constant vector, as well as
to characterize it. The following is our main result on this, which
shows that the consensus expectation (recall Definition \ref{def:consensus-expectation}
in Section \ref{subsec:consensus-expectation}) is well-defined. 
\begin{prop}
\label{prop:char}The consensus expectation exists and
\begin{equation}
c(y;\bm{\pi},\Gamma)=\sum_{t^{i}\in S}p(t^{i})E^{i}[y\mid t^{i}],\label{eq:c-in-terms-of-p}
\end{equation}
where $p$ is the unique vector in $p\in\Delta(S)$ satisfying $pB=p$.
All entries of $p$ are positive, and it is called the vector of \emph{agent-type
weights}.
\end{prop}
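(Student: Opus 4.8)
\emph{Reduction to the matrix $B$.} The plan is to recast the entire statement as a fact about the row-stochastic matrix $B$. Using the identity $x(n)=B^{n-1}Fy$ from (\ref{eqn:short-xn}), the vector in Definition~\ref{def:consensus-expectation} equals $\lim_{\beta\uparrow1}(1-\beta)\big(\sum_{n=0}^{\infty}\beta^{n}B^{n}\big)Fy$. Because right-multiplication by the fixed matrix $F$ and the fixed vector $y$ is continuous, it is enough to analyze the matrix limit $\Pi:=\lim_{\beta\uparrow1}(1-\beta)\sum_{n=0}^{\infty}\beta^{n}B^{n}$ and to show that every row of $\Pi$ is the stationary vector $p$. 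Granting this, the consensus expectation exists, equals $\Pi Fy$, and this is the constant vector all of whose entries are $pFy=\sum_{t^{i}\in S}p(t^{i})(Fy)(t^{i})=\sum_{t^{i}\in S}p(t^{i})E^{i}[y\mid t^{i}]$, since $(Fy)(t^{i})$ is $i$'s first-order expectation of $y$ at $t^{i}$. So everything reduces to properties of the Markov chain on $S$ whose transition matrix is $B$.

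\emph{Joint connectedness $\Rightarrow$ irreducibility.} The next step is to identify the maintained assumption with irreducibility of that chain. The chain has an edge $t^{i}\to t^{j}$ precisely when $B(t^{i},t^{j})=\gamma^{ij}\pi^{i}(t^{j}\mid t^{i})>0$, i.e.\ precisely when $t^{j}$ is a neighbor of $t^{i}$ in the sense of Section~\ref{sec:joint-connected}. Joint connectedness states that every nonempty proper $S'\subsetneq S$ contains a signal that is a neighbor of a signal outside $S'$; reading the neighbor relation as an edge, this says there is an edge into $S'$ from $S\setminus S'$, equivalently that $S\setminus S'$ is not closed (it has an outgoing edge). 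Since $S'$ ranges over all nonempty proper subsets, the chain has no nonempty proper closed set; but a finite chain that is not irreducible has a proper nonempty closed (recurrent) communicating class, so $B$ must be irreducible. By the Perron--Frobenius Theorem, applied to $B$ exactly as it was applied to $\Gamma$ in Definition~\ref{def:eigenvector-centrality}, there is then a unique $p\in\Delta(S)$ with $pB=p$, and all of its entries are strictly positive.

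\emph{Identifying the Abel limit.} It remains to show that $\Pi$ exists and has every row equal to $p$. Since $B$ is stochastic its spectral radius is $1$; since it is irreducible, the eigenvalue $1$ is simple, with right eigenvector the all-ones vector $\mathbf{1}$ and left eigenvector $p$, and the corresponding spectral projection is the rank-one matrix $\mathbf{1}p$, whose rows are all $p$. For any eigenvalue $\lambda\neq1$ one has $(1-\beta)\sum_{n\geq0}\beta^{n}\lambda^{n}=(1-\beta)/(1-\beta\lambda)\to0$ as $\beta\uparrow1$, because $1-\lambda\neq0$ regardless of whether $|\lambda|=1$; and the entrywise boundedness of the powers $B^{n}$ forbids nontrivial Jordan blocks at eigenvalues on the unit circle, so no extra polynomial-in-$n$ factors intervene. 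Hence $(1-\beta)\sum_{n\geq0}\beta^{n}B^{n}\to\mathbf{1}p$ entrywise, which is what was needed. Alternatively, one may cite that for a finite irreducible chain the Ces\`aro averages $\frac{1}{N}\sum_{n<N}B^{n}$ converge to $\mathbf{1}p$ and that Abel summation is regular, hence agrees with the Ces\`aro limit; this also recovers the remark, after Definition~\ref{def:consensus-expectation}, that the consensus expectation coincides with the Ces\`aro limit.

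\emph{The main obstacle.} The genuinely delicate point is the last step: the naive limit $\lim_{n\to\infty}B^{n}$ typically does \emph{not} exist, because $B$ is usually periodic---this already happens for the cycle of Example~\ref{exa:cycle}---which is exactly why Definition~\ref{def:consensus-expectation} uses an Abel average. The substance of the argument is therefore to verify that the periodic part of the spectrum (eigenvalues $\lambda\neq1$ with $|\lambda|=1$) washes out under $\beta\uparrow1$ and that there are no hidden Jordan obstructions there. A secondary point requiring care is the bookkeeping that converts the direction of the ``neighbor'' relation into the statement that $B$ has no proper closed set, hence is irreducible.
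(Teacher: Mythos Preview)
Your proof is correct, but the route differs from the paper's. The paper does not pass through the spectral decomposition of $B$; instead it works directly with the vector $x(\beta)=(1-\beta)\sum_{n\geq0}\beta^{n}B^{n}z$, rewrites it via the Neumann series as the solution of $(I-\beta B)x(\beta)=(1-\beta)z$, uses a compactness argument (the $x(\beta)$ all lie in a bounded set) to extract a subsequential limit $x$, observes from the equation that $x=Bx$ and hence $x=a\mathbf{1}$, and finally premultiplies the equation by $p$ to identify $a=pz$. Your spectral argument is more explicit about \emph{why} the periodic eigenvalues wash out under Abel averaging, and it connects directly to the Ces\`aro remark after Definition~\ref{def:consensus-expectation}; the paper's argument, by contrast, is more elementary in that it never invokes Jordan forms or eigenvalue structure beyond the single Perron--Frobenius fact that $Bx=x$ forces $x$ to be constant. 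One small point to tighten in your version: you handle Jordan blocks on the unit circle but should also note, for completeness, that Jordan blocks at eigenvalues $|\lambda|<1$ contribute terms of the form $(1-\beta)\sum_{n}\beta^{n}\binom{n}{j}\lambda^{n-j}$, which remain bounded as $\beta\uparrow1$ and hence vanish after the $(1-\beta)$ prefactor---this is straightforward but worth saying, since your sentence ``no extra polynomial-in-$n$ factors intervene'' literally covers only the unit-circle case.
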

Thus the consensus expectation of $y$ is a weighted average of the
expectations associated with the various signals of each agent, encoded
in $Fy$; the weight on the expectation of signal $t^{i}$ of agent
$i$, or simply type $t^{i}$, is given by $p(t^{i})$.\footnote{Recalling Section \ref{sec:info-structure}, we use the terminology
of a \emph{type} here for a signal to emphasize the interim perspective:
All that matters for higher-order expectations (and hence consensus
expectations) are an agent's interim beliefs (including higher-order
beliefs), and agents' types fully capture these.} Note that, by definition, $p$ is the stationary distribution of
$B$ viewed as a Markov matrix. 

A simple but important separation can be read off from the formula
of Proposition \ref{prop:char}. The vector $p$, because it is uniquely
defined by $B$ (by the Perron\textendash Frobenius Theorem), depends
only on the entries of $B$, which in turn depend only on the network
weights $\gamma^{ij}$ and on agents' interim marginals on one another's
signals, $\pi^{i}(t^{j}\mid t^{i})$. Thus, these features of the
model jointly determine the weights $p(t^{i})$. Beliefs about $\Theta$
enter only through $E^{i}[y\mid t^{i}]$. This reflects the separation
noted at the end of Section \ref{sec:interaction-structure-intro}.
Thus, the interesting effects arising from higher-order beliefs will
be characterized by explaining how the information structure affects
$p$; see, for instance, Sections \ref{sec:optimism} and \ref{sec:Ignorance}.

For our analysis, we have fixed a $y$ throughout; however, note that
if $y$ were arbitrary, Proposition \ref{prop:char} would hold with
the \emph{same} $p$ for all $y$. \medskip{}

To see why Proposition \ref{prop:char} holds, first note that if
\begin{equation}
\lim_{n\to\infty}B^{n}Fy\label{simple limit}
\end{equation}
exists, then this limit will equal (\ref{limit}). This is because
(\ref{limit}) is the weighted mean of terms of the form $B^{n}Fy$;
as $\beta\uparrow1$, most of the weight is assigned to the terms
corresponding to large values of $n$. To give intuition, here will
assume that (\ref{simple limit}) exists, though our result is more
general as shown in the proof of Proposition \ref{prop:char} in Appendix
\ref{subsec:Existence-and-Characterization-Proof}.\footnote{Sometimes (\ref{limit}) will exist when (\ref{simple limit}) does
not, because for large $n$, the vector $B^{n}Fy$ cycles (approximately)
among several limit vectors. In this case, (\ref{limit}) takes an
average of these vectors. We discuss these issues further in Appendix
\ref{periodicity-1}.} 

Joint connectedness will imply that the matrix $B$ is irreducible.\footnote{The meaning of irreducibility in our context is discussed further
in Section \ref{sec:irred}.} Thus, by a standard fact about such matrices, every row of $B^{\infty}$
is $p$, assuming this limit exists. Writing $\mathbf{1}$ for the
function (vector in $\mathbb{R}^{S}$) that takes a constant value
of $1$ on all of $S$, for any vector $z\in\mathbb{R}^{S}$ we have
\begin{equation}
\lim_{\beta\uparrow1}\left(1-\beta\right){\displaystyle \sum\limits _{n=0}^{\infty}\beta^{n}B^{n}z}=(pz)\mathbf{1},\label{eq:abel}
\end{equation}
where $p$ is as defined in the statement of Proposition \ref{prop:char}.
In the analysis of (\ref{limit}), we set $z=Fy$. A variant of the
standard Markov chain result shows that (\ref{eq:abel}) holds more
generally, even when the limit of the $x(n)$ in (\ref{eqn:short-xn})
does not exist. 

Proposition \ref{prop:char} implies that higher-order average expectations
converge in the sense of (\ref{limit}) to a number which is independent
of the agent \emph{and} of his signal: the consensus expectation.
Thus, in the coordination game, agents' actions in the $\beta\uparrow1$
limit, where coordination concerns dominate, are equal to a nonrandom
consensus.\footnote{If there are public events, the consensus is nonrandom once public
information is taken into account. See Section \ref{subsec:Consensus-without-Irreducibility}
for further discussion. } Of course, the consensus expectation depends, in general, on all
the interim beliefs $(\pi^{i})_{i\in N}$ and on the network $\Gamma$. 

\subsection{A Markov Process Interpretation of the Interaction Structure and
the Consensus Expectation\label{sec:physical}}

The interaction structure $B$ is a row-stochastic or Markov matrix,
and corresponds to a Markov process that we construct, with $S$ playing
the role of the state space. We can imagine a particle starting at
some state $t^{i}\in S$, and the probability of transitioning to
$t^{j}\in S$ being $\gamma^{ij}\pi^{i}(t^{j}\mid t^{i})$. 

This process can be useful for understanding the behavior of higher-order
average expectations. Fix a signal $t^{i}\in S$ and consider the
Markov process started at this $t^{i}$, with its (random) location
over time captured by the random variables $W_{1}=t^{i},W_{2},W_{3},\ldots$.
If we define a function $f:S\to\mathbb{R}$ such that $f(t^{i})=(Fy)(t^{i})$,
then \emph{$x^{i}(n)$, the $n^{\text{th}}$-order average expectation
of $y$, is the expected value of $f(W_{n})$}. The vector of agent-type
weights discussed in Section \ref{sec:consensus-expectation} is the
stationary distribution of the chain, and the consensus expectation
of $y$ is the expected value of $f(W)$ where $W$ is drawn according
to the stationary distribution.

The process we have defined provides a physical analogy that is useful
for intuition and also suggests proof techniques\textemdash see Sections
\ref{sec:optimism} and \ref{sec:Ignorance}. 

\section{The Consensus Expectation and the Network\label{sec:consensus-and-network}}

One simple special case of Proposition \ref{prop:char} arises when
$|T^{i}|=1$ for each $i$: There is complete information about each
agent's signal. In that case, $B=\Gamma$ and so $p=e$, the eigenvector
centrality vector of the network $\Gamma$. (Recall Definition \ref{def:eigenvector-centrality}
in Section \ref{sec:network}.) It follows from (\ref{eq:c-in-terms-of-p})
that
\[
c(y;\bm{\pi,}\Gamma)=\sum_{i}e^{i}E^{i}y,
\]
where, abusing notation, $E^{i}y$ denotes the interim expectation
of $y$ induced by the one signal that agent $i$ ever gets. This
relates to network game results of \citet*{Ballester2006}, and especially
to the limit with high coordination motives studied in \citet*{cadp15},
where play is determined by ideal points weighted by eigenvector centralities.

There is a much more general sense in which the eigenvector centralities
of the agents figure in the consensus expectation:
\begin{prop}
\label{prop:representation}There are strictly positive priors $\left(\lambda^{i}\right)_{i\in N}$,
with \textup{$\lambda^{i}\in\Delta\left(T^{i}\right)$, }\textup{\emph{such
that}}, for all $y$,
\begin{equation}
c(y;\bm{\pi,}\Gamma)=\sum_{i}e^{i}\mathbf{E}^{\lambda^{i}}y,\label{eq:representation}
\end{equation}
where the $e^{i}$ are the eigenvector centralities of the agents.
\end{prop}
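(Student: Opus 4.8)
The plan is to obtain the representation directly from Proposition~\ref{prop:char}, which already writes $c(y;\bm{\pi},\Gamma)=\sum_{t^i\in S}p(t^i)\,E^i[y\mid t^i]$ with $p\in\Delta(S)$ the strictly positive stationary distribution of the interaction structure $B$. Since, for any prior $\lambda^i\in\Delta(T^i)$, the ex ante expectation is $\mathbf{E}^{\lambda^i}y=\sum_{t^i\in T^i}\lambda^i(t^i)\,E^i[y\mid t^i]$, and since neither these $E^i[y\mid t^i]$ nor the weights I will pick depend on $y$ beyond what is already used in Proposition~\ref{prop:char}, the identity (\ref{eq:representation}) will hold for every $y$ as soon as I produce strictly positive $\lambda^i\in\Delta(T^i)$ with $e^i\lambda^i(t^i)=p(t^i)$ for all $i$ and all $t^i\in T^i$. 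This forces the construction: put $P^i:=\sum_{t^i\in T^i}p(t^i)$ and $\lambda^i(t^i):=p(t^i)/P^i$, so the whole statement reduces to the claim $P^i=e^i$.

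To prove $P=e$, the key observation is that the partition $\{T^i\}_{i\in N}$ lumps the Markov matrix $B$ onto $\Gamma$: from any $t^i$ the $B$-probability of entering block $T^j$ is $\sum_{t^j\in T^j}\gamma^{ij}\pi^i(t^j\mid t^i)=\gamma^{ij}$, depending only on the source block. Concretely, I would sum the stationarity equations $p(t^j)=\sum_{i\in N}\sum_{t^i\in T^i}p(t^i)\,\gamma^{ij}\,\pi^i(t^j\mid t^i)$ over $t^j\in T^j$ and use $\sum_{t^j\in T^j}\pi^i(t^j\mid t^i)=1$ to get $P^j=\sum_{i\in N}P^i\gamma^{ij}$; together with $\sum_{j\in N}P^j=\sum_{t^i\in S}p(t^i)=1$ this says $P\in\Delta(N)$ is a left eigenvector of $\Gamma$ at eigenvalue $1$. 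Joint connectedness makes $B$ irreducible, and this forces $\Gamma$ to be irreducible as well (a proper closed set of agents for $\Gamma$ would, through $B(t^i,t^j)=\gamma^{ij}\pi^i(t^j\mid t^i)$, give a proper closed subset of $S$ for $B$), so by Perron--Frobenius the only such $P$ is the eigenvector centrality vector $e$, which moreover has $e^i>0$ for all $i$. Hence $P^i=e^i>0$ and the $\lambda^i$ above are well-defined, strictly positive elements of $\Delta(T^i)$.

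It then only remains to substitute back:
\[
\sum_{i\in N}e^i\,\mathbf{E}^{\lambda^i}y=\sum_{i\in N}e^i\sum_{t^i\in T^i}\frac{p(t^i)}{e^i}\,E^i[y\mid t^i]=\sum_{t^i\in S}p(t^i)\,E^i[y\mid t^i]=c(y;\bm{\pi},\Gamma),
\]
which is (\ref{eq:representation}).

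I do not anticipate a serious obstacle; the proposition is essentially a repackaging of Proposition~\ref{prop:char}. The two points deserving explicit care are the marginalization identity $P\Gamma=P$ (done by the one-line summation above, or invoked abstractly as the standard lumpability criterion for Markov chains) and the remark that $\Gamma$ inherits irreducibility from $B$ --- this is what makes $e$ well-defined and strictly positive, hence what lets the weights $p$ be renormalized blockwise into genuine priors scaled by eigenvector centralities.
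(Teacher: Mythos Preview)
Your proof is correct and follows essentially the same approach as the paper. The paper proves the key lemma $\sum_{t^i\in T^i}p(t^i)=e^i$ by observing that the map $\iota:S\to N$ sending each type to its agent makes $V(n):=\iota(W(n))$ a Markov chain on $N$ with transition matrix $\Gamma$, so the pushforward of the stationary distribution $p$ equals $e$; you prove the same lemma by directly summing the stationarity equations for $p$ over blocks $T^j$ (which is exactly the computation the paper relegates to a footnote), and then both arguments finish by setting $\lambda^i(t^i)=p(t^i)/e^i$.
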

\smallskip{}

The expression $\mathbf{E}^{\lambda^{i}}y$ corresponds to an\emph{
}ex ante expectation of agent $i$, where the expectation is taken
according to a \emph{pseudoprior} $\lambda^{i}$ over $i$'s signals
that need not be related to agent $i$'s actual prior $\mu^{i}$ . 

Recalling (\ref{eq:c-in-terms-of-p}), we can see that this result
asserts $e^{i}\lambda^{i}(t^{i})=p(t^{i})$, and indeed its content
is that agent $i$'s agent-type weights sum to his eigenvector centrality,
$e^{i}$. This is formally stated in the following lemma, which is
what we use to prove Proposition \ref{prop:representation}, and which
also relates to the Markov interpretation of consensus expectations
in Section \ref{sec:physical}.
\begin{lem}
For each $i$, the agent-type weights associated with agent $i$'s
types add up to the eigenvector centrality of $i$:

\[
\sum_{t^{i}\epsilon T^{i}}p(t^{i})=e^{i}.
\]
\end{lem}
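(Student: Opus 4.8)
The plan is to exploit the fact that the agent-type weight vector $p \in \Delta(S)$ is characterized by the stationary equation $pB = p$, and to show that the aggregated vector $q$ defined by $q^i = \sum_{t^i \in T^i} p(t^i)$ inherits an analogous stationary equation with respect to the network $\Gamma$. Since $q \in \Delta(N)$ (the entries of $p$ sum to $1$ over all of $S$, and grouping by agent partitions $S$), and since $e$ is the unique element of $\Delta(N)$ with $e\Gamma = e$ (Definition \ref{def:eigenvector-centrality}, using irreducibility of $\Gamma$), it will follow that $q = e$, which is exactly the claim.

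Concretely, I would first write out the $j$-th coordinate of $pB = p$, restricted to signals of a fixed agent $j$, and sum over $t^j \in T^j$:
\[
\sum_{t^j \in T^j} p(t^j) = \sum_{t^j \in T^j} \sum_{t^k \in S} p(t^k)\, B(t^k, t^j).
\]
Next I would substitute $B(t^k,t^j) = \gamma^{kj}\,\pi^k(t^j \mid t^k)$ from (\ref{eqn:b-def}), regroup the double sum over $S$ as a sum over agents $k \in N$ and signals $t^k \in T^k$, and interchange the order of summation so that the innermost sum is $\sum_{t^j \in T^j} \pi^k(t^j \mid t^k)$. That inner sum equals $1$, because $\pi^k(\cdot \mid t^k)$ is a probability distribution and its marginal on $T^j$ has total mass $1$. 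What remains is
\[
q^j = \sum_{k \in N} \gamma^{kj} \sum_{t^k \in T^k} p(t^k) = \sum_{k \in N} \gamma^{kj}\, q^k,
\]
which is precisely the eigenvector-centrality fixed-point equation $q\Gamma = q$ coordinate by coordinate.

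The only remaining point is to confirm that $q \in \Delta(N)$ — nonnegativity is immediate since $p \geq 0$ (in fact $p \gg 0$ by Proposition \ref{prop:char}), and $\sum_j q^j = \sum_{t^i \in S} p(t^i) = 1$ because $p \in \Delta(S)$ and $\{T^j\}_{j\in N}$ partitions $S$. Then joint connectedness implies $\Gamma$ is irreducible (or one can note directly that irreducibility of $B$ forces irreducibility of $\Gamma$, since a path in $B$ projects to a path in $\Gamma$), so the Perron–Frobenius uniqueness in Definition \ref{def:eigenvector-centrality} gives $q = e$. I do not anticipate a serious obstacle here; the one step requiring a little care is the summation interchange and the verification that the inner marginal sums to $1$ for \emph{every} $k$, including $k = j$ (where $\pi^j(t^j \mid t^j)$ should be read as $\mathbf{1}$, consistent with an agent's own signal being known to him), so that the bookkeeping matches (\ref{eqn:b-def}) exactly. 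Everything else is a direct translation of the stationarity of $p$ into the stationarity of its agent-level aggregate.
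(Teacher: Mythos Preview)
Your proof is correct and is essentially the paper's argument in linear-algebraic rather than probabilistic dress: the paper couples the chain $W$ on $S$ with its projection $V=\iota(W)$ on $N$, checks that $V$ has transition matrix $\Gamma$ via the same computation $\sum_{t^{j}\in T^{j}}B(t^{i},t^{j})=\gamma^{ij}$ you use, and concludes by uniqueness of the stationary distribution of $\Gamma$. Your caveat about the $k=j$ case is well taken and resolved exactly as you say, consistently with the paper's own verification that $B$ is row-stochastic.
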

\begin{proof}
Let $\iota:S\to N$ map any type $t^{i}$ to the agent $i$ whose
type it is. Check that $V(n):=\iota(W(n))$ is a Markov process on
$N$ with transition matrix $\Gamma$.\footnote{The reasoning is as follows: The probability of the event $\{V(n+1)=j\}$
conditional on $\{V(n)=i\}$ is equal to $\gamma^{ij}$: For any $t^{i}\in T^{i}\subseteq S$,
we have 
\[
\sum_{t^{j}\in T^{j}}B(t^{i},t^{j})=\sum_{t^{j}\in T^{j}}\gamma^{ij}\pi^{i}(t^{j}\mid t^{i})=\gamma^{ij}\sum_{t^{j}\in T^{j}}\pi^{i}(t^{j}\mid t^{i})=\gamma^{ij}.
\]
} Now the stationary probabilities of the process $W$ are given by
$p$, and the total stationary probability of the set $T^{i}\subseteq S$
under $W$ is therefore $\sum_{t^{i}\in S}p(t^{i})$. By the coupling
between $V$ and $W$, this must be equal to the stationary probability
of $i$ under $V$, which is $e^{i}$. 
\end{proof}
The proof of Proposition \ref{prop:representation} is completed by
making the definition $\lambda^{i}(t^{i})=p(t^{i})/e^{i}$, which
is legitimate because all the centralities $e^{i}$ are positive (see
comments after Definition \ref{def:eigenvector-centrality}).

Generally, the pseduopriors $\lambda^{i}$ will depend on \emph{both}
the information structure $\bm{\pi}$ and the network $\Gamma$. We
will be especially interested in when the $\lambda^{i}$ depend only
on beliefs. The next section gives some conditions for this, and the
issue is discussed more generally in Section \ref{subsec: samet}.

\subsection{Interpreting the Interaction Structure as a Network\label{subsec:Interpreting-the-Interaction-Structure}}

As we mentioned at the end of Section \ref{subsec:game}, in the context
of the interaction game, the weights $\gamma^{ij}$ can be interpreted
as $i$'s subjective probabilities of meeting or interacting with
various others at the time he has to commit to his action. In light
of this interpretation, $B(t^{i},t^{j})=\gamma^{ij}\pi^{i}(t^{j}\mid t^{i})$
can be seen as a subjective probability assessed by agent $i$, when
he has signal $t^{i}$, that his partner in the game will have signal
$t^{j}$: The first factor, $\gamma^{ij}$, is $i$'s probability
of meeting $j$, and $\pi^{i}(t^{j}\mid t^{i})$ is the probability,
conditional on that meeting, that $j$ has signal $t^{j}$. (An agent
may be privately informed about his weights or interaction probabilities.
This kind of uncertainty relates to that studied by \citet*{galeotti2010network};
see our discussion in Section \ref{sec:type-dependent-gamma}.) 

In this sense, the environment can be reduced, from the perspective
of each player, purely to incomplete information. Relatedly, we can
reduce the analysis purely to networks. To this end, we construct
a new environment (whose objects are distinguished by hats) based
on the primitives of the original environment. In this environment
the new set of \emph{agents, $\widehat{N}$, }is $S$, the set of
all signals. The network is $\widehat{\Gamma}(t^{i},t^{j})=\gamma^{ij}\pi^{i}(t^{j}\mid t^{i})$;
there is complete information about signals (each agent has a singleton
type $t^{i}$, which is also his agent label); and the first-order
beliefs of the new agents replicate those of the corresponding types.
Now, the higher-order average expectation vector of this new environment,
$\widehat{x}(n;y)$, is the sam\textcyr{\cyre}\footnote{Under the obvious bijection of indices.}
as $x(n,y)$. All statements about higher-order average expectations
in the original game of incomplete information can be reinterpreted
in this complete-information environment as network quantities. For
instance, to get the second-order average expectation of a type $t^{i}$,
we look at the corresponding agent in the network, and take the average,
across all his neighbors, of their neighbors' first-order expectations. 

To summarize: We have taken all uncertainty about others' signals,
and combined it with the original network weights, to obtain the new
network weights $\widehat{\Gamma}$. From this perspective, the game
of incomplete information of Section \ref{subsec:game} is reduced
to the network game studied by \citet*{Ballester2006}. This transformation
is essentially the transformation of the game of incomplete information
into an agent normal form. (For a conceptually similar reduction,
see \citet*{morris1997interaction}. The tensor products of \citet*{Marti2015}
can also be seen as instances of this in a specific setting of exchangeable
information.) 

\section{Unifying and Generalizing Network and Asymmetric Information Results\label{sec:cpa-signals}}

We now study conditions under which the agent-type weights take a
particularly simple form. Under these conditions, there are formulas
for consensus expectations that decompose nicely into different individuals'
prior expectations, weighted by those individuals' centralities.

Recall that agents' priors are given by the profile $(\mu^{i})_{i\in N}$
of distributions, with $\mu^{i}\in\Delta(T^{i})$.
\begin{defn}
\label{def:cpa-signals}There is a \emph{common prior over signals}
(CPS) if, for each signal profile $t\in T$ and each $i,j\in N$,
we have 
\[
\mu^{i}(t^{i})\pi^{i}(t^{-i}\mid t^{i})=\mu^{j}(t^{j})\pi^{j}(t^{-j}\mid t^{j}).
\]
\end{defn}
CPS does not imply a common prior over the states $\Theta$; agents
may have inconsistent beliefs about $\theta$. A common prior on signals
could arise if each agent first observed a signal drawn according
to the common prior but interpreted signals differently. However,
CPS does imply that there is a common prior over agents' second-order
and higher-order beliefs.

Now we can show that under CPS, the distributions $\lambda^{i}$ in
the representation $c(y;\bm{\pi,}\bm{\Gamma})=\sum_{i}e^{i}\mathbf{E}^{\lambda^{i}}y$
of Proposition \ref{prop:representation} are ex ante probability
distributions on signals, i.e., $\lambda^{i}=\mu^{i}$; the pseudopriors
are the \emph{actual} priors. Recall from Section \ref{subsec:Priors}
that bold expectation operators denote ex ante expectations. 
\begin{prop}
If there is a common prior over signals, then the consensus expectation
is equal to the eigenvector-centrality weighted average of the ex
ante expectations of the agents:
\[
c(y;\bm{\pi,}\Gamma)=\sum_{i}e^{i}\mathbf{E}^{\mu^{i}}y,
\]
where $\mu^{i}$ is the prior over $i$'s signals.\label{prop:cpa-signals}
\end{prop}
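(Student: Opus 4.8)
The plan is to compute the stationary distribution $p$ of the interaction structure $B$ in closed form under CPS and then substitute it into the formula of Proposition~\ref{prop:char}. Recall that Proposition~\ref{prop:char} gives $c(y;\bm{\pi},\Gamma)=\sum_{t^i\in S}p(t^i)\,E^i[y\mid t^i]$ with $p\in\Delta(S)$ the unique solution of $pB=p$, and that the lemma in Section~\ref{sec:consensus-and-network} shows the $T^i$-block of $p$ always sums to the eigenvector centrality $e^i$. This makes it natural to conjecture that under CPS the mass $e^i$ is split among $i$'s types in proportion to $i$'s own prior. So the first thing I would do is define $q\in\mathbb{R}^S$ by $q(t^i)=e^i\mu^i(t^i)$ for $t^i\in T^i$ and note $q\in\Delta(S)$, since each $\mu^i$ is a probability measure and $\sum_i e^i=1$. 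It then suffices to check $qB=q$ and invoke uniqueness of the stationary distribution, which holds because joint connectedness makes $B$ irreducible (Section~\ref{sec:joint-connected}).

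The engine of the verification is a marginalized consequence of CPS. Summing the identity of Definition~\ref{def:cpa-signals} over the state and over all coordinates of $t$ other than $t^i$ and $t^j$ gives, for $i\neq j$,
\[
\mu^i(t^i)\,\pi^i(t^j\mid t^i)=\mu^j(t^j)\,\pi^j(t^i\mid t^j);
\]
equivalently, CPS says the priors $\mathbf{P}^{\mu^i}$ of Section~\ref{subsec:Priors} all coincide, so the $T^j$-marginal of $\mathbf{P}^{\mu^i}$ equals $\mu^j$ for every $i$. From either statement I would extract, for all $i,j\in N$,
\[
\sum_{t^i\in T^i}\mu^i(t^i)\,\pi^i(t^j\mid t^i)=\mu^j(t^j),
\]
using in the case $i=j$ the standing convention that an agent knows his own signal (so $\pi^j(t^j\mid t^{j\prime})=\mathbf{1}[t^j=t^{j\prime}]$, which is also what makes the diagonal of $B$ well defined). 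With this in hand, for any fixed $t^j$,
\[
(qB)(t^j)=\sum_{i\in N}\sum_{t^i\in T^i}e^i\mu^i(t^i)\,\gamma^{ij}\,\pi^i(t^j\mid t^i)=\sum_{i\in N}e^i\gamma^{ij}\,\mu^j(t^j)=e^j\,\mu^j(t^j)=q(t^j),
\]
the last equality being the eigenvector-centrality relation $e\Gamma=e$. Hence $p=q$, i.e.\ $p(t^i)=e^i\mu^i(t^i)$.

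Finally I would substitute into Proposition~\ref{prop:char} and regroup by agent, using the definition $\mathbf{E}^{\mu^i}z=\sum_{t^i\in T^i}\mu^i(t^i)E^iz$ from Section~\ref{subsec:Priors}:
\[
c(y;\bm{\pi},\Gamma)=\sum_{t^i\in S}p(t^i)\,E^i[y\mid t^i]=\sum_{i\in N}e^i\sum_{t^i\in T^i}\mu^i(t^i)\,E^i[y\mid t^i]=\sum_{i\in N}e^i\,\mathbf{E}^{\mu^i}y,
\]
which is the asserted formula. The only step with any subtlety is the second one: one has to pull exactly the right marginal identity out of CPS and correctly dispose of the diagonal term $i=j$, which genuinely appears because the general model permits $\gamma^{ii}>0$. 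Everything else is a direct computation together with the uniqueness of the Perron--Frobenius stationary vector.
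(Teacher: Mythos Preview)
Your proof is correct and follows essentially the same approach the paper indicates: the paper states (just before Proposition~\ref{prop:cpa-signals}) that under CPS the pseudopriors $\lambda^{i}$ of Proposition~\ref{prop:representation} equal the actual priors $\mu^{i}$, which is exactly the statement $p(t^{i})=e^{i}\mu^{i}(t^{i})$ that you verify directly by checking stationarity and invoking Perron--Frobenius uniqueness. Your handling of the marginal identity from CPS and the diagonal case $i=j$ is careful and accurate; the paper leaves these details implicit, so your argument is in fact more explicit than what appears there.
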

Proposition \ref{prop:cpa-signals} shows that the consensus expectation
is a weighted average of agents' prior expectations, $\mathbf{E}^{\mu^{i}}y$,
weighted by agents' network centralities, $e^{i}$. We say in this
case that there is a \emph{separability} between the network and the
information structure: The network enters only into the centralities,
and the information structure determines $\mathbf{E}^{\mu^{i}}y$.
(See Section \ref{subsec: samet} for further discussion of this property.)
Under complete information about signals but heterogeneous priors
about $\Theta$, this yields a reinterpretation of the DeGroot model,
as we discuss further in Section \ref{de groot-1}.

In terms of the generality of the information structure, Proposition
\ref{prop:cpa-signals} goes beyond previous related results that
decomposed equilibrium actions into agent-specific quantities weighted
by agents' centralities. Results in this category include \citet*{cadp15},
\citet*{behm15} and \citet{myatt-wallace} (which rely on Gaussian
signals), \citet*{Marti2015} (which relies on exchangeable signals),
and \citet*{bbdj15} (which does not characterize the contribution
of higher-order expectation terms). Formal details of each of these
models differ in several ways from our model, but in not imposing
parametric or symmetry conditions, and in allowing heterogeneous priors
about states, our result on the decomposition at the $\beta\uparrow1$
limit is more general than the others. 

If each ex ante expectation is the same, in Proposition \ref{prop:cpa-signals}
it does not matter what the eigenvector centralities are, and we have:
\begin{cor}
\label{cor:full-cpa}If there is a common prior over signals and $\mathbf{E}^{\mu^{i}}y=\overline{y}$
for all $i$\textemdash that is, agents have a common ex ante expectation
of the external random variable\textemdash then the consensus expectation
is equal to the (common) ex ante expectation.
\[
c(y;\bm{\pi,}\Gamma)=\overline{y}.
\]
\end{cor}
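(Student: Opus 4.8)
The plan is to derive Corollary \ref{cor:full-cpa} directly from Proposition \ref{prop:cpa-signals} by specializing it to the case in which all agents' ex ante expectations of $y$ coincide. Since the hypothesis of the corollary includes a common prior over signals, Proposition \ref{prop:cpa-signals} applies verbatim and yields
\[
c(y;\bm{\pi},\Gamma) = \sum_{i\in N} e^i\,\mathbf{E}^{\mu^i} y,
\]
where $e$ is the eigenvector centrality vector and $\mu^i$ is agent $i$'s prior over $T^i$. Substituting the additional assumption $\mathbf{E}^{\mu^i} y = \overline{y}$ for every $i$, the right-hand side becomes $\sum_{i\in N} e^i\,\overline{y} = \overline{y}\sum_{i\in N} e^i$.

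It then remains only to note that $\sum_{i\in N} e^i = 1$. This is immediate from Definition \ref{def:eigenvector-centrality}, where $e$ is required to lie in $\Delta(N)$, i.e., to be a probability vector. (Alternatively, one may invoke the Lemma preceding this corollary, which gives $\sum_{t^i\in T^i} p(t^i) = e^i$; summing over $i$ and using $p\in\Delta(S)$ yields $\sum_{i\in N} e^i = \sum_{t^i\in S} p(t^i) = 1$.) Hence $c(y;\bm{\pi},\Gamma) = \overline{y}$, as claimed.

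I do not expect any real obstacle: the substance of the corollary is entirely contained in Proposition \ref{prop:cpa-signals}, and the corollary is a one-line specialization of it. The only point worth stressing is conceptual rather than technical — once the ex ante expectations agree across agents, the network enters the formula only as a set of convex weights $e^i$ applied to a collection of numbers that are now all equal, so it drops out and the answer is forced to be the common value $\overline{y}$.
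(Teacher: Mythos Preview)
Your proposal is correct and matches the paper's own treatment: the paper simply remarks, just before stating the corollary, that if each ex ante expectation is the same then the eigenvector centralities in Proposition~\ref{prop:cpa-signals} do not matter, which is exactly your argument that the $e^i$ are convex weights summing to $1$.
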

Corollary \ref{cor:full-cpa} is closely related to \citet*{samet1998iteratedA},
which shows that if the common prior assumption (over the whole space
$\Omega$) holds, then any sequence of expectations ($A$'s expectation
of $B$'s expectation $\ldots$) of the random variable is equal to
the ex ante expectation of the random variable $\overline{y}$. Since
limits of such iterated expectations determine the consensus expectation,
it is also equal to $\overline{y}$. Note, however, that the hypotheses
of Corollary \ref{cor:full-cpa} are weaker than the full common prior
assumption, because they impose no restrictions on the joint distribution
of $\theta$ and signals.

Proposition \ref{prop:cpa-signals} is also closely related to \citet*{samet1998iteratedA}
in the following sense: given a common prior over signals, the highly
iterated expectation of any random variable measurable with respect
to agent $i$'s signal is equal to the common prior expectation of
that random variable\textemdash that is, the expectation of it with
respect to the measure $\mu^{i}$. Our results show that these prior
expectations are combined according to agents' network centrality
weights. Sections \ref{subsec: samet} and \ref{subsec:Interim-Interpretation}
elaborate further on these issues, as well as a converse to Samet's
result.

\section{Contagion of Optimism\label{sec:optimism}}

Consider a case in which agents are \emph{second-order optimistic}:
they are optimistic about the expectations of those they interact
with. That is, they believe that, on average, those others have higher
expectations than their own. In this circumstance, we will give conditions
under which consensus expectations are driven to extremes via a contagion
of optimism. Sections \ref{subsec:game-applications-of-main-results}
and \ref{subsec:Motivation} state the interpretation of this in the
game and in the asset market, respectively.

\subsection{Three Illustrative Cases\label{subsec:Three-Illustrative-Cases}}

To motivate our results on this and to gain intuition, we first consider
some extreme cases. These illustrate how the Markov process representation
of higher-order expectations and its physical interpretation from
Section \ref{sec:physical} can yield striking results about consensus
expectations. Fix a random variable with minimum realization $0$
and maximum realization $1$. Say that agent \emph{$i$ considers
$j$ over-optimistic} if agent $i$'s expectation of agent $j$'s
expectation is always strictly greater than his own expectation (unless
his own expectation is $1$, in which case he is sure that agent $j$'s
expectation is $1$). Say that agent $i$ thinks that agent $j$ is
over-pessimistic if agent $i$'s expectation of agent $j$'s expectation
is always strictly \emph{less} than his own expectation (unless his
own expectation is $0$, in which case he is sure that agent $j$'s
expectation is $0$). 
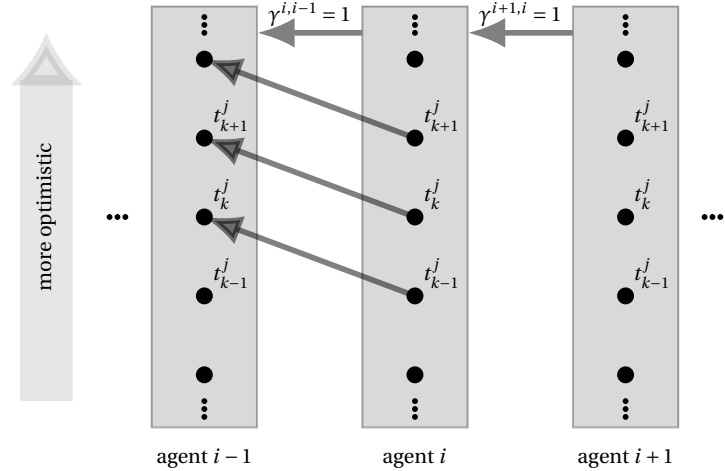
\begin{figure}
\begin{centering}
\begin{tikzpicture}[thick,scale=0.7, every node/.style={transform shape}]
\begin{scope}[transparency group, opacity=0.3]\draw [fill=gray, thick] (0,-2.5) rectangle (2,5.5); \end{scope} \begin{scope}[transparency group, opacity=0.3]\draw [fill=gray, thick] (4,-2.5) rectangle (6,5.5); \end{scope} \begin{scope}[transparency group, opacity=0.3]\draw [fill=gray, thick] (8,-2.5) rectangle (10,5.5); \end{scope}
\filldraw[black] (1,-2) circle (1pt) ; \filldraw[black] (1,-2.15) circle (1pt) ; \filldraw[black] (1,-2.3) circle (1pt) ;
\coordinate [label=below:{agent $i-1$}] (a) at (1,-2.75);  
\filldraw[black] (5,-2) circle (1pt) ; \filldraw[black] (5,-2.15) circle (1pt) ; \filldraw[black] (5,-2.3) circle (1pt) ; \coordinate [label=below:{agent $i$}] (a) at (5,-2.75);  
\filldraw[black] (9,-2) circle (1pt) ; \filldraw[black] (9,-2.15) circle (1pt) ; \filldraw[black] (9,-2.3) circle (1pt) ; \coordinate [label=below:{agent $i+1$}] (a) at (9,-2.75);  
\filldraw[black] (1,5) circle (1pt) ; \filldraw[black] (1,5.15) circle (1pt) ; \filldraw[black] (1,5.3) circle (1pt) ;
\filldraw[black] (5,5) circle (1pt) ; \filldraw[black] (5,5.15) circle (1pt) ; \filldraw[black] (5,5.3) circle (1pt) ;
\filldraw[black] (9,5) circle (1pt) ; \filldraw[black] (9,5.15) circle (1pt) ; \filldraw[black] (9,5.3) circle (1pt) ;
\filldraw[black] (-.5,1.5) circle (1pt) ; \filldraw[black] (-.65,1.5) circle (1pt) ; \filldraw[black] (-.8,1.5) circle (1pt) ;
\filldraw[black] (10.5,1.5) circle (1pt) ; \filldraw[black] (10.65,1.5) circle (1pt) ; \filldraw[black] (10.8,1.5) circle (1pt) ;
\filldraw[black] (1,-1.5) circle (4pt) ; \filldraw[black] (1,0) circle (4pt) node[anchor=south west] {$t_{k-1}^j$}; \filldraw[black] (1,1.5) circle (4pt) node[anchor=south west] {$t_k^j$}; \filldraw[black] (1,3) circle (4pt) node[anchor=south west] {$t_{k+1}^j$}; \filldraw[black] (1,4.5) circle (4pt) ;
\filldraw[black] (5,-1.5) circle (4pt) ; \filldraw[black] (5,0) circle (4pt) node[anchor=south west] {$t_{k-1}^j$}; \filldraw[black] (5,1.5) circle (4pt) node[anchor=south west] {$t_k^j$}; \filldraw[black] (5,3) circle (4pt) node[anchor=south west] {$t_{k+1}^j$}; \filldraw[black] (5,4.5) circle (4pt) ;
\filldraw[black] (9,-1.5) circle (4pt) ; \filldraw[black] (9,0) circle (4pt) node[anchor=south west] {$t_{k-1}^j$}; \filldraw[black] (9,1.5) circle (4pt) node[anchor=south west] {$t_k^j$}; \filldraw[black] (9,3) circle (4pt) node[anchor=south west] {$t_{k+1}^j$}; \filldraw[black] (9,4.5) circle (4pt) ;
\begin{scope}[transparency group, opacity=0.5] \draw[line width=.75mm,-{Latex[length=5mm, width=4mm]}] (5, 0) -- (1, 1.5); \end{scope} \begin{scope}[transparency group, opacity=0.5] \draw[line width=.75mm,-{Latex[length=5mm, width=4mm]}] (5, 1.5) -- (1, 3); \end{scope} \begin{scope}[transparency group, opacity=0.5] \draw[line width=.75mm,-{Latex[length=5mm, width=4mm]}] (5, 3) -- (1, 4.5); \end{scope}
\begin{scope}[transparency group, opacity=0.1] \draw[line width=7mm,-{Latex[length=7mm, width=10mm]}] (-2, -2) -- (-2, 5); \end{scope} \coordinate[label={[rotate=90]center:{more optimistic}}] (v3) at  (-2,1.5);
\draw[gray, line width=.75mm,-{Latex[length=5mm, width=4mm]}] (8, 5) -- (6, 5);  \coordinate[label={above:{$\gamma^{i+1,i}=1$}}] (v3) at  (7,5);
\draw[gray, line width=.75mm,-{Latex[length=5mm, width=4mm]}] (4, 5) -- (2, 5);  \coordinate[label={above:{$\gamma^{i,i-1}=1$}}] (v3) at  (3,5);   \end{tikzpicture}
\par\end{centering}
\caption{The example of Case II, with a counterclockwise network.}
\label{fig:case-ii}
\end{figure}

\subsubsection*{Case I}

First, suppose that each agent considers every other agent over-optimistic.\footnote{This\textemdash and other examples we describe here\textemdash may
involve violating the otherwise maintained joint connectedness assumption
(irreducibility of $B$), but our main result in this section, Proposition
\ref{prop:over-optimism}, does not rely on the joint connectedness
of $B$. } In this case the consensus expectation must be $1$, independent
of the network structure.

\subsubsection*{Case II}

Second, suppose that for every agent $i\text{,}$ there is an agent
he considers over-optimistic and another agent he considers over-pessimistic.
Then there is a network structure under which the consensus expectation
is $1$. We can simply look at the network structure in which each
agent puts all weight on agents he thinks are over-optimistic. Symmetrically,
there is a network structure in which the consensus expectation is
$0\text{.}$ These results do not depend on agents' ex ante expectations\textemdash which
might take any value between $0\text{ and \ensuremath{1\text{. }}}$

Figure \ref{fig:case-ii} illustrates one example of this occurring.
There are $I$ agents, indexed by $N=\left\{ 1,\ldots,I\right\} $,
and their indices are interpreted modulo $I$. Each agent has many
signal realizations, $t_{k}^{i}$, with indices $k\in\left\{ 1,\ldots,K\right\} $,
with higher-$k$ signals inducing more optimistic first-order beliefs
about $y$. Assume that the most extreme signals lead to expectations
$1$ and $0$. Agent $i$, when he has signal $t_{k}^{i}$, is certain
that agent $i-1$ has signal $t_{k+1}^{i-1}$, the next more optimistic
signal. He is also certain that agent $i+1$ has signal $t_{k-1}^{i+1}$,
the next more pessimistic signal. If $k$ is already extreme (that
is, $k=1$ or $K$) then we replace $k+1$ (respectively, $k-1$)
by $k$ in the above description. 

Now the two networks considered are as follows. One has each agent
assigning all weight to the agent counterclockwise from him (i.e.,
to his left, as depicted in Figure \ref{fig:case-ii}). The other
network has each agent assigning all weight to the one clockwise from
him (i.e., to his right). Then in the counterclockwise network, the
consensus expectation is $1$, and in the clockwise network (not shown),
the consensus expectation is $0$.

\subsubsection*{Case III}

For our final case, rather than assuming any agent is over-optimistic
about any\emph{ }other, assume instead that each agent's expectation
of the \emph{average} expectation of others' expectations is greater
than his own expectation. As always, averages are taken with respect
to network weights, and ``greater'' is strict except in the case
where an agent's expectation is $1$. This constitutes a milder form
of over-optimism. Note that it is not implied by the assumptions we
imposed in either of the above results: While the condition of Case
III depends on the network (as in Case II), it allows for the possibility
that an agent is never over-optimistic about any other particular
agent (recalling that over-optimism is a condition uniform\emph{ }over
one's signals). Rather, which agent someone is over-optimistic about
may depend on his signal. But again, the consensus expectation is
$1$.

\subsubsection*{Markov Process Intuitions}

The results in each of the cases above can be established by using
the representation of higher-order average expectations via a Markov
process, which we presented in Section \ref{sec:physical}.

Let us begin by explaining Case I. If a particle makes transitions
over the states $S$ according to the Markov process, then at each
step it  moves toward strictly more optimistic types of agents, unless
it is already at a most optimistic type. Similar arguments can be
given for the other cases; see the proof in Section \ref{subsec:Markov-Chain-for}
for the general argument.\medskip{}

The cases discussed so far involve the unsatisfactory assumption that
some types are \emph{certain }that they are the most optimistic. It
will often be unreasonable for agents to hold such extreme beliefs,
or for the analyst to assume that they do. Thus, we wish to have a
result that is more quantitative and more robust. Also, the networks
involved in Case II are extreme, not allowing an agent to put even
small amounts of weight on others whom he does not consider over-optimistic
(or over-pessimistic). Our general results will relax all these assumptions.

The basic idea behind that generalization is clear: it follows from
the arguments above and continuity. But the details are subtle. Indeed,
what will be most interesting about the general results we obtain
is the nature of the conditions that are involved. How much second-order
pessimism can be permitted for the very optimistic agents without
losing the contagion of optimism? By relating the situation of second-order
optimism to a suitable Markov chain, we are able to give a precise
bound describing how strong second-order optimism (of ``most'' types)
must be relative to the pessimism about counterparties' beliefs permitted
for very optimistic agents.

\subsection{A General Case}

\label{subsec:optimism-general}We now weaken our assumptions on the
most optimistic types, and allow for the possibility that when agents
are maximally optimistic they assign only probability $1-\varepsilon$,
for some $\varepsilon>0$, to any given other being maximally optimistic.
But now we assume that when an agent is not maximally optimistic,
there is a uniform lower bound, $\delta$, on the degree of over-optimism.
With this weakening of our earlier assumptions, the above results
remain true with an error of order $\frac{\varepsilon}{\delta}$.
Among other things, this allows us to use a network with $\gamma^{ij}>0$
for all pairs $i,j$ in Case II above.

We state and prove a formal version of our claims in this general
case, and then discuss how the claims made about our illustrative
cases follow. Critically, in addition to demonstrating the continuity
in beliefs we needed, this result gives quantitative bounds on how
agents' interim over-optimism translates into the consensus outcome.
\begin{prop}
\label{prop:over-optimism}Consider an arbitrary information structure
$\bm{\pi}$ and an arbitrary network $\Gamma$ (i.e., drop for this
result the maintained assumption that $B$ is irreducible). Suppose
there exist \textup{$\overline{f}$} and $\delta>0$, $\varepsilon\geq0$
such that beliefs about neighbors are mildly optimistic in the following
sense:
\begin{enumerate}
\item Every type whose first-order expectation of $y$ is strictly below
$\overline{f}$ expects the first-order expectation, averaged across
his counterparties, to be at least $\delta$ above his own. That is,
for every $t^{i}$ such that $(E^{i}y)(t^{i})<\overline{f}$, we have
$\sum_{j}\gamma^{ij}(E^{i}E^{j}y)(t^{i})\geq(E^{i}y)(t^{i})+\delta$.
\item Every type whose first-order expectation of $y$ is at least \textup{$\overline{f}$}
expects the first-order expectation, averaged across his counterparties,
to be almost as large as his own, with a shortfall of at most $\varepsilon$.
That is, for every $t^{i}$ such that $(E^{i}y)(t^{i})\geq\overline{f}$,
we have $\sum_{j}\gamma^{ij}(E^{i}E^{j}y)(t^{i})\geq(E^{i}y)(t^{i})-\varepsilon$.
\end{enumerate}
Then the consensus expectation of $y$ is at least $\frac{\overline{f}}{1+\varepsilon/\delta}$.
\end{prop}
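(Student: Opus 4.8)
The plan is to read the two hypotheses as a lower bound on the one‑step ``drift'' of the interaction‑structure Markov chain, and then annihilate the average drift using stationarity, so that essentially all of the stationary mass must sit on states whose first‑order expectation of $y$ is at least $\overline{f}$.

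Step~1: reduce the consensus expectation to an average of $f:=Fy$ against a stationary distribution of $B$, \emph{without} invoking irreducibility. A finite row‑stochastic matrix always admits some $p\in\Delta(S)$ with $pB=p$ (a fixed point of the continuous map $q\mapsto qB$ on the simplex); fix one such $p$, and write $p\cdot g=\sum_{t^i\in S}p(t^i)g(t^i)$. Since the consensus expectation exists, write $c\mathbf{1}=\lim_{\beta\uparrow1}(1-\beta)\sum_{n\ge0}\beta^nB^nf$. Applying the (continuous) linear functional $g\mapsto p\cdot g$ and using $pB^n=p$ and $p\cdot\mathbf{1}=1$, one gets $p\cdot\big((1-\beta)\sum_{n\ge0}\beta^nB^nf\big)=(1-\beta)\sum_{n\ge0}\beta^n(p\cdot f)=p\cdot f$ for every $\beta<1$, while the left‑hand side tends to $p\cdot(c\mathbf{1})=c$; hence $c=p\cdot f=\sum_{t^i\in S}p(t^i)(E^iy)(t^i)$. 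This is the analogue of \eqref{eq:abel} valid for an arbitrary stationary $p$; it is the one spot where dropping the maintained irreducibility assumption actually matters, and it costs nothing, since I never need $p$ to be unique.

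Step~2: translate the hypotheses into a pointwise drift bound. Since $y$ has minimum realization $0$, $f\ge0$ entrywise, and by \eqref{eqn:b-def} we have $(Bf)(t^i)=\sum_{t^j\in S}B(t^i,t^j)f(t^j)=\sum_j\gamma^{ij}(E^iE^jy)(t^i)$. So hypothesis~1 says exactly ``$(Bf)(t^i)-f(t^i)\ge\delta$ whenever $f(t^i)<\overline{f}$'' and hypothesis~2 says ``$(Bf)(t^i)-f(t^i)\ge-\varepsilon$ whenever $f(t^i)\ge\overline{f}$''; equivalently,
\[
Bf-f\ \ge\ \delta\,\mathbf{1}_{\{f<\overline{f}\}}\,-\,\varepsilon\,\mathbf{1}_{\{f\ge\overline{f}\}}\qquad\text{(entrywise).}
\]

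Step~3: average against $p$ and conclude. By stationarity, $p\cdot(Bf-f)=(pB)\cdot f-p\cdot f=0$, so writing $P_{\ge}:=p(\{t^i:f(t^i)\ge\overline{f}\})$ and $P_{<}:=1-P_{\ge}$, the drift bound yields $0\ge\delta P_{<}-\varepsilon P_{\ge}$, i.e.\ $\delta\le(\delta+\varepsilon)P_{\ge}$, hence $P_{\ge}\ge 1/(1+\varepsilon/\delta)$. Discarding the nonnegative contributions of the states with $f<\overline{f}$,
\[
c\ =\ \sum_{t^i\in S}p(t^i)f(t^i)\ \ge\ \sum_{t^i:\,f(t^i)\ge\overline{f}}p(t^i)f(t^i)\ \ge\ \overline{f}\,P_{\ge}\ \ge\ \frac{\overline{f}}{1+\varepsilon/\delta},
\]
the case $\overline{f}<0$ being vacuous since $c=p\cdot f\ge0$. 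I expect the only point requiring care to be Step~1 — pinning down that the consensus expectation equals $p\cdot f$ for an \emph{arbitrary} stationary $p$ when $B$ need not be irreducible; once that is in hand, Steps~2 and~3 amount to the ``expected drift is zero'' identity plus elementary bookkeeping, and I anticipate no serious obstacle there.
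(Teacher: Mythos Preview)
Your proof is correct and takes essentially the same approach as the paper. Your Steps~2--3 are exactly the paper's proof of Lemma~\ref{lem:markov-optimism}: translate the hypotheses into the pointwise drift bound $Bf-f\ge\delta\mathbf{1}_{\{f<\overline f\}}-\varepsilon\mathbf{1}_{\{f\ge\overline f\}}$, average against a stationary $p$ to get $0\ge\delta P_{<}-\varepsilon P_{\ge}$, and conclude $P_{\ge}\ge 1/(1+\varepsilon/\delta)$; then use $f\ge 0$ to pass from the mass bound to the expectation bound. The only cosmetic difference is in how the non-irreducible case is handled in Step~1: the paper states its lemma for the ergodic distribution reached from an arbitrary starting state (so the bound holds entry-by-entry for the Abel limit vector), whereas you take any stationary $p$ via a fixed-point argument and use the assumed constancy of the limit to identify $c=p\cdot f$; both are valid routes to the same conclusion.
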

The proof of this result, via a suitable Markov chain inequality,
is provided in Section \ref{subsec:Markov-Chain-for} below.

An important feature of this result is that, fixing the constants
$\delta$ and $\varepsilon$, its hypotheses do not depend on the
finite type space used to represent the environment. This allows the
result to extend readily to infinite signal spaces, by considering
sequences of finite ones approximating the infinite one.

We now return to Cases I\textendash III, with expectations taking
values in $[0,1]$, and describe how to obtain them formally as applications
of this result. For Case I, where each agent considers every other
one over-optimistic, set $\overline{f}=1$. Because of finiteness
of the type space, there is a $\delta$ so that hypothesis (1) of
Proposition \ref{prop:over-optimism} holds for all types whose first-order
expectations of $y$ are strictly below $\overline{f}$. For this
case, we can take $\varepsilon=0$. Applying Proposition \ref{prop:over-optimism},
we get that the consensus expectation is $1$. For the case of over-pessimism,
we simply apply a change of variables from $y$ to $1-y$ and use
the same result to find that the consensus expectation is $0$.

For Case II, we constructed two networks. In one network, each agent
places all weight on some agent he considers over-optimistic. For
this network, the hypotheses of Proposition \ref{prop:over-optimism}
hold for the same reasons discussed in the previous paragraph, and
we conclude that the consensus expectation is $1$. In the other network,
each agent places all weight on some agent he considers over-pessimistic,
and by symmetry the consensus expectation is $0$. 

Case III is a direct application of the proposition, with $\varepsilon=0$.

\subsection{Markov Chain for Second-Order Optimism\label{subsec:Markov-Chain-for}}

We now analyze the interaction structures corresponding to second-order
optimism and discuss how to establish our results using Markov chain
arguments.

Consider an arbitrary finite state space $S$ with a Markov kernel,
with $B(s,s')$ being the probability of transitioning from state
$s$ to $s'$, and fix a function $f:S\to\mathbb{R}.$ 

The purpose of this subsection is to present the following lemma:
Assume that for all $s$ such that $f(s)$ is below a certain value
$\overline{f}$, taking one step from $s$ (according to the Markov
kernel) to reach a random state $W_{2}$ yields a value $f(W_{2})$
that is higher by at least $\delta$, in expectation, than $f(s)$.
Assume also that if, in contrast, $s$ is chosen such that $f(s)$
exceeds $\overline{f}$, then the expected value of $f(W_{2})$ can
decrease relative to $f(s)$ by only a smaller amount, $\varepsilon$.
Under these assumptions, we will show that if $s$ is drawn from a
stationary distribution of $B$, the expectation of $f(s)$ is not
much below $\overline{f}$. The lemma we now state makes this quantitative
and precise.

We denote by $W_{1},W_{2},\ldots$ the stochastic process induced
by the Markov chain. The symbol $\mathbb{P}_{W_{1}\sim\nu}$ denotes
the probability measure corresponding to this process when $W_{1}$
is drawn according to a distribution $\nu$.\footnote{When $\nu$ is a point measure on $s$, we write $W_{1}=s$ in the
subscript as a shorthand.} The notation for expectations is analogous.
\begin{lem}
\label{lem:markov-optimism}Let $B$ be a Markov chain as described
above. Suppose there are real numbers $\delta,\varepsilon>0$ and
$\overline{f}$ such that the following hold:
\begin{enumerate}
\item For every $s$ such that $f(s)<\overline{f}$, we have $\mathbb{E}_{W_{1}=s}[f(W_{2})]\geq f(s)+\delta$.
\item For every $s$ such that $f(s)\geq\overline{f}$, we have $\mathbb{E}_{W_{1}=s}[f(W_{2})]\geq f(s)-\varepsilon$.
\end{enumerate}
Fix an arbitrary starting state, and let $p$ denote the ergodic distribution
over states that is reached starting from that state.\footnote{Note that the chain need not have a unique ergodic distribution, but
there is \emph{an} ergodic distribution reached from any initial state.} Then $p(s:f(s)\geq\overline{f})\geq\frac{1}{1+\varepsilon/\delta}$.
\end{lem}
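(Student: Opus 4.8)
The plan is to use the single structural fact that, under any stationary distribution of $B$, the expected one-step change in $f$ is zero, and then to read off the conclusion by splitting this identity at the threshold $\overline{f}$.

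First I would observe that the ergodic distribution $p$ reached from the given starting state is $B$-invariant, i.e.\ $pB = p$; this is the standard fact for finite chains that the Ces\`{a}ro averages of $B^{n}$ converge to a stochastic matrix each of whose rows is $B$-invariant. Since $S$ is finite, $f$ is bounded and every expectation below is finite. If $W_{1}\sim p$ then $W_{2}\sim pB = p$, so $\mathbb{E}_{W_{1}\sim p}[f(W_{2})] = \mathbb{E}_{W_{1}\sim p}[f(W_{1})]$; conditioning on $W_{1}=s$ and rearranging gives the ``zero net drift'' identity
\[
\sum_{s\in S} p(s)\,\bigl(\mathbb{E}_{W_{1}=s}[f(W_{2})] - f(s)\bigr) = 0 .
\]

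Next I would split $S = L \cup H$ with $L = \{s : f(s) < \overline{f}\}$ and $H = \{s : f(s) \ge \overline{f}\}$, and write the left-hand side of the identity as $D_{L} + D_{H}$ accordingly. Hypothesis~(1) makes every summand in $D_{L}$ at least $\delta$, so $D_{L} \ge \delta\, p(L)$; hypothesis~(2) makes every summand in $D_{H}$ at least $-\varepsilon$, so $D_{H} \ge -\varepsilon\, p(H)$. From $D_{L} + D_{H} = 0$ we get $D_{H} = -D_{L} \le -\delta\, p(L)$, and combining with $D_{H} \ge -\varepsilon\, p(H)$ yields $\varepsilon\, p(H) \ge \delta\, p(L) = \delta\bigl(1 - p(H)\bigr)$. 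Solving this linear inequality gives $p(H)\bigl(\varepsilon + \delta\bigr) \ge \delta$, that is $p(H) \ge \delta/(\delta+\varepsilon) = 1/(1+\varepsilon/\delta)$, which is exactly the assertion $p(s : f(s)\ge\overline{f}) \ge 1/(1+\varepsilon/\delta)$.

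There is no real obstacle here; the only points demanding care are (i) checking that the ergodic distribution is genuinely $B$-invariant, so that the zero-drift identity is available, and that finiteness of $S$ makes all expectations finite, and (ii) keeping the direction of the inequalities straight when passing from $D_{L} + D_{H} = 0$ to the bound on $p(H)$ --- it is the lower bound on $D_{L}$, together with the lower bound on $D_{H}$, that pins $p(H)$ down from below. If desired one can remark that the constant $1/(1+\varepsilon/\delta)$ cannot be improved: a two-state chain in which the low state moves to the high state gaining exactly $\delta$ while the high state drifts back by exactly $\varepsilon$ has stationary mass on the high state equal to $\delta/(\delta+\varepsilon)$.
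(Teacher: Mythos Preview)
Your proof is correct and follows essentially the same approach as the paper: both use the zero-drift identity $\mathbb{E}_{W_{1}\sim p}[f(W_{2})-f(W_{1})]=0$ under the stationary distribution, split the sum at the threshold $\overline{f}$, apply hypotheses (1) and (2) to bound the two pieces, and solve the resulting linear inequality for $p(H)$. Your write-up is slightly more explicit about the algebra and about why $p$ is $B$-invariant, but the argument is the same.
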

The proof, which appears in Section \ref{sec:Omitted-Proofs}, uses
the fact that $f(W_{2})$ and $f(W_{1})$ have the same expectations
under the ergodic distribution, and uses the hypotheses of the lemma
in this equation to derive the desired inequality. With this result,
we can establish all the conclusions about the consensus expectation,
as discussed after Proposition \ref{prop:over-optimism} above.

\subsection{Discussion}

\subsubsection{Related Results\label{subsec:optimism-related}}

The result also relates to \citet*{HarrisonKreps}, who consider the
case where risk-neutral agents have heterogeneous beliefs (but symmetric
information) and trade and re-trade an asset through time. The asset
is always sold to the (endogenously) most optimistic agent at the
current history. The price is driven above the highest expectation
of the asset's value held by any agent. \citet*{HarrisonKreps} motivate
their exercise as a model of ``speculation,'' and our result has
a similar interpretation. In both cases, which agent is most optimistic
can vary: in their case, the identity of the most optimistic is determined
by the public history of the performance of the asset, whereas for
us it is because of asymmetric information.

A closely related paper is that of \citet*{izmalkov10}. They make
a primitive assumption similar to our assumption about optimism: Beliefs
are all distorted in the same direction. They consider two-agent,
two-action coordination games, and show that agents can be induced
to take any rationalizable action\textemdash including risk-dominated
ones\textemdash if the degree of optimism is high enough.\footnote{This observation illustrates a more general point of \citet*{weinstein2007structure}that
any rationalizable action can be made uniquely rationalizable if a
type is perturbed in the product topology. } 

Finally, \citet*{han16} report a ``contagious optimism'' result
in a CARA-normal asset pricing model. They study a \emph{static }CARA-normal
pricing game in which an agent, in equilibrium, conditions on the
information revealed by a counterparty's trading. While our game is
designed to pick up higher-order average expectations, their result
depends on different properties of higher-order expectations (certain
kinds of hierarchies in which agents wrongly assume common knowledge
of the mean of an asset value). However, they similarly show that
a small amount of optimism can give rise to arbitrarily high asset
prices. Another difference is that their result is written for the
two-agent case; any extension to many agents would require that the
network be uniform, because trade takes place in centralized markets.
Because they consider a world with normally distributed uncertainty,
there is no upper bound on first-order expectations, and this allows
contagious optimism to drive prices up without bound. 

\subsubsection{Tightness\label{subsec:optimism-tightness}}

We now construct a chain to show the bound of Lemma \ref{lem:markov-optimism}
is tight. This shows the sufficient condition for contagion of optimism
is tight: in at least some cases, it gives exactly the amount of second-order
optimism needed to guarantee high consensus expectations.

Consider a chain with states $t_{k}^{i}$ for $i\in\{1,2\}$ and $k\in\{0,\ldots,m\}$
and . Let $f(t_{k}^{i})=k$ and define, whenever $j\neq i$:
\[
B(t_{k}^{i},t_{\ell}^{j})=\begin{cases}
\delta & \text{if }\text{}\ell=k+1\leq m\\
1-\delta & \text{if }\text{}\ell=k<m\\
\varepsilon & \text{if }k=m,\text{}\ell=m-1\\
1-\varepsilon & \text{if }k=m,\text{}s'=m\\
0 & \text{otherwise}.
\end{cases}
\]
All the other entries are $0$. 

If we view $k$ as the ``height'' of the chain, it ascends a step
with probability $\delta$ when $k$ is in the interval $\left\{ 0,1,\ldots,m-1\right\} $,
and takes a step downward otherwise; if it is at height $k=m$, the
maximum, it moves with probability $\varepsilon$ to height $k=m-1$.
Otherwise, it stands still. While we have described $B$ as a Markov
process, it can be realized as an interaction structure. Our notation
suggests how to realize this chain as an interaction structure with
two agents, each having $m+1$ types.

It is easy to compute that this chain achieves the lower bound of
Lemma \ref{lem:markov-optimism}, and this example can easily be adjusted
to be irreducible. (See Section \ref{subsec:appendix-tightness} in
the Appendix for details.) Thus, it really is necessary that $\varepsilon$
(pessimism) be bounded as in the formula of the lemma relative to
the guaranteed ``optimistic drift'' $\delta$. 

\section{Tyranny of the Least-Informed\label{sec:Ignorance} }

In Proposition \ref{prop:cpa-signals}, we gave a sufficient condition
(common prior on signals) under which the consensus expectation is
the centrality-weighted average of agents' prior expectations. In
this section, we will find conditions on the information structure
under which the consensus expectation is (almost) equal to \emph{one}
agent's expectation. That is, rather than influence being shared according
to network centrality, it will all be allocated to one agent, in a
way that will depend on the information structure. In particular,
it will turn out to be the \emph{least informed }agent who accumulates
influence.

To motivate these results, we can again consider some extreme cases.
First, suppose that one agent is completely ignorant and has no private
information, while other agents know the state perfectly. The agents
other than the ignorant agent will have degenerate interim beliefs,
so nothing about their priors can matter for iterated expectations
or the consensus. Thus, if anyone's ex ante beliefs play a role in
determining consensus expectations, it must be those of the least
informed agent. It turns out that the consensus expectation is simply
\emph{equal} to the ignorant agent's prior expectation of $y$. A
simple way to see this is to note that, because the ex ante beliefs
of the informed agents don't matter, we may as well take them to be
equal to the prior of the ignorant agent; then the conclusion follows
by Proposition \ref{cor:full-cpa} on the common prior. By continuity,
our result continues to hold if the ignorant agent has \emph{almost}
no information and the other agents have \emph{almost} perfect information. 

Surprisingly, this conclusion remains true when the ignorant agent
is only \emph{relatively} ignorant, and when his beliefs are not public
as they were in the toy example. The ignorant agent may possess very
precise private information about the state. But if others have even
more precise (i.e., less noisy) private information, then their priors
will still not matter, and only the relatively ignorant agent's priors
will determine the consensus expectation.

We now present the statement and proof of the result, and then discuss
it and compare it with related results in Section \ref{subsec:tyranny-discussion}.

\subsection{Common Interpretation of Signals Framework}

Fix a complete $\Gamma$, i.e., one such that $\gamma^{ij}>0$ whenever
$i\neq j$. We specialize to a framework that we call \emph{common
interpretation of signals}, following the terminology of \citet*{Kandel}
and \citet*{Acemoglu2016}. There is a state $\theta\in\Theta$ that
is drawn by nature. Each agent receives conditionally independent
signals about it according to a full-support distribution $\eta^{i}(\cdot\mid\theta)\in\Delta(T^{i})$;
these distributions are common knowledge. However, the agents have
different full-support priors, $\rho^{i}\in\Delta(\Theta)$, over
the state space. Combined with the conditional distributions encoded
in the $\eta^{i}$, these uniquely define a prior distribution over
$\Theta\times T$. We denote by $\mathbf{E}^{\rho^{i}}$ the corresponding
prior expectation operator. These primitives also induce in each agent,
via Bayes' rule, an interim belief function; for each $t^{i}\in T^{i}$,
there is a distribution $\pi^{i}(\cdot\mid t^{i})$ over both the
state and over others' signals. 
\begin{defn}
\label{def: CIS_1}We say that $\eta^{i}$ is \emph{at most $\varepsilon$-noisy}
if: for every $\theta\in\Theta$, there is exactly one signal $t_{\theta}^{i}$
satisfying $\eta^{i}(t_{\theta}^{i}\mid\theta)\geq1-\varepsilon$,
and this $t_{\theta}^{i}$ also satisfies $\eta^{i}(t_{\theta}^{i}\mid\theta')\leq\varepsilon$
for all $\theta'\neq\theta$. 
\end{defn}
This condition requires that for any $\theta$, there is exactly one
signal $t^{i}$ that $i$ receives with very high probability conditional
on $\theta$ being realized; moreover, no two different $\theta,\theta'$
can be associated with the same such signal. 
\begin{defn}
\label{def:CIS_2}We say that $\eta^{i}$ is\emph{ uniformly at least
$\delta$-noisy} if, for every $\theta\in\Theta$ and $t^{i}\in T^{i}$,
the inequality $\eta^{i}(t^{i}\mid\theta)\geq\delta$ holds. 
\end{defn}
This condition says that each signal has at least $\delta$ probability
of being observed under each state, limiting the amount of information
that can be inferred from any signal.

\subsection{Sufficient Conditions for Tyranny of the Least-Informed}

Before stating the main proposition, we introduce some quantities
that will figure in it. Let $\gamma_{\min}=\min_{i\neq j}\gamma_{ij}$
be the smallest off-diagonal entry of $\Gamma$, which is positive
by assumption. Let $\rho_{\min}^{i}$ be the minimal probability assigned
to any $\theta\in\Theta$ by the prior $\rho^{i}\in\Delta(\Theta)$
of agent $i$. Let $\rho_{\min}=\min_{i}\rho_{\min}^{i}$ be the minimum
of all of these, across agents. Finally, let $y_{\max}=\max_{\theta\in\Theta}|y(\theta)|$.
\begin{prop}
\label{prop:ignorant-player}Suppose that for some $\delta\in(0,1)$
and $\varepsilon\in(0,1/2)$, 
\begin{enumerate}
\item $\eta^{1}$ is uniformly at least $\delta$-noisy
\item $\eta^{i}$ for all $i\neq1$ is at most $\varepsilon$-noisy.
\end{enumerate}
Then 
\begin{equation}
|c(y;B_{\bm{\pi}},F_{\bm{\pi}})-\mathbf{E}^{\rho^{1}}[y]|\leq\frac{4|\Theta||S|^{2}}{(\gamma_{\min}\rho_{\min})^{2}}\cdot\overline{y}\cdot\frac{\varepsilon}{\delta}.\label{eq:bound-ignorant-player}
\end{equation}
 
\end{prop}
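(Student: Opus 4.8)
The plan is to write the consensus expectation as a stationary average, compare the true environment with a \emph{reference} environment in which the informed agents' priors have been overwritten by agent~$1$'s prior $\rho^1$, and bound the gap by two Markov-chain perturbation estimates. First I would invoke Proposition~\ref{prop:char}: since $\Gamma$ is complete and the $\eta^i$ have full support, $B_{\bm{\pi}}(t^i,t^j)=\gamma^{ij}\sum_\theta\pi^i(\theta\mid t^i)\eta^j(t^j\mid\theta)>0$ for all $i\neq j$, so $B_{\bm{\pi}}$ is irreducible, the stationary vector $p$ is unique, and $c(y;B_{\bm{\pi}},F_{\bm{\pi}})=p\cdot F_{\bm{\pi}}y$. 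Next I would build the reference information structure $\widetilde{\bm{\pi}}$ by replacing every prior with $\rho^1$ (leaving the $\eta^i$ untouched), with associated matrices $\widetilde B,\widetilde F$ and stationary vector $\widetilde p$. The reference environment carries the genuine common prior $\mathbf P(\theta,t)=\rho^1(\theta)\prod_{k\in N}\eta^k(t^k\mid\theta)$ on $\Omega$, so --- its hypotheses (a common prior on signals, and a common ex ante expectation of the $\theta$-measurable $y$, namely $\sum_\theta\rho^1(\theta)y(\theta)$) both being met --- Corollary~\ref{cor:full-cpa} gives $\widetilde p\cdot\widetilde F y=\mathbf E^{\rho^1}[y]$ \emph{exactly}. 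It then remains to bound $|p\cdot F_{\bm{\pi}}y-\widetilde p\cdot\widetilde F y|\le |p\cdot(F_{\bm{\pi}}y-\widetilde F y)|+\overline{y}\,\|p-\widetilde p\|_1$.

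For the first (easier) term I would partition $S$ into the block $T^1$, the \emph{revealing} signals $R=\{t_\theta^i:i\neq1,\ \theta\in\Theta\}$ (well defined, of size $\le(|N|-1)|\Theta|$, since each $\eta^i$ with $i\neq1$ is at most $\varepsilon$-noisy), and the \emph{non-revealing} remainder $A=S\setminus(T^1\cup R)$, and use three facts. (a) On the rows indexed by $T^1$, $F_{\bm{\pi}}=\widetilde F$ and $B_{\bm{\pi}}=\widetilde B$, since $\rho^1$ and $\eta^1$ were not changed. (b) A one-line Bayes computation (using $\varepsilon<1/2$) gives $1-\pi^i(\theta\mid t_\theta^i)\le 2\varepsilon/\rho_{\min}$, and likewise for $\widetilde\pi^i$; hence on $R$ both $F_{\bm{\pi}}y$ and $\widetilde F y$ lie within $O(\overline{y}\varepsilon/\rho_{\min})$ of $y(\theta)$, and (pushing the $\Theta$-beliefs through the $\eta^j$) $\|B_{\bm{\pi}}(t_\theta^i,\cdot)-\widetilde B(t_\theta^i,\cdot)\|_1\le\|\pi^i(\cdot\mid t_\theta^i)-\widetilde\pi^i(\cdot\mid t_\theta^i)\|_1=O(\varepsilon/\rho_{\min})$. (c) $p(A)\le\varepsilon/\gamma_{\min}$, and likewise $\widetilde p(A)\le\varepsilon/\gamma_{\min}$: this follows from flow balance across the cut $(A,S\setminus A)$ in stationarity, since from any state outside $A$ the chain enters $A$ with probability $\le\varepsilon$ (for each $\theta$ the $\eta^i$-mass on non-revealing signals of $i$ is at most $\varepsilon$, a prior-free bound), while from any $t^i\in A$ (necessarily $i\neq1$) it enters $T^1$ with probability $\gamma^{i1}\ge\gamma_{\min}$, forcing $\gamma_{\min}p(A)\le\varepsilon$. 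Combining (a)--(c) yields $|p\cdot(F_{\bm{\pi}}y-\widetilde F y)|=O(\overline{y}\varepsilon/(\gamma_{\min}\rho_{\min}))$.

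For the second term I would use the standard stationary-distribution perturbation identity $p-\widetilde p=pE\,(I-\widetilde B)^{\#}$, with $E=B_{\bm{\pi}}-\widetilde B$ and $(\cdot)^{\#}$ the group inverse, so that $\|p-\widetilde p\|_1\le\|pE\|_1\cdot\|(I-\widetilde B)^{\#}\|$. By the estimates above, $E$ vanishes on the rows $T^1$, has rows of $\ell_1$-size $O(\varepsilon/\rho_{\min})$ on $R$, and rows of size at most $2$ on $A$, which $p$ weights by at most $\varepsilon/\gamma_{\min}$; hence $\|pE\|_1=O(\varepsilon/(\gamma_{\min}\rho_{\min}))$. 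The remaining factor $\|(I-\widetilde B)^{\#}\|\le\sum_{k\ge0}\|\widetilde B^k-\mathbf 1\widetilde p\|$ is a condition-number/mixing-time quantity for the reference chain, and this is the one place the uniform $\delta$-noisiness of $\eta^1$ is used: from every state $\widetilde B$ reaches the block $T^1$ within a bounded number $k_0$ of steps with probability at least a fixed power of $\gamma_{\min}$, and there $\widetilde\pi^1(t^1\mid\cdot)=\sum_\theta\widetilde\pi^1(\theta\mid\cdot)\eta^1(t^1\mid\theta)\ge\delta$ for every $t^1\in T^1$; together these give a Doeblin minorization $\widetilde B^{k_0}(s,\cdot)\ge\alpha\,\nu(\cdot)$ holding uniformly in $s$ for some distribution $\nu$ and $\alpha$ bounded below by a power of $\gamma_{\min}$ times $\delta$, whence $\|(I-\widetilde B)^{\#}\|=O(1/\alpha)$. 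This gives $\|p-\widetilde p\|_1=O(\varepsilon/(\gamma_{\min}^{3}\rho_{\min}\delta))$; careful bookkeeping of the polynomial factors (or, if one prefers, a cruder dimension-counting bound on the group inverse in place of the Doeblin step) produces exactly the coefficient $4|\Theta||S|^2/(\gamma_{\min}\rho_{\min})^2$ of~(\ref{eq:bound-ignorant-player}). Adding the two contributions gives the bound, the second term dominating and supplying the characteristic factor $\varepsilon/\delta$.

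The step I expect to be the real work is the control of $\|(I-\widetilde B)^{\#}\|$. The qualitative content --- that the priors of the well-informed agents wash out, so all the weight piles onto agent~$1$ --- is immediate from the flow-balance picture; the difficulty is quantitative, since $\widetilde B$ could in principle be badly ill-conditioned. Indeed, if $\eta^1$ were itself nearly perfectly informative, the reference chain would be close to one in which $\theta$ is frozen across the informed agents and only feebly relinked through agent~$1$, i.e.\ nearly reducible, and its stationary distribution would be hypersensitive to the $O(\varepsilon/\rho_{\min})$ perturbation; the hypothesis that $\delta$ be large relative to $\varepsilon$ is precisely what keeps $\widetilde B$ quantitatively far from reducible, and this is why $1/\delta$ enters. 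A minor side issue is possible periodicity of $B_{\bm{\pi}}$ or $\widetilde B$ (for instance $|N|=2$ with zero diagonal), which is absorbed by working with the Abel/Ces\`aro average already built into Proposition~\ref{prop:char}, or by a harmless lazy modification, and does not affect the estimates.
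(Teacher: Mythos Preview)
Your approach is correct and reaches the same conclusion, but via a genuinely different reference environment and perturbation machinery than the paper uses. The paper's reference structure $\widehat{\bm{\pi}}$ is obtained by \emph{rounding the signal kernels} $\eta^i$ for $i\neq 1$ to be perfectly informative while keeping all priors $\rho^i$; since the informed agents then have degenerate posteriors, their priors become irrelevant and the resulting interim beliefs are automatically compatible with $\rho^1$ as a common prior. You instead \emph{overwrite the priors} $\rho^i$ by $\rho^1$ while keeping all $\eta^i$ fixed, which produces a literal common-prior environment directly. Both routes make Corollary~\ref{cor:full-cpa} apply exactly in the reference environment.

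For the stationary-distribution perturbation, the paper invokes the Cho--Meyer inequality
\[
\left|\frac{p(s)-\widehat p(s)}{\widehat p(s)}\right|\le\tfrac12\,\|B-\widehat B\|_\infty\,\max_{z\neq z'}M_{\widehat B}(z,z')
\]
and then bounds the mean first-passage times in $\widehat B$ (Lemma~\ref{lem:first-passage-bound}); you use the equivalent group-inverse identity together with a Doeblin minorization. These are two standard packagings of the same underlying fact, and both deliver a condition number for the reference chain of order $1/(\gamma_{\min}^2\,\delta)$ (up to $\rho_{\min}$ factors), which is exactly why $\varepsilon/\delta$ governs the final bound. One advantage of your route is that it handles non-revealing signals of the informed agents (your set $A$) explicitly via the flow-balance bound $p(A)\le\varepsilon/\gamma_{\min}$; the paper's argument, by passing to perfectly informative $\widehat\eta^i$, effectively assumes every signal of an informed agent is of the form $t^i_\theta$ (see the invocation of $\theta_{t^i}$ in the proof of Lemma~\ref{lem:Bij-bound} and in Fact~\ref{fact:prior-lower-bound}). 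Conversely, the paper's rounding makes the reference chain $\widehat B$ structurally simpler (degenerate posteriors for $i\neq 1$), which streamlines the mean-first-passage-time computation. Your claim that ``careful bookkeeping'' reproduces the \emph{exact} constant $4|\Theta||S|^2/(\gamma_{\min}\rho_{\min})^2$ is optimistic: that particular prefactor falls out of the paper's specific chain of inequalities, and your Doeblin route will generically produce a different (though comparable) polynomial; this does not affect the substance of the argument.
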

This bound is designed for cases where $\varepsilon$ is much smaller
than $\delta$. It says that if agent 1's information is at least
$\delta$-noisy, while all others' information is quite precise (at
most $\varepsilon$-noisy), then the difference between the consensus
expectation of $y$ and agent $1$'s expectation of $y$ is small:
The upper bound is linear in $\varepsilon/\delta$. The constants
depend on the sizes of the state space and the signal space $S$,
and on the minimum network and belief weights in the denominator. 

We could formulate a version of Proposition \ref{prop:ignorant-player}
without requiring the rather strong assumption of full support of
the conditional distributions $\eta^{i}(\cdot\mid\theta)$ that is
implied by Proposition \ref{def:CIS_2}. This is discussed below in
Section \ref{subsec:Generalization-Beyond-Full-Support}, once we
have a bit more notation.

\subsection{Key Steps in the Proof of Proposition \ref{prop:ignorant-player}\label{subsec:Proof-of-Proposition-Ignorant}}

We will analyze the consensus expectation in the situation of Proposition
\ref{prop:ignorant-player} by analyzing the interaction structure
$B$ and its stationary distribution, $p$. Indeed, the analysis here
is intended as our main illustration of the value of reducing informational
questions to questions about the Markov chain corresponding to the
interaction structure.

The key insight in proving Proposition \ref{prop:ignorant-player}
is to construct an artificial signal structure $\boldsymbol{\widehat{\eta}}$
in which all agents except agent $1$ are certain of what $\theta$
is. This is done by rounding the signal probabilities $\eta^{i}(t^{i}\mid\theta)$
for $i\neq1$ to $0$ or $1$. Along with the priors $(\rho^{i})_{i\in N}$
over $\theta$ that are part of the setup, this induces an artificial
information structure $\widehat{\bm{\pi}}=(\widehat{\pi}^{i})_{i}$.
We let $\widehat{B}=B_{\widehat{\boldsymbol{\pi}},\Gamma}$. 

The proof then proceeds in three steps. First, we prove that $p$,
the stationary distribution of $B$, is well-approximated by that
of $\widehat{B}$, which is denoted by $\widehat{p}$. Second, we
claim that $\widehat{\bm{\pi}}$ can be viewed as having a common
prior (corresponding to agent 1's prior beliefs). This is because
only agent 1 is uncertain under $\widehat{\bm{\pi}}$ about $\theta$,
and so the ex ante beliefs of the others about $\theta$ can make
no difference; indeed, it can be shown that the other agents' interim
beliefs are compatible with agent $i$'s prior. Thus the consensus
expectation of $y$ under $\widehat{B}$ is equal to $\mathbf{E}^{\rho^{1}}[y]$.
Finally, we combine these facts to derive the proposition. We carry
out these steps below, deferring technical details to Appendix \ref{sec:tyranny-proofs}.

The key technique in this argument deserves some extra comment. In
the first step, where we approximate $p$ by $\widehat{p}$, we apply
a result of \citet*{Cho-Meyer} on perturbations of Markov chains.
This result, loosely speaking, says the following: As long as the
changes in weights in going from $\widehat{B}$ to $B$ are small
relative to the reciprocal of the \emph{maximum} \emph{mean first
passage time} (MMFPT\footnote{The MMFPT in the interaction structure $\widehat{B}$ is defined to
be the maximum expected time it takes to get from one state to another
in the physical process of Section \ref{sec:physical}. It is a measure
of the connectedness of $\widehat{B}$ as a network; in belief terms,
it is a measure of the maximum number of iterations required for there
to be contagion of higher-order beliefs between the two ``farthest''
states in $S$.}) of $\widehat{B}$, then $p$ is close to $\widehat{p}$. In our
application, the change in the interaction structure (corresponding
to interim beliefs about $\theta$ of the relatively informed agents
$i\neq1$ changing from ``slightly uncertain'' to ``fully certain'')
is of order $\varepsilon$, and that is why $\varepsilon$ appears
in the numerator of the bound in Proposition \ref{prop:ignorant-player}.
In the situation of Proposition \ref{prop:ignorant-player}, the MMFPT
is of order $1/\delta$, the inverse of the lower bound on the uninformed
agent's noise. (That is why $\delta$ appears in the denominator in
the bound of Proposition \ref{prop:ignorant-player}.) But the technique
we have outlined applies more broadly, in any setting where the size
of the perturbation to the interaction structure can be bounded relative
to the MMFPT. This could be used to weaken the assumptions of Proposition
\ref{prop:ignorant-player}, for example to cover cases where noise
does not have full support or the network is not complete\textemdash see
Section \ref{subsec:Generalization-Beyond-Full-Support} below. 

We carry out the details of the proof in Section \ref{sec:tyranny-proofs}.

\subsubsection{The Case Where No Player Is at Least $\delta$-Uncertain \label{subsec:Generalization-Beyond-Full-Support}}

Suppose we did not assume that player $1$ is at least $\delta$-uncertain,
which entails the strong assumption that there is a lower bound on
the conditional probability of seeing any one of his signals, given
any possible state. Then we would define the \emph{uncertainty}, $\delta$,
of player 1's information as the minimum nonzero value of $\eta^{i}(t^{i}\mid\theta)$
(as $t^{i}$ and $\theta$ range over all possibilities). Along the
same lines, we might wish to relax the assumption that $\Gamma$ is
complete, with every player putting weight on every other. We now
discuss how the general principles of our argument would go through
and the nature of the subtleties that would arise.

As mentioned in the sketch of the proof above, what really matters
in the proof is MMFPTs in $\widehat{B}$. Assuming $\widehat{B}$
is irreducible, we can still bound these in terms of $\delta$ even
with the weaker assumptions just discussed. But\textemdash as an examination
of our bounds on the MMFPT shows\textemdash the bounds will involve
path lengths in $\widehat{B}$: the number of steps in $\widehat{B}$
that must be taken to link any two states. Thus, rather than a bound
on the MMFPT in $\widehat{B}$ of order $\delta^{-1}$, which is what
we use in our result, we might have a bound of order $\delta^{-5}$.
The exponent will depend both on the information structure and on
$\Gamma$. In the end, this will translate into a difference on the
right-hand side of (\ref{eq:bound-ignorant-player}) in Proposition
\ref{prop:ignorant-player}. Indeed, we conjecture that the ratio
$\varepsilon/\delta$ would be replaced by $C\varepsilon/\delta^{\kappa}$
for a number $\kappa$ that is increasing in the maximum path length
in $\widehat{B}$. Moreover, this adjustment would be necessary: In
the more general setting we are discussing here, it is not possible
to write a bound analogous to (\ref{eq:bound-ignorant-player}) that
depends on $\varepsilon$ and our generalized $\delta$ only through
$\varepsilon/\delta$. 

While a full exploration of these elaborations is beyond the scope
of the present work, our point is to say: (i) the MMFPT technique
discussed here does cover less restrictive assumptions on information
than we made for our illustrative result; and (ii) the topology of
connections among types in the interaction structure $\widehat{B}$
will matter in interesting ways for more general results.

\subsection{Interpretation and Discussion\label{subsec:tyranny-discussion}}

\subsubsection*{Why Focus on the Least-Informed?}

The results of this section may seem paradoxical. In models of coordination
on a network motivated by organizational questions, a common result
is that agents have an incentive to focus on more informed agents,
in the sense of paying more attention to them or putting more weight
on their signals; see, for example, \citet*{cadp15}, \citet*{Herskovic15}.
Part of the reason for the difference in our result is that asymmetric
information gets washed out in our limit of higher-order expectations
(recall Proposition \ref{prop:char}), rather than being learned or
aggregated, and this makes the forces determining influence different.
In \citet{myatt-wallace}, the agents are choosing which \emph{signal
sources} to listen to (of a commonly available set) in a coordination
game; there publicness and clarity also play a role, though in different
ways.

\subsubsection*{The Least-Informed Become Effectively More Central.}

It is also interesting to compare the result on the tyranny of the
least-informed with the result of Proposition \ref{prop:cpa-signals}
in Section \ref{sec:cpa-signals}, where we showed that, under a consistency
condition on beliefs, it is an agent's centrality that determines
his influence. However, as is seen in our simple benchmark example
above, for sufficiently well-informed agents, their priors cannot
possibly matter, no matter how central they are in the network. Thus,
an implication of Proposition \ref{prop:cpa-signals} and Proposition
\ref{prop:ignorant-player} taken together is that the conditions
of Proposition \ref{prop:ignorant-player} cannot, in general, be
reconciled with common priors over beliefs/signals.

We can get some further intuition for our result by expressing it
in the language of our applications. Suppose that agents are making
investment decisions, but with strategic complementarities in those
decisions. We might say that there is \emph{confidence} in the economy
if positive expectations about others' investment are driving agents
to invest more. In other words, confidence is founded on common perceptions
of what is going on in the economy. Ignorant agents' (prior) views
will have a disproportionate role in determining confidence. Similarly,
in asset markets with frequent re-trading and random matching, assets
will sometimes pass through the hands of ignorant agents. Their views
will form a focal point around which market expectations will form. 

\subsubsection*{A Subtlety in the Meaning of ``Informed.''}

To interpret and apply our results, it is important to remember that
``prior'' really means ``belief conditional on public information
only.'' (See Section \ref{subsec:Consensus-without-Irreducibility},
where we note that all our analysis is conditional on public information.)
In view of this, we call an agent ``uninformed'' if the beliefs
of that agent are not sensitive to his private information once we
have conditioned on public information. This might not correspond
to other natural senses of ``uninformed,'' so the distinction is
worth keeping in mind. 

\subsubsection*{Least-Informed versus Public.}

We note in closing that this result is very different from the familiar
case of coordinating on something public or ``commonly understood''
in a beauty contest. The less informed agent's information is \emph{not
}public or approximately public. Indeed, in our example, individuals'
signals are conditionally independent given the state. A highly informed
player's signal provides very good information about the \emph{external
state}, but no further information about the \emph{signals} of the
others who are badly informed.

Moreover, in contrast to the standard case of coordinating on a public
signal, our result does not hinge on a qualitative matter of determining
which information is public (something that, actually, is held constant
as we vary the noise rates). It is rather a quantitative matter of
how low the noise rate of the relatively informed players must be
in order for it to ``wash out'' of the (public) consensus expectation.
As discussed in Section \ref{subsec:Generalization-Beyond-Full-Support},
this can depend in a subtle way on priors and the information structure.
In particular, it can happen that the noise of the more informed players
is vanishing compared to the noise of the less informed, and nevertheless
the structure of the smaller noise is decisive for the consensus expectation.
How small the noise must be in order not to matter depends in general
on the network, priors, and information structure, through quantities
that we have described.

\section{Concluding Discussion}

In Appendix \ref{sec:discussion}, we give some detailed discussions
of important assumptions, as well as some extensions. Here we briefly
summarize some of the key points.

\subsubsection*{Joint Connectedness (Section \ref{sec:irred})}

The assumption of joint connectedness was a key maintained assumption
in our results. In this section, we relate it to properties of the
beliefs and the network\textemdash in particular, the connectedness
of the network and the absence of public events (joint connectedness
implies both properties but is not equivalent to their conjunction).
We also discuss what can be done without joint connectedness. This
comes down to the standard analysis of a Markov matrix where not all
states are recurrent.

\subsubsection*{Heterogeneous Self-Weights (Section \ref{subsec:Heterogeneous}) }

In the linear best-response game, we assumed that all agents put a
common weight $\beta$ on others' actions. If this assumption does
not hold, we may reduce to the case where it does hold by changing
the network. In particular, we show how the linear best-response game
with weights $\left(\beta^{1},\ldots,\beta^{|N|}\right)$ and network
$\Gamma$ has the same solution as the game with a common coordination
weight $\widehat{\beta}$ (that depends on $\left(\beta^{1},\ldots,\beta^{|N|}\right)$)
and an alternative network $\widehat{\Gamma}$. The diagonal entries
of the matrix $\widehat{\Gamma}$ capture the variation in self-weights.
This transformation permits the application of our main results to
the case of heterogeneous self-weights. We give interpretations in
terms of both the financial market and the game.

\subsubsection*{Separability and Connection to \citet*{samet1998iteratedA} (Section
\ref{subsec: samet})}

In Section \ref{sec:consensus-and-network}, we showed that\textemdash fixing
the information structure and network\textemdash there are strictly
positive \emph{pseudopriors} $\left(\lambda_{\bm{\pi},\Gamma}^{i}\right)_{i\in N}$
such that $c(y;\bm{\pi},\Gamma)=\sum_{i}e^{i}\mathbf{E}^{\lambda_{\bm{\pi},\Gamma}^{i}}y.$
At the same time, we made the observation\textemdash which here is
explicit in the subscripts of $\lambda^{i}$\textemdash that those
pseudopriors may depend on both the information structure $\bm{\pi}$
and the network $\Gamma$. We say an information structure $\bm{\pi}$
satisfies \emph{separability} if the pseudopriors depend \emph{only}
on the information structure. Section \ref{sec:cpa-signals} shows
that a common prior on signals is sufficient for separability. In
contrast, the assumptions made for the results on contagion of optimism
and tyranny of the least-informed, are not, in general, consistent
with separability. In \citet*{GolubMorris2016} we give a necessary
and sufficient condition for separability, which describes the boundary
between these cases exactly; Section \ref{subsec: samet} sketches
the essential ideas.

Our results in both this paper and \citet*{GolubMorris2016} relate
closely to and build on those of \citet*{samet1998iteratedA}. The
similarity is that, as in his work, limiting properties of higher-order
expectations are shown to depend only on a summary statistic of the
information structure (in our case, the pseudoprior). Section \ref{subsec: samet}
discusses the difference in the results and techniques in detail.

\subsubsection*{Ex Ante and Interim Interpretation (Section \ref{subsec:ex-ante-interim}) }

We take an ex ante perspective in our analysis: At an initial date,
agents have prior beliefs\textemdash and no information\textemdash about
a state of the world. They then receive information and update their
beliefs. We can interpret the results as answering the question: \emph{How
does the consensus expectation change after agents observe their signals?
}Our results give conditions under which: (i) the beliefs \emph{do
not change} (under common priors over signals); (ii) they change to
the most optimistic conceivable beliefs (contagion of optimism); (iii)
they change to the beliefs of the least-informed (tyranny of the least-informed). 

Though we take an ex ante view throughout, consensus expectations
can be seen from a purely interim perspective. Indeed, consensus expectations
depend only on agents' interim beliefs (across all possible types)\textemdash i.e.
on the belief functions $\bm{\pi}$. We discuss how certain main results
would look if we were to stick to a purely interim interpretation.
As in our discussion of separability above, there is a close connection
to the characterization of the common prior assumption in purely interim
terms given by \citet*{samet1998iteratedA}. We highlight both how
our results can be related to his, and also where an ex ante perspective
makes them distinct. While contagion of optimism has purely interim
interpretation, tyranny of the least-informed depends on assumptions
about priors and has no simple interim interpretation.

\subsubsection*{Agent-Specific Random Variables and Incomplete Information about
the Network (Section \ref{heterogeneous-1}) }

Our focus throughout the paper has been on agents' higher-order expectations
of a random variable of common concern, $y$. But an equally interesting
application considers a case where agents have different preferred
actions (which correspond to the different random variables $y^{i}$)
in the absence of coordination motives, and where one's network neighbors
also influence one's choice, with linear best responses assumed \citep*{Ballester2006,cadp15,behm15}.
This case can be embedded readily into our formalism. Indeed, we can
define our $x^{i}(n)$ almost identically to capture this case. This
embodies an equivalence between different priors over the external
states and caring about different random variables\textemdash an equivalence
which does \emph{not }extend to higher-order beliefs, as we explain.
In discussing this connection, we highlight how our results relate
to \citet*{cadp15} and \citet*{behm14}. 

A related point is that there need not be common perceptions or complete
information of the network weights $\gamma^{ij}$. By allowing these
to depend on individuals' types, we can embed incomplete information
about the network into our framework.

\subsection*{Static Higher-Order Expectations, Dynamic Conditional Expectations,
Behavioral Learning, and the DeGroot Model\label{de groot-1}}

We have studied higher-order average expectations of a random variable
in this paper. These higher-order expectations may be interpreted
as being computed at a moment of time. We can call them ``static
higher-order expectations,'' as they are properties of the agents'
static beliefs and higher-order beliefs at that moment. All the iteration
of computing higher-order expectations occurs ``in the agents' minds''
rather than in an interactive dynamic process unfolding over time. 

These static higher-order expectations can be contrasted with agents'
``dynamic conditional expectations'': the beliefs formed via a dynamic
process of updating expectations \emph{after} observing other agents'
conditional expectations up to that point. In this section, we will
use this dichotomy to discuss connections with some important related
literatures.

\citet*{degroot1974reaching} suggested a behavioral model where,
at each stage in a process, each of many agents takes a weighted average
of the beliefs or estimates of his neighbors. He interpreted this
as a heuristic procedure according to which statisticians might average
their own estimates or beliefs with the estimates or beliefs of others
whose opinions they respect, toward the goal of reaching a reasonable
consensus.\footnote{This work grew out of studying aggregation procedures for statistical
estimates. \citet*{Lehrer1981} worked on a related model, seeking
normative foundations for agents' weights in the consensus, based
on the problem of aggregating views in a network of peers. \citet*{Friedkin1999}
studied versions of this model in which each agent persistently weights
a fixed opinion, which can be interpreted as a personal ideal point\textemdash see
Section \ref{heterogeneous-1} for a version of this in our setting.
See \citet*{golub-sadler}, whose Section 3.5.1 we have partly paraphrased
here.} In the DeGroot model, the vector of agents' estimates at stage $n$
is $x(n)=\Gamma^{n}x(0)$, where (as in our model) $\Gamma$ is an
exogenous, fixed stochastic matrix corresponding to the weights agents
assign to various others. Under the classical interpretation, the
DeGroot model is a dynamic process, where agents start out with different
estimates (perhaps based on their private information) and then updating
occurs according to a behavioral rule. \ Economic foundations and
implications of this process have been developed by \citet*{DeMarzo2003},
\citet*{GolubJackson2010}, \citet*{molavi-foundations}, and others.
\ 

Mathematically, the complete-information special case of our static
higher-order expectations model is isomorphic to the classic DeGroot
model, in the sense that equation (\ref{eqn:short-xn}) for updating
the vector of static higher-order expectations,\footnote{Note that under complete information, $B=\Gamma$.}
$x(n)=\Gamma^{n-1}Fy$, looks very much like a DeGroot rule of the
form $x(n)=\Gamma^{n}x(0)$. But it has a different interpretation.
Our agents start out with different priors, captured by $Fy$. In
the dynamic interpretation, $x(2)$ corresponds to taking the weighted
average of neighbors' first-period beliefs. In the static interpretation,
$x(2)$ contains agents' expectation of the average first-order expectations
of others. In this static interpretation, agents' higher-order expectations
are fully Bayesian but based on heterogeneous priors and no asymmetric
information, with weights (i.e., the network $\Gamma$) which are
taken as exogenous. 

Indeed, the general incomplete-information version of our model can
also be related to the DeGroot model. If we draw a parallel where
the types in our model correspond to DeGroot agents, and $x(1)$ is
taken to be the profile of initial estimates, then the ``DeGroot
estimate'' of a given \emph{type} at stage $n$ is the $n^{\text{th}}$-order
iterated average expectation of that type in our model. In this way,
our model can be viewed as an alternative interpretation of DeGroot's
formulas. 

Despite the formal similarity, substantively, the two interpretations
differ very significantly in how they answer a key question in the
DeGroot model literature: How does the network $\Gamma$ affect the
ultimate consensus? Recall that in the DeGroot model, the consensus
is a weighted average of the agents' initial opinions, with the weight
of an agent equal to her eigenvector centrality. (Thus, in DeGroot's
model, if high-centrality agents have high first-order expectations,
the consensus will also be high.) There is an analogous centrality
formula in our setting: Proposition \ref{prop:char}. Despite this,
in our model, under the common prior assumption, there is \emph{no}
interesting dependence of outcomes on $\Gamma$, even when the network
gives some agents very large network centrality: Higher-order average
expectations will always converge to the common prior estimate, independent
of the network. It is only when agents have \emph{heterogeneous} priors
that the network matters in our model. Thus, whereas in the dynamic
learning DeGroot model, the updating implies that centrality always
matters, the additional structure present in our model says that it
matters (to our outcomes) only in specific circumstances, and not
under the common prior assumption.

There is another approach to DeGroot's questions that is different
from his own behavioral model and from our interpretation of his equations
sketched above. That approach is to study standard Bayesian agents
learning dynamically from each other's beliefs, making Bayesian inferences
at each stage. In this case we get a very different updating process.
\citet*{Geanakoplos1982} considered this updating process under the
common prior assumption. Their finding\textemdash in a finite-state
model\textemdash was that posteriors would converge and there would
be common certainty of posteriors in the limit. This model has been
generalized in various directions. For example, \citet*{Parikh1990}
considered the case when one observes posteriors of only some neighbors,
while \citet{nielsen1990common} studied the partial revelation of
posteriors. Recently, \citet*{Rosenberg2009} and \citet*{Mueller-Frank2013}
have explored such models further. Taken together, this literature
provides a fairly rich understanding of dynamically updating conditional
expectations with common priors and asymmetric information on a general
unweighted graph. Note that it contrasts sharply with our analysis;
in the model we have studied in this paper, private information gets
``washed out'' rather than aggregated as we take $n$ to the infinite
limit.

\begin{spacing}{1.1}
\bibliographystyle{ecta}
\bibliography{enc}

\end{spacing}

\pagebreak{}

\appendix

\section{Omitted Proofs\label{sec:Omitted-Proofs}}

\subsection{Proof of Fact \ref{fact:rationalizable}}

\label{subsec:proof-of-rationalizable}To establish (\ref{rationalizable}),
write $R^{i}(k)$ for the set of $i$'s pure strategies surviving
$k$ rounds of iterated deletion of strictly dominated strategies.
By assumption, $R^{i}(k)$ =$R^{i}(0)=[0,M]^{T^{i}}.$ Then using
(\ref{eq:best-response}),
\begin{align*}
R^{i}(1) & =\left\{ s^{i}:(1-\beta)E^{i}y\leq s^{i}\leq(1-\beta)E^{i}y+\beta M\boldsymbol{1}\right\} \\
 & =\left\{ s^{i}:(1-\beta)x^{i}\left(1\right)\leq s^{i}\leq(1-\beta)x^{i}\left(1\right)+\beta M\boldsymbol{1}\right\} 
\end{align*}
For induction, we may assume that for some $k\geq1$, each $R^{i}(k)$
for $i\in N$ has the form 
\[
R^{i}(k)=\left\{ s^{i}:(1-\beta)\left(\sum_{n=1}^{k}\beta^{n-1}x^{i}(n)\right)\leq s^{i}\leq(1-\beta)\sum_{n=1}^{k}\beta^{n-1}x^{i}(n)+\beta^{k}M\boldsymbol{1}\right\} .
\]
We have already established the base case, $k=1$. We will argue that
then

\[
R^{i}(k+1)=\left\{ s^{i}:(1-\beta)\left(\sum_{n=1}^{k+1}\beta^{n-1}x^{i}(n)\right)\leq s^{i}\leq(1-\beta)\left(\sum_{n=1}^{k+1}\beta^{n-1}x^{i}(n)\right)+\beta^{k+1}M\boldsymbol{1}\right\} .
\]
The reason is that if $i$ conjectures a strategy profile $s$ satisfying
\[
(1-\beta)\left(\sum_{n=1}^{k}\beta^{n-1}x^{i}(n)\right)\leq s^{j}
\]
 for each $j\neq i$, then since best responses $\text{BR}^{i}(s)$
are nondecreasing in $s$, the minimum best response $s^{i}$ is obtained
by applying $\text{BR}^{i}$ to the lower bound 
\[
(1-\beta)\left(\sum_{n=1}^{k}\beta^{n-1}x^{i}(n)\right),
\]
 which yields 
\begin{align*}
(1-\beta)E^{i}y+\beta{\displaystyle \sum\limits _{j\neq i}\gamma^{ij}E^{i}(1-\beta)\left(\sum_{n=1}^{k}\beta^{n-1}x^{j}(n)\right)}\\
=(1-\beta)\left(\sum_{n=1}^{k+1}\beta^{n-1}x^{i}(n)\right).
\end{align*}
 The argument for the upper bound is analogous. As $k\rightarrow\infty$,
the lower and upper bounds both converge to the $s_{*}(\beta)$ of
(\ref{rationalizable}). 

\subsection{Existence and Characterization of the Consensus Expectation: Proof
of Proposition \ref{prop:char}}

\textbf{\label{subsec:Existence-and-Characterization-Proof}}Recall
that $p$ is the unique vector in $p\in\Delta(S)$ satisfying $p=pB$;
this vector is uniquely determined and positive by a standard result
for irreducible Markov chains. Write 
\begin{equation}
x(\beta)=(1-\beta)\sum_{n=0}^{\infty}\beta^{n}B^{n}z.\label{eq:xr-summation}
\end{equation}
We will show that for any $z\in\mathbb{R}^{S}$, we have 
\begin{equation}
\lim_{\beta\uparrow1}x(\beta)=pz\mathbf{1}.\label{eq:limit-x-goal}
\end{equation}

Note that by the Neumann series, which can be used since the spectral
radius of $\beta B$ is $\beta<1$, we have $\sum_{n=0}^{\infty}(\beta B)^{n}=(I-\beta B)^{-1}$,
where $I$ denotes the identity matrix of appropriate size; in particular,
$I-\beta B$ is invertible. So $x(\beta)=(1-\beta)(I-\beta B)^{-1}z,$
or, equivalently,
\begin{equation}
(I-\beta B)x(\beta)=(1-\beta)z.\label{eq:x-rearranged}
\end{equation}
The formula (\ref{eq:xr-summation}) says that $x(\beta)$ is an average,
because the weights $(1-\beta)\beta^{n}$ sum to 1, of the vectors
$B^{n}z$. Because $B^{n}$ is a Markov matrix, no entry of $B^{n}z$
can exceed the largest value of $z$ in absolute value. So the same
is true of $x(\beta)$, and therefore all the $x(\beta)$ lie in a
compact set. 

Consider a sequence $\beta_{k}\uparrow1$. By what we have said, the
sequence $(x(\beta_{k}))_{k}$ lies inside a compact set. By a standard
fact about compact sets, such a sequence converges, and has the limit
$pz\mathbf{1}$, if and only if every convergent subsequence of it
converges to $pz\mathbf{1}$. So consider a convergent subsequence,
$(x(\beta_{\kappa}))_{\kappa}$, and let $x$ denote its limit. We
will show that $x=pz\bm{1}$, which will conclude the proof of (\ref{eq:limit-x-goal}).

By taking $\beta\uparrow1$ in (\ref{eq:x-rearranged}), we see that
$x$ satisfies $x=Bx$, which, given that our matrix $B$ is irreducible,
means that $x=a\mathbf{1}$ for some constant $a$. It remains only
to prove that $a=pz$. Premultiplying (\ref{eq:x-rearranged}) by
$p$ gives $(1-\beta_{\kappa})px(\beta_{\kappa})=(1-\beta_{\kappa})pz$.
Canceling $(1-\beta_{\kappa})$, we get $px(\beta_{\kappa})=pz$.
Letting $\kappa\to\infty$ and recalling that $x$ is defined as the
limit of the subsequence yields $px=pz$. When we plug in $x=a\mathbf{1}$\textemdash the
statement that $x$ is a constant vector\textemdash we find that $ap\mathbf{1}=pz$.
Since $p$ is a probability vector, we have $p\mathbf{1}=\mathbf{1}$,
and so we conclude that $a=pz$. 

\subsection{Proof of Lemma \ref{lem:markov-optimism}}

If $W_{1}$ is drawn from the ergodic distribution $p$, the distributions
of $W_{1}$ and $W_{2}$ are the same, and so the expected difference
between $f(W_{2})$ and $f(W_{1})$ is $0$: 
\begin{equation}
\mathbb{E}_{W_{1}\sim p}[f(W_{2})-f(W_{1})]=0.\label{eq:zero-difference}
\end{equation}
On the other hand, using hypotheses (1) and (2) in the second line
below, we have
\begin{eqnarray*}
\mathbb{E}_{W_{1}\sim p}[f(W_{2})-f(W_{1})] & = & \sum_{s:f(s)<\overline{f}}p(s)\mathbb{E}_{W_{1}=s}[f(W_{2})-f(s)]+\sum_{s:f(s)\geq\overline{f}}p(s)\mathbb{E}_{W_{1}=s}[f(W_{2})-f(s)]\\
 & \geq & \delta p(s:f(s)<\overline{f})-\varepsilon p(s:f(s)\geq\overline{f}).
\end{eqnarray*}
Combining this result with (\ref{eq:zero-difference}) and using the
shorthand $\chi=p(s:f(s)\geq\overline{f})$, we deduce $0\geq\delta(1-\chi)-\varepsilon\chi$,
from which the lower bound on $\chi$ claimed in the proposition follows.

\subsection{Proof for Claims in Section \ref{subsec:optimism-tightness} about
Tightness Result}

\label{subsec:appendix-tightness}To demonstrate the claim made in
Section \ref{subsec:optimism-tightness}, first note that the chain
satisfies the assumptions of Lemma \ref{lem:markov-optimism} with
$\overline{f}=m$. Let $S_{k}$ be the set of states $\left\{ t_{k}^{i}:i\in\left\{ 1,2\right\} \right\} $.
The stationary mass entering $S_{m}$ has to be equal to the mass
exiting it. Transitions to $S_{m}$ come only from $S_{m-1}$. Finally,
the absorbing states are $S_{m-1}\cup S_{m}$. Combining these facts:
\[
p(S_{m})\varepsilon=p(S_{m-1})\delta=[1-p(S_{m})]\delta,
\]
so that $p(S_{m})=1/(1+\varepsilon/\delta)$. A slight perturbation
of the chain will result in very nearly the same bound for an irreducible
chain. Note that we can generate such an example for as many agents
as we want, and as many types per agent (so tightness is established
for all ``sizes'' of the setting).

\subsection{Proofs of Results on Tyranny of the Least-Informed}

\label{sec:tyranny-proofs}The key lemma behind our proof of Proposition
\ref{prop:ignorant-player} is:
\begin{lem}
\label{lem:p-inequality}Under the hypotheses of Proposition \ref{prop:ignorant-player},
\[
\left|\frac{p(s)-\widehat{p}(s)}{\widehat{p}(s)}\right|\leq\frac{4|\Theta||S|^{2}}{(\gamma_{\min}\rho_{\min})^{2}}\cdot\frac{\varepsilon}{\delta}.
\]
\end{lem}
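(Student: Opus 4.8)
\emph{Proof plan.} The idea is to regard $B$ as a perturbation of $\widehat B$ and to invoke the perturbation bound of \citet*{Cho-Meyer}, which controls the \emph{relative} change in a stationary distribution by the product of (i) the size of the perturbation to the transition matrix and (ii) a condition number of the unperturbed chain that is at most its maximum mean first passage time. So the plan has two quantitative ingredients: a bound of order $\varepsilon$ (times polynomial factors in $|\Theta|,|S|,\rho_{\min}^{-1},\gamma_{\min}^{-1}$) on $\|B-\widehat B\|_\infty$, and a bound of order $\delta^{-1}$ (times similar factors) on the maximum mean first passage time of $\widehat B$. Multiplying these, and using that the first carries the $\varepsilon$ and the second the $\delta^{-1}$, produces the claimed estimate $\tfrac{4|\Theta||S|^{2}}{(\gamma_{\min}\rho_{\min})^{2}}\cdot\tfrac{\varepsilon}{\delta}$, after absorbing absolute constants into the factor $4$.

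\textbf{Step 1: size of the perturbation.} Writing $B(t^{i},t^{j})=\gamma^{ij}\pi^{i}(t^{j}\mid t^{i})$ and similarly for $\widehat B$, it suffices (since $\Gamma$ is row-stochastic) to bound $\|\pi^{i}(\cdot\mid t^{i})-\widehat\pi^{i}(\cdot\mid t^{i})\|_{1}$ uniformly in $i,t^{i}$. Both interim beliefs arise by Bayes' rule from the same prior $\rho^{i}$ and the likelihoods $(\eta^{k})_{k}$ versus $(\widehat\eta^{k})_{k}$, which by construction agree for $k=1$ and differ entrywise by at most $\varepsilon$ for $k\neq1$ (each ``large'' weight $\eta^{k}(t^{k}_{\theta}\mid\theta)\ge1-\varepsilon$ rounds to $1$, each ``small'' weight $\le\varepsilon$ rounds to $0$, using at most $\varepsilon$-noisiness). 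The Bayes normalizers are bounded below by a constant multiple of $\rho_{\min}$: for $t^{i}=t^{i}_{\theta_{0}}$ the posterior weight on $\theta_{0}$ is at least $\rho_{\min}(1-\varepsilon)$ under $\eta^{i}$ and $\rho_{\min}$ under $\widehat\eta^{i}$, while for agent $1$ uniform $\delta$-noisiness gives a normalizer at least $\rho_{\min}\delta$. A routine Lipschitz estimate for Bayes updates with denominators bounded away from $0$ then yields $\|\pi^{i}(\cdot\mid t^{i})-\widehat\pi^{i}(\cdot\mid t^{i})\|_{1}\le C|\Theta|\varepsilon/\rho_{\min}$, hence $\|B-\widehat B\|_{\infty}\le C|\Theta|\varepsilon/\rho_{\min}$ (maximum absolute row sum).

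\textbf{Step 2: the maximum mean first passage time of $\widehat B$.} Under $\widehat{\bm{\pi}}$ every agent $i\neq1$ is certain of $\theta$, so $\widehat B$ has a transparent structure on $S=T^{1}\cup\bigcup_{i\neq1}T^{i}$: from a state $t^{i}_{\theta}$ of an informed agent it moves, for each $j\neq1$, to $t^{j}_{\theta}$ with probability $\gamma^{ij}$ and to a signal $t^{1}$ of agent $1$ with probability $\gamma^{i1}\eta^{1}(t^{1}\mid\theta)$; from a signal $t^{1}$ of agent $1$ it moves to $t^{j}_{\theta}$ with probability $\gamma^{1j}\pi^{1}(\theta\mid t^{1})$. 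Two lower bounds drive the estimate, and both invoke uniform $\delta$-noisiness of agent $1$, not mere positivity. First, the posterior satisfies $\pi^{1}(\theta\mid t^{1})=\rho^{1}(\theta)\eta^{1}(t^{1}\mid\theta)/\sum_{\theta'}\rho^{1}(\theta')\eta^{1}(t^{1}\mid\theta')\ge\rho_{\min}\delta$ for every $\theta$, so from any signal of agent $1$ each block $t^{j}_{\theta}$ is reached in one step with probability at least $\gamma_{\min}\rho_{\min}\delta$. Second, from any informed state $t^{i}_{\theta}$ each signal $t^{1}$ of agent $1$ is reached with probability $\gamma^{i1}\eta^{1}(t^{1}\mid\theta)\ge\gamma_{\min}\delta$. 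Since $\Gamma$ is complete, the ``which agent'' marginal of the chain moves to $T^{1}$, and to each informed block, in one step with probability at least $\gamma_{\min}$; combining this with the two displayed geometric success probabilities (and a crude $|S|^{2}$ bound for first passage times within the informed part) gives a maximum mean first passage time for $\widehat B$ of order $|S|^{2}/(\gamma_{\min}^{2}\rho_{\min}\delta)$.

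\textbf{Conclusion and the main obstacle.} Feeding the Step 1 and Step 2 bounds into the \citet*{Cho-Meyer} inequality $\tfrac{|p(s)-\widehat p(s)|}{\widehat p(s)}\le\big(\text{MMFPT of }\widehat B\big)\cdot\|B-\widehat B\|_{\infty}$ yields the Lemma. The main obstacle is Step 2: one must follow the dependence on $\delta$ carefully, since it is precisely the first power $\delta^{-1}$ (rather than a larger power) that makes the final bound linear in $\varepsilon/\delta$; this is why uniform $\delta$-noisiness of agent $1$ is used in both lower bounds rather than just positivity of $\eta^{1}$. A secondary subtlety is that a signal $t^{j}$ of an informed agent with $t^{j}\neq t^{j}_{\theta}$ for every $\theta$ is transient under $\widehat B$ (so $\widehat p(t^{j})=0$ while $p(t^{j})=O(\varepsilon)$); one either works with the minimal representation in which no such signal occurs, or prunes such signals at the cost of an extra $O(\varepsilon)$ term that is absorbed into the constant. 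Finally, as noted in Section~\ref{subsec:Generalization-Beyond-Full-Support}, if $\Gamma$ is not complete or agent $1$'s information is not uniformly $\delta$-noisy, Step 2 must bound the maximum mean first passage time through path lengths in $\widehat B$ instead, replacing $\delta^{-1}$ by a higher power $\delta^{-\kappa}$.
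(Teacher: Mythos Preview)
Your proposal is correct and follows essentially the same route as the paper: invoke the \citet*{Cho-Meyer} relative-perturbation inequality, then bound separately $\|B-\widehat B\|_\infty$ (Step~1) and the maximum mean first passage time of $\widehat B$ (Step~2). The paper packages these two bounds as Lemmas~\ref{lem:Bij-bound} and~\ref{lem:first-passage-bound}.

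The only noteworthy difference is where the $|S|^{2}$ factor lands. The paper bounds each entry $|\pi^{i}(t^{j}\mid t^{i})-\widehat\pi^{i}(t^{j}\mid t^{i})|$ by $4|\Theta||S|\varepsilon/\rho_{\min}$ and then sums crudely over a row of length $|S|$, obtaining $\|B-\widehat B\|_\infty\le 4|\Theta||S|^{2}\varepsilon/\rho_{\min}$; it then proves the sharp MMFPT bound $2/(\delta\rho_{\min}\gamma_{\min}^{2})$ with no $|S|$ dependence. You instead bound the $L^{1}$ row distance directly (which is indeed $O(|\Theta|\varepsilon/\rho_{\min})$, tighter than the paper's), and compensate by inserting a loose $|S|^{2}$ factor into the MMFPT estimate. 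The product is the same. In fact your Step~1 combined with the paper's Step~2 would yield a bound without the $|S|^{2}$ factor at all, so the constant in the lemma is not sharp under either argument. Your remark about signals $t^{j}\neq t^{j}_{\theta}$ for every $\theta$ being transient under $\widehat B$ is well taken; the paper's Fact~\ref{fact:prior-lower-bound} implicitly assumes such signals do not occur.
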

\begin{proof}
The proof relies on Theorem 2.1 of \citet*{Cho-Meyer}, which says
that, for any $s\in S$, 
\begin{equation}
\left|\frac{p(s)-\widehat{p}(s)}{\widehat{p}(s)}\right|\leq\frac{1}{2}\left\Vert B-\widehat{B}\right\Vert _{\infty}\max_{z\neq z'}M_{\widehat{B}}(z,z'),\label{eq:cho-meyer}
\end{equation}
where $M_{\widehat{B}}(z,z')$ is the mean first passage time\footnote{Consider a Markov chain making transitions according to $\widehat{B}$.
The mean first-passage time from $z$ to $z'$ in $\widehat{B}$ is
denoted by $M_{\widehat{B}}(z,z')$ and defined to be the expected
number of steps that the chain started at $z$ takes up to its first
visit to $z'$ (inclusive).} in $\widehat{B}$ to $z'$ starting at $z$; the norm is the maximum
absolute row sum. Two key technical lemmas, stated in Section \ref{subsec:tech-lem}
below, allow us to bound the right-hand side. Using Lemma \ref{lem:Bij-bound}
(summing the upper bounds on absolute differences across any row and
taking the maximum over all rows $i$):
\[
\left\Vert B-\widehat{B}\right\Vert _{\infty}\leq|S|\cdot\frac{4|\Theta||S|\varepsilon}{\min_{i\neq1}\rho_{\min}^{i}}.
\]
To finish bounding the right-hand side of (\ref{eq:cho-meyer}), it
remains to bound $\max_{z\neq z'}M_{\widehat{B}}(z,z')$. Lemma \ref{lem:first-passage-bound}
does exactly this, giving
\[
\max_{z\neq z'}M_{\widehat{B}}(z,z')\leq\frac{2}{\delta\rho_{\min}^{1}\gamma_{\min}^{2}}.
\]
Recall that $\gamma_{\min}$ is the minimum off-diagonal entry of
$\Gamma$\textemdash by assumption a positive number. Combining the
two inequalities gives the claimed bound.
\end{proof}
Now we can show how this result implies Proposition \ref{prop:ignorant-player}. 

The first step is to show that the consensus expectation under the
hatted information structure is equal to the first agent's prior expectation:
\[
c(y;B_{\widehat{\bm{\pi}}},F_{\widehat{\bm{\pi}}})=\mathbf{E}^{\rho^{1}}[y].
\]
The key to this is to establish that the information structure $(\widehat{\pi}^{i})_{i\in N}$
is consistent with a common prior over signals. Indeed, we will show
that agent $1$'s prior can be taken to be this common prior. Let
$\widehat{\mu}^{1}\in\Delta(T^{1})$ be the prior on $T^{1}$ induced
by $\rho^{1}$, and let 
\[
\widehat{\mu}^{i}(t^{i})=\sum_{t^{1}\in T^{1}}\widehat{\pi}^{1}(t^{i}\mid t^{1})\widehat{\mu}^{1}(t^{1}).
\]
For agents $i\neq1$, the interim beliefs $\widehat{\pi}^{i}(\cdot\mid t^{i})$
are compatible with their respective priors $\widehat{\mu}^{i}$ trivially,
because the interim beliefs place probability $0$ or $1$ on any
state, and are compatible with \emph{any }prior\textemdash Bayes'
rule implies no restrictions. Moreover, with this profile $(\widehat{\mu}^{i})_{i\in N}$,
the information structure $(\widehat{\pi}^{i})_{i\in N}$ is consistent
with a common prior over signals. Now note that the prior over $\Theta$
corresponding to any $\widehat{\mu}^{i}$ is $\rho^{1}$. By Proposition
\ref{prop:cpa-signals}, the consensus expectation $c(y;B_{\widehat{\bm{\pi}}},F_{\widehat{\bm{\pi}}})$
is the common prior expectation of $y$, namely $\mathbf{E}^{\rho^{1}}[y]$.

The second step is to bound the distance between $c(y;B_{\widehat{\bm{\pi}}},F_{\widehat{\bm{\pi}}})$,
which we have computed, and $c(y;B_{\bm{\pi}},F_{\bm{\pi}})$, which
we would like to characterize. It is here that Lemma \ref{lem:p-inequality}
is relevant:

\begin{align*}
\left|c(y;B_{\bm{\pi}},F_{\bm{\pi}})-c(y;B_{\widehat{\bm{\pi}}},F_{\widehat{\bm{\pi}}})\right| & =\left|\sum_{s\in S}[p(s)-\widehat{p}(s)]E^{i}[y\mid s]\right|\\
 & =\left|\sum_{s\in S}\frac{p(s)-\widehat{p}(s)}{\widehat{p}(s)}\widehat{p}(s)E^{i}[y\mid s]\right| & \text{multiply and divide by \ensuremath{\widehat{p}(s)}}\\
 & \leq\sum_{s\in S}\left|\frac{p(s)-\widehat{p}(s)}{\widehat{p}(s)}\right|\widehat{p}(s)\left|E^{i}[y\mid s]\right| & \text{triangle inequality}\\
 & \leq\frac{4|\Theta||S|^{2}}{(\gamma_{\min}\rho_{\min})^{2}}\cdot\frac{\varepsilon}{\delta}\sum_{s\in S}\widehat{p}(s)\left|E^{i}[y\mid s]\right| & \text{Lemma \ref{lem:p-inequality}}\\
 & \leq\frac{4|\Theta||S|^{2}}{(\gamma_{\min}\rho_{\min})^{2}}\cdot y_{\max}\cdot\frac{\varepsilon}{\delta}. & \text{definition of }y_{\text{max}}
\end{align*}
This completes the proof of the proposition, except for the technical
lemmas, which are the subject of the next section.

\subsubsection{Statements of Technical Lemmas}

\label{subsec:tech-lem}The proof of Lemma \ref{lem:p-inequality}
used two key bounds. We state both here, and give proofs in Appendix
\ref{sec:online-proofs}.

The first result, which was used to bound $\Vert B-\widehat{B}\Vert_{\infty}$,
converts hypotheses about the signal structures $(\eta^{i})_{i\in N}$
into statements about the agents' interim beliefs (recall that the
entries of $B$ are products of network weights from $\Gamma$ and
interim beliefs):
\begin{lem}
\label{lem:Bij-bound}For any $t^{i},t^{j}\in S$ with $j\neq i$,
we have
\[
\left|\pi^{i}(t^{j}\mid t^{i})-\widehat{\pi}^{i}(t^{j}\mid t^{i})\right|\leq\frac{4|\Theta||S|\varepsilon}{\rho_{\min}^{i}}.
\]
\end{lem}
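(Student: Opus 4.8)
The plan is to evaluate $\pi^{i}(t^{j}\mid t^{i})$ and $\widehat\pi^{i}(t^{j}\mid t^{i})$ by the same two-stage Bayes computation and then bound the discrepancy term by term. In the common-interpretation framework the signals are conditionally independent given $\theta$ and the likelihoods $\eta^{j}$ are common knowledge, so $\pi^{i}(t^{j}\mid t^{i})=\sum_{\theta}\pi^{i}(\theta\mid t^{i})\,\eta^{j}(t^{j}\mid\theta)$, where $\pi^{i}(\theta\mid t^{i})=\rho^{i}(\theta)\eta^{i}(t^{i}\mid\theta)/\sum_{\theta'}\rho^{i}(\theta')\eta^{i}(t^{i}\mid\theta')$ is $i$'s posterior over the state; the same identity with hats on every symbol gives $\widehat\pi^{i}(t^{j}\mid t^{i})$. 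Subtracting and applying the triangle inequality,
\[
\bigl|\pi^{i}(t^{j}\mid t^{i})-\widehat\pi^{i}(t^{j}\mid t^{i})\bigr|\le\sum_{\theta}\pi^{i}(\theta\mid t^{i})\bigl|\eta^{j}(t^{j}\mid\theta)-\widehat\eta^{j}(t^{j}\mid\theta)\bigr|+\sum_{\theta}\bigl|\pi^{i}(\theta\mid t^{i})-\widehat\pi^{i}(\theta\mid t^{i})\bigr|\,\widehat\eta^{j}(t^{j}\mid\theta).
\]
The first sum is at most $\varepsilon$: if $j=1$ the signal structure is untouched and it is $0$, while for $j\neq1$ Definition~\ref{def: CIS_1} together with $\varepsilon<1/2$ forces $\eta^{j}(\cdot\mid\theta)$ to round to $\{0,1\}$ with every coordinate moving by at most $\varepsilon$. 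Since $\widehat\eta^{j}\le1$, the second sum is at most the $\ell^{1}$ distance $\bigl\|\pi^{i}(\cdot\mid t^{i})-\widehat\pi^{i}(\cdot\mid t^{i})\bigr\|_{1}$ between the two state-posteriors. So the whole problem reduces to comparing two posteriors built from the common prior $\rho^{i}$ but from likelihoods $g(\cdot)=\eta^{i}(t^{i}\mid\cdot)$ and $\widehat g(\cdot)=\widehat\eta^{i}(t^{i}\mid\cdot)$ that agree up to $\varepsilon$ coordinatewise.

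For that core estimate I would set $D=\sum_{\theta}\rho^{i}(\theta)g(\theta)$, $\widehat D=\sum_{\theta}\rho^{i}(\theta)\widehat g(\theta)$, note $|D-\widehat D|\le\varepsilon$, and use the elementary identity (suppressing the argument $\theta$) $\frac{g}{D}-\frac{\widehat g}{\widehat D}=\frac{g\,(\widehat D-D)}{D\widehat D}+\frac{g-\widehat g}{\widehat D}$. Multiplying by $\rho^{i}(\theta)$, summing over $\theta$, and using $\sum_{\theta}\rho^{i}(\theta)g(\theta)=D$ in the first term gives $\bigl\|\pi^{i}(\cdot\mid t^{i})-\widehat\pi^{i}(\cdot\mid t^{i})\bigr\|_{1}\le\frac{|\widehat D-D|}{\widehat D}+\frac{1}{\widehat D}\sum_{\theta}\rho^{i}(\theta)|g-\widehat g|\le\frac{2\varepsilon}{\widehat D}$. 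For $i\neq1$ the rounded likelihood $\widehat g$ is the indicator of the states whose modal signal is $t^{i}$, so $\widehat D=\sum_{\theta:\,t_{\theta}^{i}=t^{i}}\rho^{i}(\theta)\ge\rho_{\min}^{i}$ by full support of $\rho^{i}$; for $i=1$ we have $\widehat g=g$ and this contribution is simply $0$. Combining everything, $\bigl|\pi^{i}(t^{j}\mid t^{i})-\widehat\pi^{i}(t^{j}\mid t^{i})\bigr|\le\varepsilon+\frac{2\varepsilon}{\rho_{\min}^{i}}\le\frac{3\varepsilon}{\rho_{\min}^{i}}$, which is well inside the stated $\frac{4|\Theta||S|\varepsilon}{\rho_{\min}^{i}}$; the factors $|\Theta|$ and $|S|$ in the lemma are slack I would keep only because all that the downstream Cho--Meyer argument (see~(\ref{eq:cho-meyer})) needs is linearity in $\varepsilon$.

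The step I expect to demand the most care is not the telescoping above but the lower bound $\widehat D\ge\rho_{\min}^{i}$, together with the attendant question of when $\widehat\pi^{i}(\cdot\mid t^{i})$ is even well-defined. If $t^{i}$ happens not to be the high-probability signal of any state, then $\widehat\eta^{i}(t^{i}\mid\theta)=0$ for every $\theta$, so $t^{i}$ has prior probability $0$ under $\widehat{\bm{\pi}}$ and the interim belief there is defined only by fiat. I would handle this either by noting that such rows of $\widehat B$ carry zero stationary mass and hence never enter the application of the Cho--Meyer bound (which weights every term by $\widehat p(s)$), or, equivalently, by observing that one may take the signal sets of the informed agents to consist exactly of the modal signals of the states, which is without loss for the construction of $\widehat{\bm{\pi}}$. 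Once that bookkeeping is settled, the bound is precisely the routine estimate sketched above.
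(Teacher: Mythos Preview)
Your argument is correct and follows the same overall plan as the paper's: write $\pi^{i}(t^{j}\mid t^{i})=\sum_{\theta}\eta^{j}(t^{j}\mid\theta)\,\pi^{i}(\theta\mid t^{i})$, compare to the hatted version term by term, and bound separately the change in the likelihood $\eta^{j}\to\widehat\eta^{j}$ and the change in the state posterior $\pi^{i}(\cdot\mid t^{i})\to\widehat\pi^{i}(\cdot\mid t^{i})$. Where you differ is in the execution of the posterior bound. The paper lower-bounds the \emph{true} normalizer $\mu^{i}(t^{i})\geq(1-\varepsilon)\rho_{\min}^{i}$ and then bounds each $|\pi^{i}(\theta\mid t^{i})-\widehat\pi^{i}(\theta\mid t^{i})|$ state by state, distinguishing $\theta=\theta_{t^{i}}$ from $\theta\neq\theta_{t^{i}}$; summing over $\theta$ is what produces the extra factor $|\Theta||S|$ in the lemma's constant. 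Your telescoping $\frac{g}{D}-\frac{\widehat g}{\widehat D}=\frac{g(\widehat D-D)}{D\widehat D}+\frac{g-\widehat g}{\widehat D}$ instead controls the whole $\ell^{1}$ distance at once via the \emph{hatted} normalizer $\widehat D\geq\rho_{\min}^{i}$, and this is genuinely sharper: you land at $3\varepsilon/\rho_{\min}^{i}$ rather than $4|\Theta||S|\varepsilon/\rho_{\min}^{i}$.

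The edge case you flag\textemdash signals $t^{i}$ that are not the modal signal $t_{\theta}^{i}$ of any state\textemdash is real, and the paper's proof in fact relies on the same implicit assumption: its Fact~\ref{fact:prior-lower-bound} asserts that for every $t^{i}$ there is some $\theta_{t^{i}}$ with $\eta^{i}(t^{i}\mid\theta_{t^{i}})\geq1-\varepsilon$, which is exactly the statement that $t^{i}$ is modal for some state. Your proposed workaround of pruning non-modal signals from $T^{i}$ for $i\neq1$ is the cleanest fix and is also what the paper is tacitly doing; your alternative of arguing that such signals have $\widehat p(t^{i})=0$ is more delicate, since the Cho--Meyer bound in~(\ref{eq:cho-meyer}) presumes $\widehat B$ irreducible and all $\widehat p(s)>0$, so I would stick with the pruning.
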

This follows from Bayes' rule, but the exact statement requires a
good deal of calculation. The core idea is that $\widehat{\eta}$
is obtained by changing the probabilities in $\eta$ only slightly.
Given full support priors, each $\pi^{i}(t^{j}\mid t^{i})$ is continuous
in $\eta^{i}(t^{i}\mid\theta)$, so it is natural that the two should
be close; our calculation simply gives a quantitative version of this
statement. \medskip{}

We also used a bound on mean first-passage times in $\widehat{B}$: 
\begin{lem}
\label{lem:first-passage-bound} For any two states $z,z'\in S$,
\[
M_{\widehat{B}}(z,z')\leq\frac{2}{\delta\rho_{\min}^{1}\gamma_{\min}^{2}}.
\]
\end{lem}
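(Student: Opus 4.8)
The plan is to prove a uniform two-step reachability bound for the chain $\widehat{B}=B_{\widehat{\bm{\pi}},\Gamma}$ and then convert it into a first-passage bound by a geometric-trials argument. First I would record the structure of $\widehat{\bm{\pi}}$ coming from the rounding. After rounding, each agent $i\neq 1$ who holds a signal $t^{i}$ is certain of the state; write $\theta(t^{i})$ for it. By conditional independence, such an agent's belief about agent $1$'s signal is $\eta^{1}(\cdot\mid\theta(t^{i}))$, which puts mass at least $\delta$ on \emph{every} signal of agent $1$ by hypothesis (1); and he believes every other agent $j\neq 1,i$ holds the unique informative signal $t^{j}_{\theta(t^{i})}$ with probability one. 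Agent $1$'s interim belief, obtained from $\rho^{1}$ and $\eta^{1}$ by Bayes' rule, satisfies
\[
\widehat{\pi}^{1}(\theta\mid t^{1})=\frac{\rho^{1}(\theta)\,\eta^{1}(t^{1}\mid\theta)}{\sum_{\theta'}\rho^{1}(\theta')\,\eta^{1}(t^{1}\mid\theta')}\ \geq\ \rho_{\min}^{1}\,\delta
\]
for every $\theta$, because the numerator is at least $\rho_{\min}^{1}\delta$ while the denominator is at most $1$.

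Next I would establish the key claim: for all states $z,z'$ (in the recurrent class of $\widehat{B}$, i.e.\ the signals of agent $1$ together with the informative signals $t^{j}_{\theta}$ of agents $j\neq 1$), one has $\mathbb{P}_{W_{1}=z}(W_{2}=z'\text{ or }W_{3}=z')\geq q$, where $q:=\gamma_{\min}^{2}\rho_{\min}^{1}\delta$. The engine is that agent $1$ acts as a hub. From any signal of an agent $j\neq 1$, one reaches any prescribed signal of agent $1$ in one step with probability $\gamma^{j1}\eta^{1}(\cdot\mid\theta(\cdot))\geq\gamma_{\min}\delta$; and from any signal $t^{1}$ of agent $1$, one reaches the target $z'=t^{j}_{\theta_{*}}$ of any agent $j\neq 1$ in one step with probability $\gamma^{1j}\widehat{\pi}^{1}(\theta_{*}\mid t^{1})\geq\gamma_{\min}\rho_{\min}^{1}\delta$ — a bound that is uniform over which signal of agent $1$ was visited. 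Combining these: if $z'$ belongs to agent $1$, then either $z$ already belongs to some $j\neq 1$ (one step suffices, probability $\geq\gamma_{\min}\delta\geq q$), or $z$ belongs to agent $1$, in which case one step lands on some $t^{j}_{\theta}$ with $j\neq 1$ (total probability $\sum_{j\neq 1}\gamma^{1j}\geq\gamma_{\min}$) and a second step hits $z'$ with probability $\geq\gamma_{\min}\delta$. If instead $z'=t^{j}_{\theta_{*}}$ with $j\neq 1$, then either $z$ belongs to agent $1$ (one step, probability $\geq\gamma_{\min}\rho_{\min}^{1}\delta\geq q$), or $z$ belongs to some $i\neq 1$, in which case one step reaches a signal of agent $1$ (probability $\gamma^{i1}\geq\gamma_{\min}$) and a second step hits $z'$ (probability $\geq\gamma_{\min}\rho_{\min}^{1}\delta$); in every case the two-step hitting probability is at least $q$. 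I would note in passing that routing through agent $1$ is essential — a signal of an agent $j\neq 1$ ``knows'' $\theta$ and can only move to the matching informed signal of another non-special agent, so reaching an arbitrary target forces a trip through agent $1$, whose posterior is never more concentrated than $\rho_{\min}^{1}\delta$ — and that the argument does not need $\gamma^{ii}=0$ (since $\sum_{j\neq i}\gamma^{ij}\geq\gamma_{\min}$ regardless), while the restriction to recurrent states is harmless because the rounding sends the probability of every other signal of an agent $j\neq 1$ to zero, matching the restriction implicitly required for the Cho--Meyer bound.

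Finally I would convert the two-step estimate into the first-passage bound. Fix $z\neq z'$ and group transitions in consecutive pairs: wherever the chain is, the previous claim (via the Markov property, applied at the start of each pair) gives a probability at least $q$ of visiting $z'$ within the next two transitions, independently of the past. Hence the number of transitions until the first visit to $z'$ is at most $2K$, where $K$ is dominated by a $\mathrm{Geometric}(q)$ variable on $\{1,2,\dots\}$ with $\mathbb{E}[K]\leq 1/q$; therefore $M_{\widehat{B}}(z,z')\leq 2/q=\dfrac{2}{\delta\,\rho_{\min}^{1}\,\gamma_{\min}^{2}}$, which is exactly the claimed bound.

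The step I expect to be the main obstacle is the two-step reachability claim: one must pin down precisely which one- and two-step transitions of $\widehat{B}$ carry a lower-bounded probability, and see that channeling every trip through agent $1$ using a \emph{single} intermediate step — together with the observation that agent $1$'s posterior lower bound $\rho_{\min}^{1}\delta$ is uniform over his signals, so one need not control \emph{which} signal of agent $1$ is visited — is what keeps the constant at $\gamma_{\min}^{2}\rho_{\min}^{1}\delta$ rather than degrading to something like $\gamma_{\min}^{3}\rho_{\min}^{1}\delta^{2}$.
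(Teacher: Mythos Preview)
Your proof is correct and rests on the same idea as the paper's: agent $1$'s types act as a hub, because (i) from any type of an agent $j\neq 1$ one reaches any given signal of agent $1$ with probability at least $\gamma_{\min}\delta$, and (ii) from any signal of agent $1$ one reaches any given informative signal $t^{j}_{\theta_*}$ with probability at least $\gamma_{\min}\rho_{\min}^{1}\delta$ (your posterior bound is exactly the paper's Fact on $\widehat{\pi}^{1}(t^{j}\mid t^{1})\geq\delta\rho_{\min}^{1}$).

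The organization differs slightly. The paper splits into two cases according to whether the target $z'$ lies in $T^{1}$ or not, and in each case runs a renewal argument whose regeneration points are visits to $T^{1}$ (respectively $S\setminus T^{1}$), bounding the expected inter-visit time by $1/\gamma_{\min}$ via the projected chain $\iota(\widehat{W}_n)$ on $N$. You instead prove a single uniform two-step minorization $\mathbb{P}_{W_1=z}(z'\in\{W_2,W_3\})\geq\gamma_{\min}^{2}\rho_{\min}^{1}\delta$ and apply a geometric bound in blocks of two steps. Your packaging is a bit cleaner and avoids the case split; the paper's case-by-case argument gives sharper constants in each case (e.g.\ $1/(\delta\rho_{\min}^{1}\gamma_{\min}^{2})$ when $z'\notin T^{1}$) before taking the common upper envelope $2/(\delta\rho_{\min}^{1}\gamma_{\min}^{2})$. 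Substantively the two arguments are the same.
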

The key idea here is that, as a consequence of agent $1$ having noisy
information, the subjective probability agent $1$ puts on any type
of any other agent is reasonably high: The lower bound is $\rho_{\min}^{1}\delta$,
as we establish in the proof. Thus the corresponding weights in $\widehat{B}$
are lower-bounded by $\delta\rho_{\min}^{1}\gamma_{\min}$, once we
take into account the network part of the weight. The other agents'
types have perfect information, so each of them has an edge of weight
at least $\gamma_{\text{min}}$ to a type of agent $1$. Thus the
Markov chain is well-interconnected by agent $1$'s types: Starting
from any state, one gets to agent $1$'s types immediately, and then
to any other given state in $S$ with substantial probability, so
the chain cannot take too long to visit that state (by a standard
bound on geometric random variables).

The proofs of the technical lemmas appear in Appendix \ref{sec:online-proofs}.

\pagebreak{}

\section{For Online Publication: Proofs of Technical Lemmas\label{sec:online-proofs}}

\subsection{\label{sec:proof-B-close}Proof of Lemma \ref{lem:Bij-bound}}

The proof relies on the following fact about prior probabilities of
signals.
\begin{fact}
\label{fact:prior-lower-bound}For any $i\neq1$ and any $t^{i}$,
\[
\mu^{i}(t^{i})=\sum_{\theta'\in\Theta}\eta^{i}(t^{i}\mid\theta')\rho^{i}(\theta')\geq(1-\varepsilon)\rho_{\min}^{i}.
\]
\end{fact}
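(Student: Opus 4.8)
The statement has two parts, and the plan is to dispatch them in turn: the displayed \emph{equality} is just the definition of the prior on signals, and the \emph{inequality} follows by bounding the resulting sum below by one well-chosen term.

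First I would note that the equality $\mu^{i}(t^{i})=\sum_{\theta'\in\Theta}\eta^{i}(t^{i}\mid\theta')\rho^{i}(\theta')$ is immediate from the construction of priors in the common-interpretation-of-signals framework. There, agent $i$'s prior on $\Omega=\Theta\times T$ is obtained by drawing the state $\theta'$ according to $\rho^{i}$ and then, conditionally, drawing each agent's signal from the corresponding $\eta^{j}(\cdot\mid\theta')$ (this is the general construction of Section \ref{subsec:Priors}, specialized so that $\rho^{i}$ plays the role of agent $i$'s prior over states). Marginalizing this prior onto $T^{i}$ and applying the law of total probability gives precisely the asserted weighted average of conditional signal probabilities.

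For the inequality I would use that $i\neq1$, so by hypothesis (2) of Proposition \ref{prop:ignorant-player} the distribution $\eta^{i}$ is at most $\varepsilon$-noisy. By Definition \ref{def: CIS_1}, for each $\theta$ there is a signal $t^{i}_{\theta}$ with $\eta^{i}(t^{i}_{\theta}\mid\theta)\geq1-\varepsilon$; since $\varepsilon<1/2$, the assignment $\theta\mapsto t^{i}_{\theta}$ is injective (were $t^{i}_{\theta}=t^{i}_{\theta'}$ with $\theta\neq\theta'$, this signal would satisfy $\eta^{i}(t^{i}_{\theta}\mid\theta')\geq1-\varepsilon$ and $\eta^{i}(t^{i}_{\theta}\mid\theta')\leq\varepsilon$ simultaneously), and in the regime under consideration the signal space of an at-most-$\varepsilon$-noisy agent consists exactly of the concentration points $t^{i}_{\theta}$. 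Hence, given any $t^{i}\in T^{i}$, there is $\theta^{*}$ with $t^{i}=t^{i}_{\theta^{*}}$. Keeping only the $\theta'=\theta^{*}$ summand (the omitted terms being nonnegative) yields
\[
\mu^{i}(t^{i})=\sum_{\theta'\in\Theta}\eta^{i}(t^{i}\mid\theta')\rho^{i}(\theta')\;\geq\;\eta^{i}(t^{i}_{\theta^{*}}\mid\theta^{*})\,\rho^{i}(\theta^{*})\;\geq\;(1-\varepsilon)\,\rho^{i}(\theta^{*})\;\geq\;(1-\varepsilon)\,\rho^{i}_{\min},
\]
where the last inequality is the definition of $\rho^{i}_{\min}$.

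There is no substantial obstacle here; the only step that requires a moment's care is the identification of an arbitrary signal $t^{i}$ with a concentration point $t^{i}_{\theta^{*}}$ of the conditional law, which is exactly where the at-most-$\varepsilon$-noisy hypothesis is used (together with $\varepsilon<1/2$, which is what makes the map $\theta\mapsto t^{i}_{\theta}$ injective). Everything else is the definition of the induced prior and discarding nonnegative terms.
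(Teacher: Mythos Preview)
Your proof is correct and follows essentially the same route as the paper's: identify a state $\theta^{*}$ with $\eta^{i}(t^{i}\mid\theta^{*})\geq 1-\varepsilon$, keep only that term in the sum, and bound $\rho^{i}(\theta^{*})$ below by $\rho^{i}_{\min}$. The paper's argument is the one-line version of this; you have added more care, in particular spelling out the injectivity of $\theta\mapsto t^{i}_{\theta}$ via $\varepsilon<1/2$ and noting explicitly that one needs every signal to arise as some concentration point $t^{i}_{\theta^{*}}$ (which the paper simply asserts as ``guaranteed to exist by the definition'').
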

This bound holds because $\eta^{i}$ is assumed to be at most $\varepsilon$-noisy,
and so there must be some $\theta_{t_{i}}$ such that $\eta^{i}(t^{i}\mid\theta_{t_{i}})\geq1-\varepsilon$.
\bigskip{}

The first step of the proof of Lemma \textbf{\ref{lem:Bij-bound}}
is to write the probabilities in question via sums over states $\theta$.
For any $t^{i},t^{j}$ with $j\neq i$, we have
\[
\pi^{i}(t^{j}\mid t^{i})=\sum_{\theta\in\Theta}\eta^{j}(t^{j}\mid\theta)\pi^{i}(\theta\mid t^{i})
\]
 Define $\widehat{\pi}^{i}(t^{j}\mid t^{i})$ analogously, replacing
$\pi^{i}$ by $\widehat{\pi}^{i}$ and $\eta^{i}$ by $\widehat{\eta}^{i}$.
Let 
\[
H^{j}(t^{j}\mid\theta)=\left|\eta^{j}(t^{j}\mid\theta)-\widehat{\eta}^{j}(t^{j}\mid\theta)\right|
\]
and
\[
\Delta^{i}(\theta\mid t^{i})=\left|\pi^{i}(\theta\mid t^{i})-\widehat{\pi}^{i}(\theta\mid t^{i})\right|.
\]

Now note that by the triangle inequality,
\begin{equation}
\left|\pi^{i}(t^{j}\mid t^{i})-\widehat{\pi}^{i}(t^{j}\mid t^{i})\right|\leq\sum_{\theta\in\Theta}\left[\Delta^{i}(\theta\mid t^{i})+H^{j}(t^{j}\mid\theta)+\Delta^{i}(\theta\mid t^{i})H^{j}(t^{j}\mid\theta)\right].\label{eq:difference-as-sum}
\end{equation}

Having written the difference we are studying in this way, we will
bound it piece by piece. If $j\neq1$, by definition of ``at most
$\varepsilon$-noisy,'' we have that $|H^{j}(t^{j}\mid\theta)|\leq\varepsilon$.
If $j=1$, then $H^{j}(t^{j}\mid\theta)$ is identically zero. Also,
note that $|\Delta^{i}(\theta\mid t^{i})|\leq1$. So in all cases,
we can bound the last two terms in the brackets by $2\varepsilon$. 

Now, we turn to $\Delta^{i}(\theta\mid t^{i})$. If $i=1$, then $\Delta^{i}(\theta\mid t^{i})=0$,
because 1's signals are the same in both the original information
structure $\bm{\pi}$ and the new one $\widehat{\bm{\pi}}$. 

So assume $i\neq1$; we will show that $\Delta^{i}(\theta\mid t^{i})\leq(|S|-1)\frac{\varepsilon}{(1-\varepsilon)\rho_{\min}^{i}}$,
and this will allow us to complete the proof. Let $\theta_{t_{i}}$
be such that $\eta^{i}(t^{i}\mid\theta_{t_{i}})\geq1-\varepsilon$,
which is guaranteed to exist by the definition of ``at most $\varepsilon$-noisy.''
We will bound $\Delta^{i}(\theta\mid t^{i}),$ considering the cases
$\theta\neq\theta_{t_{i}}$ and $\theta=\theta_{t^{i}}$ separately.
If $\theta\neq\theta_{t_{i}}$, then by Bayes' rule, 
\begin{align*}
\pi^{i}(\theta\mid t^{i}) & =\frac{\eta^{i}(t^{i}\mid\theta)\rho^{i}(\theta)}{\mu^{i}(t^{i})}\\
 & \leq\frac{\eta^{i}(t^{i}\mid\theta)\rho^{i}(\theta)}{(1-\varepsilon)\rho_{\min}^{i}} & \text{by Fact \ref{fact:prior-lower-bound} }\\
 & \leq\frac{\varepsilon}{(1-\varepsilon)\rho_{\min}^{i}} & \text{\text{by definition of at least \ensuremath{\varepsilon}-nosiy}. }
\end{align*}
Since $\widehat{\pi}^{i}(\theta\mid t^{i})=0$, it follows that 
\begin{equation}
\Delta^{i}(\theta\mid t^{i})\leq\frac{\varepsilon}{(1-\varepsilon)\rho_{\min}^{i}}\label{eq:Delta-bound-1}
\end{equation}
 By the law of total probability,
\[
\pi^{i}(\theta_{t^{i}}\mid t^{i})\geq1-(|S|-1)\frac{\varepsilon}{(1-\varepsilon)\rho_{\min}^{i}}.
\]
Since $\widehat{\pi}^{i}(\theta_{t_{i}}\mid t^{i})=1,$ it follows
that 
\begin{equation}
\Delta^{i}(\theta_{t^{i}}\mid t^{i})\leq(|S|-1)\frac{\varepsilon}{(1-\varepsilon)\rho_{\min}^{i}}.\label{eq:Delta-bound-2}
\end{equation}
This is the looser of the two bounds (\ref{eq:Delta-bound-1}) and
(\ref{eq:Delta-bound-2}), so we can say in general that 
\begin{equation}
\Delta^{i}(\theta_{t^{i}}\mid t^{i})\leq(|S|-1)\frac{\varepsilon}{(1-\varepsilon)\rho_{\min}^{i}}.\label{eq:Delta-bound-3}
\end{equation}

Putting everything together, it follows that

\begin{align*}
\left|\pi^{i}(t^{j}\mid t^{i})-\widehat{\pi}^{i}(t^{j}\mid t^{i})\right| & \leq\sum_{\theta\in\Theta}\left[\Delta^{i}(\theta\mid t^{i})+2\varepsilon\right] & \text{by }(\ref{eq:difference-as-sum})\\
 & \leq\sum_{\theta\in\Theta}\left[(|S|-1)\cdot\frac{\varepsilon}{(1-\varepsilon)\rho_{\min}^{i}}+2\varepsilon\right]\\
 & \leq|\Theta|\left((|S|-1)\frac{\varepsilon}{(1-\varepsilon)\rho_{\min}^{i}}+2\varepsilon\right)\\
 & \leq|\Theta|\varepsilon\left((|S|-1)\frac{2}{\rho_{\min}^{i}}+2\right) & \text{using }1-\varepsilon\geq\frac{1}{2}\\
 & \leq|\Theta|\frac{\varepsilon}{\rho_{\min}^{i}}\left(2(|S|-1)+2\right)
\end{align*}
The claimed bound follows after noting and that $2|S|\geq2(|S|-1)+2$
because $|S|\geq2$.

\subsection{Proof of Lemma \ref{lem:first-passage-bound}}

The proof requires the following fact:
\begin{fact}
\label{fact:ignorant-bound}For any $t^{1}\in T^{1}$ and $t^{j}\in T^{j}$
with $j\neq1$ we have:
\[
\widehat{\pi}^{1}(t^{j}\mid t^{1})=\frac{\sum_{\theta\in\Theta}\rho^{1}(\theta)\widehat{\eta}^{j}(t^{j}\mid\theta)}{\sum_{\theta\in\Theta}\rho^{1}(\theta)\eta^{1}(t^{1}\mid\theta)}\geq\delta\rho_{\min}^{1}.
\]
\end{fact}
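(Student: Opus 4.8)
The plan is to unfold $\widehat{\pi}^{1}(\cdot\mid t^{1})$ by Bayes' rule from the hatted primitives and then read off the lower bound from the two noise hypotheses together with the full support of $\rho^{1}$. The only structural facts I need are that the rounding producing $\widehat{\bm\pi}$ changes only the signal distributions of agents $i\neq 1$ (so $\widehat{\eta}^{1}=\eta^{1}$), that agent $1$'s prior over $\Theta$ remains $\rho^{1}$, and that all signals are conditionally independent given $\theta$ under every agent's prior.

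First I would compute agent $1$'s interim belief over states. Working inside agent $1$'s prior $\mathbf{P}^{\rho^{1}}(\theta,t)=\rho^{1}(\theta)\prod_{i}\widehat{\eta}^{i}(t^{i}\mid\theta)$ and using conditional independence to sum out the other agents' signals, Bayes' rule gives
\[
\widehat{\pi}^{1}(\theta\mid t^{1})=\frac{\rho^{1}(\theta)\,\eta^{1}(t^{1}\mid\theta)}{\sum_{\theta'\in\Theta}\rho^{1}(\theta')\,\eta^{1}(t^{1}\mid\theta')},
\]
where I used $\widehat{\eta}^{1}=\eta^{1}$. Summing $\widehat{\pi}^{1}(\theta\mid t^{1})\,\widehat{\eta}^{j}(t^{j}\mid\theta)$ over $\theta$ — again by conditional independence of $t^{j}$ from $t^{1}$ given $\theta$ — yields $\widehat{\pi}^{1}(t^{j}\mid t^{1})$ as the displayed ratio.

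Next I would bound that ratio. The denominator $\sum_{\theta'}\rho^{1}(\theta')\,\eta^{1}(t^{1}\mid\theta')$ is at most $1$, since $\eta^{1}(t^{1}\mid\theta')\le 1$ and $\rho^{1}\in\Delta(\Theta)$. For the numerator I use three ingredients: hypothesis (1) of Proposition \ref{prop:ignorant-player} (that $\eta^{1}$ is uniformly at least $\delta$-noisy) gives $\eta^{1}(t^{1}\mid\theta)\ge\delta$ for every $\theta$; full support of $\rho^{1}$ gives $\rho^{1}(\theta)\ge\rho^{1}_{\min}$; and the rounding construction makes $\widehat{\eta}^{j}(\cdot\mid\theta)$ a point mass on the unique dominant signal $t^{j}_{\theta}$ of $\eta^{j}$ at $\theta$ (this is where $\varepsilon<1/2$ is used: exactly one signal has $\eta^{j}(\cdot\mid\theta)>1/2$, so the rounding is canonical and each $\widehat{\eta}^{j}(\cdot\mid\theta)$ is a valid distribution). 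Since $\theta\mapsto t^{j}_{\theta}$ is injective (part of the definition of at most $\varepsilon$-noisy), any signal $t^{j}$ received with positive probability satisfies $\widehat{\eta}^{j}(t^{j}\mid\theta)=1$ for exactly one $\theta$, so $\sum_{\theta}\widehat{\eta}^{j}(t^{j}\mid\theta)\ge 1$. Combining, the numerator is at least $\delta\rho^{1}_{\min}$, and hence $\widehat{\pi}^{1}(t^{j}\mid t^{1})\ge\delta\rho^{1}_{\min}$.

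There is no deep obstacle here — the statement is a one-line Bayes computation followed by crude bounding — but two points need care. First, one must check the marginalization steps are legitimate despite heterogeneous priors; they are, precisely because the whole computation is carried out under the single prior $\rho^{1}$, whose interim conditionals are by definition $\widehat{\pi}^{1}$. Second, there is the degenerate case of a signal $t^{j}$ of an informed agent that is the dominant signal of no state: such a $t^{j}$ has $\widehat{\pi}^{1}(t^{j}\mid t^{1})=0$ and is transient (indeed unreachable) in $\widehat{B}$, so one either restricts attention to the realizable signals — which is all that the downstream use in Lemma \ref{lem:first-passage-bound} needs — or notes that in the common-interpretation setup the informed agents' signal sets are identified with $\Theta$, making $\theta\mapsto t^{j}_{\theta}$ a bijection.
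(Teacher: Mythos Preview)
Your proposal is correct and follows essentially the same approach as the paper's (very terse) proof: identify the single $\theta_{t^{j}}$ with $\widehat{\eta}^{j}(t^{j}\mid\theta_{t^{j}})=1$, bound the denominator by $1$ as a prior probability, and read off the numerator bound. One small point worth flagging: your Bayes computation actually yields a numerator of $\sum_{\theta}\rho^{1}(\theta)\,\eta^{1}(t^{1}\mid\theta)\,\widehat{\eta}^{j}(t^{j}\mid\theta)$, not the displayed one (which is missing the factor $\eta^{1}(t^{1}\mid\theta)$\textemdash evidently a typo in the statement); your subsequent use of $\eta^{1}(t^{1}\mid\theta)\ge\delta$ shows you are tacitly working with the correct expression, so the argument stands, and your discussion of the degenerate non-realizable $t^{j}$ is a useful addition the paper omits.
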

To establish this fact, we note that there is some $\theta_{t^{j}}$
such that $\widehat{\eta}^{j}(t^{j}\mid\theta_{t^{j}})=1$, and the
denominator is at most $1$ since it is the prior probability of the
signal $t^{1}$ under the information structure associated with $\widehat{\eta}^{1}$.

\bigskip{}

Now we prove\textbf{ }Lemma \textbf{\ref{lem:first-passage-bound}}.
Let $(\widehat{W}_{n})_{n}$ be a stochastic process corresponding
to the Markov matrix $\widehat{B}$. Defining the function $\iota:S\to N$
by $\iota(t^{i})=i$, we have a coupling between the chain $(\widehat{W}_{n})_{n}$
and a chain on $N$, the set of agents, with transition matrix $\Gamma$. 

\subsubsection*{\emph{Case 1: $z'\protect\notin T^{1}$}}

Let us analyze the first passage time to some $z'\notin T^{1}$. Starting
from any $z\in S$, the mean first passage time of the process $(\iota(\widehat{W}_{n}))_{n}$
to $1$ (the state corresponding to agent $1$) is at most $1/\gamma_{\min}$.
Then every time the process visits a state in $T^{1}$, it has probability
at least $\delta\rho_{\min}^{1}\gamma_{\min}$ of visiting $z'$,
by Fact \ref{fact:ignorant-bound}. Conditional on not visiting it
at this time, we wait on average $1/\gamma_{\min}$ steps for the
process to return to a state in $T^{1}$ and have another $\delta\rho_{\min}^{1}\gamma_{\min}$
chance at visiting $z'$. Thus, using the formula for the expectation
of a geometric random variable, we have
\[
M_{\widehat{B}}(z,z')\leq\frac{1}{\delta\rho_{\min}^{1}\gamma_{\min}^{2}}\text{\quad}\text{whenever }z'\notin T^{i}.
\]

\subsubsection*{\emph{Case 2: $z'\in T^{1}$}}

Let $z'=t^{1}.$ If $z=t^{j}\notin T^{1}$, then there is a $\theta_{t^{j}}$
such that $\widehat{\pi}^{j}(\theta_{t^{j}}\mid t^{j})=1$. Then 
\[
\widehat{\pi}^{j}(t^{1}\mid t^{j})=\eta^{1}(t^{1}\mid\theta_{t^{j}}),
\]
which is at least $\delta$ by the definition of ``at least $\delta$-noisy.''
Thus, every time the process $(\widehat{W}_{n})_{n}$ visits any state
in $S\setminus T^{1}$, it has probability at least $\delta$ of visiting
$z'$. If $z\in T^{1}$, then the process surely visits the set $S\setminus T^{1}$
one step later. Thus the process takes at most two steps to be in
a position where it has probability $\delta$ of visiting $z$. By
the same reasoning discussed above about a geometric random variable,
we conclude that 
\[
M_{\widehat{B}}(z,z')\leq\frac{2}{\delta}.
\]

\section{For Online Publication: Discussion of Assumptions and Variants of
our Results\label{sec:discussion}}

We now discuss robustness and extensions of our results (Sections
\ref{subsec:Irreducibility} and \ref{subsec:Heterogeneous}), as
well as their context and broader implications (Sections \ref{subsec: samet}
through \ref{de groot-1}). More technical issues are postponed to
Appendix \ref{sec:Additional-Discussion}. 

\subsection{Joint Connectedness\label{sec:irred}}

\label{subsec:Irreducibility}An assumption maintained throughout
was a joint connectedness condition (recall Section \ref{sec:joint-connected}),
which amounts to the interaction structure $B$ being irreducible.
In the present subsection, we no longer treat this condition as a
maintained assumption, and examine its content and what can be said
without it. Proposition \ref{prop:irred-1} reviews a characterization
of the irreducibility condition: It is equivalent to the agent-type
vector $p$ being strictly positive. We then relate the condition
to properties of the primitives $\Gamma$ and $\bm{\bm{\pi}}$. Finally,
we discuss results that hold under weakenings of the assumption. 

\subsubsection{Relations to Beliefs and the Network\label{sec:irreducibility-primitives}}

Some key properties of the network and beliefs will feature in our
characterization of irreducibility. A network $\Gamma$ is \emph{complete}
if $\gamma^{ij}>0$ for all $i$ and $j$. \ Beliefs $\bm{\pi}$
have \emph{full support marginals} if $\pi^{i}\left(t^{j}\mid t^{i}\right)>0$
for all agents $i$ and $j$, and all signals $t^{i}\in T^{i}$, $t^{j}\in T^{j}$.
Event $G\subseteq T$ is a \emph{product event} if $G=\prod_{i\in N}G_{i}$,
where $G^{i}\subseteq T^{i}$ for each $i$. Say that a product event
$G=\prod_{i\in N}G_{i}$, is a \emph{public} or \emph{closed} event
(under beliefs $\bm{\pi}$) if, for each agent $i$ and each signal
$t^{i}\in G^{i}$, the following implication holds for any $t^{-i}\in T^{-i}$:
\[
\pi^{i}\left(t^{-i}\mid t^{i}\right)>0\hspace{0.2in}\implies\hspace{0.2in}(t^{i},t^{-i})\in G\text{.}
\]
A public or closed event is one that, when it occurs, is common certainty
among all the agents: For any observed signal, no probability is assigned
any signal outside the event. \emph{Beliefs $\bm{\pi}$ are connected}
if $\varnothing$ and $T$ are the only public events. This corresponds
to the notion of no (nontrivial) common certainty: Every nontrivial
product event has a connection (via beliefs placed by some agent)
to states outside itself. A subset of agents $J\subseteq N$ is \emph{closed}
if $i\in J$ and $\gamma^{ij}>0$ implies $j\in J$. \ A \emph{network
$\Gamma$ is connected} if $\varnothing$ and $N$ are the only closed
subsets of the agent set $N$. Recall that a network is a \emph{complete
}if $\gamma^{ij}>0$ for all $i$ and $j$. 

The properties mentioned so far are restrictions on either the beliefs
or the network, but not both. The property of joint connectedness
from Section \ref{sec:joint-connected} is a joint restriction on
both. The following result relates the two sorts of conditions. 
\begin{prop}
\label{prop:irred-1} The matrix $B$ is irreducible if and only if
beliefs and the network are jointly connected. Necessary conditions
for this are: 

\begin{enumerate}
\item Beliefs are connected. 
\item The network is connected. 
\end{enumerate}
Sufficient conditions for this are: 

\begin{enumerate}
\item The network is complete and beliefs are connected. 
\item The network is connected and beliefs have full support marginals. 
\end{enumerate}
\end{prop}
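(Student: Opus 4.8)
The plan is to translate everything into the directed graph on $S$ in which $t^i\to t^j$ is an edge precisely when $B(t^i,t^j)>0$, and to use the standard fact that a row‑stochastic matrix on a finite set is irreducible if and only if this graph is strongly connected, equivalently if and only if the only subsets of $S$ with no edge leaving them (the \emph{closed} sets) are $\varnothing$ and $S$. The first observation is that $B(t^i,t^j)=\gamma^{ij}\pi^i(t^j\mid t^i)>0$ exactly when $\gamma^{ij}>0$ and $\pi^i(t^j\mid t^i)>0$, which by the Section~\ref{sec:joint-connected} definition is exactly the relation ``$t^j$ is a neighbor of $t^i$.'' Rewriting joint connectedness — ``every nonempty proper $S'\subsetneq S$ contains a signal that is a neighbor of a signal outside $S'$'' — in terms of the complement $C=S\setminus S'$ turns it word‑for‑word into ``every nonempty proper $C$ has a positive edge leaving it,'' i.e.\ $C$ is not closed. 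Hence $B$ is irreducible $\iff$ beliefs and the network are jointly connected, which is the main equivalence.

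\textbf{Necessary conditions.} I would argue both by contraposition, using the ``no nonempty proper closed set'' form of irreducibility. If beliefs are not connected, there is a public product event $G=\prod_i G^i$ with every $G^i\neq\varnothing$ and some $G^i\subsetneq T^i$; then $S_G:=\bigcup_i G^i$ is nonempty and proper (pick $t^i\in T^i\setminus G^i$; by disjointness of signal sets $t^i\notin S_G$), and it is $B$‑closed, because $t^i\in G^i$ and $B(t^i,t^j)>0$ force $\pi^i(t^j\mid t^i)>0$, so a joint realization witnessing this positive marginal lies in $G$ by publicness, giving $t^j\in G^j\subseteq S_G$. Thus $B$ is reducible. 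Similarly, if the network is not connected there is a closed agent set $J$ with $\varnothing\neq J\subsetneq N$, and then $\bigcup_{i\in J}T^i$ is a nonempty proper $B$‑closed set. (One could alternatively deduce network‑connectedness from the coupling $V(n)=\iota(W(n))$ used earlier, since the image of an irreducible chain under a vertex‑surjective map that sends positive edges to positive edges is irreducible, and that image has transition matrix $\Gamma$.)

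\textbf{Sufficient conditions.} For ``network connected $+$ full‑support marginals'' I would build a path directly: given $t^i,t^j\in S$, connectedness of $\Gamma$ yields a directed walk $i=i_0\to i_1\to\cdots\to i_m=j$ with every $\gamma^{i_k i_{k+1}}>0$; choosing $s_0=t^i$, $s_m=t^j$, and any $s_k\in T^{i_k}$ in between, each $B(s_k,s_{k+1})=\gamma^{i_k i_{k+1}}\pi^{i_k}(s_{k+1}\mid s_k)>0$ since both factors are positive, so $t^j$ is reachable from $t^i$; hence $B$ is irreducible. For ``network complete $+$ beliefs connected'' I would argue by contradiction: suppose $C\subsetneq S$ is nonempty and $B$‑closed. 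Then $C$ meets every $T^k$: for $t^i\in C$, closedness and completeness give $\pi^i(t^k\mid t^i)=\tfrac1{\gamma^{ik}}B(t^i,t^k)=0$ for all $t^k\notin C$, yet $\sum_{t^k\in T^k}\pi^i(t^k\mid t^i)=1$, so some $t^k\in C\cap T^k$. Hence each $G^k:=C\cap T^k$ is nonempty, and some $G^i\subsetneq T^i$ as $C\subsetneq S$. Finally $G=\prod_k G^k$ is public: if $t^i\in G^i$ and $\pi^i(t^{-i}\mid t^i)>0$, then every marginal $\pi^i(t^k\mid t^i)>0$, so $B(t^i,t^k)>0$ by completeness, so $t^k\in C\cap T^k=G^k$. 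This contradicts connectedness of beliefs, so no such $C$ exists.

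\textbf{Expected main obstacle.} The only genuinely delicate point is the last sufficiency direction: before a $B$‑closed set can be packaged into a public event one must first notice, using ``marginals sum to $1$'' together with completeness of $\Gamma$, that such a set automatically intersects every agent's signal set. Everything else is bookkeeping once the dictionary ``positive $B$‑edge $=$ neighbor relation'' and the strong‑connectivity characterization of irreducibility are in place.
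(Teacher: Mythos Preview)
Your proof is correct and follows essentially the same approach as the paper: both reduce the main equivalence to the observation that joint connectedness is, once translated via the identity $B(t^i,t^j)>0\iff(\gamma^{ij}>0\text{ and }\pi^i(t^j\mid t^i)>0)$, literally the statement that no nonempty proper subset of $S$ is closed under $B$, i.e.\ irreducibility. The paper's own proof is a one-liner (``just a rewriting \ldots\ the two sufficient conditions are strengthenings of this property'') and does not spell out the necessary conditions or the details of the sufficient ones; your version supplies exactly those missing details, and the arguments you give---building the public product event from a $B$-closed set in the ``complete network $+$ connected beliefs'' direction, and the direct path construction in the ``connected network $+$ full-support marginals'' direction---are the natural ones.
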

\begin{proof}
The ``if and only if'' part is just a rewriting of the statement
that there are no nonempty, proper closed communicating classes in
the Markov process corresponding to $B$. The two sufficient conditions
are strengthenings of this property. 
\end{proof}
The following example illustrates that requiring a connected network
and connected beliefs separately is not sufficient for irreducibility.
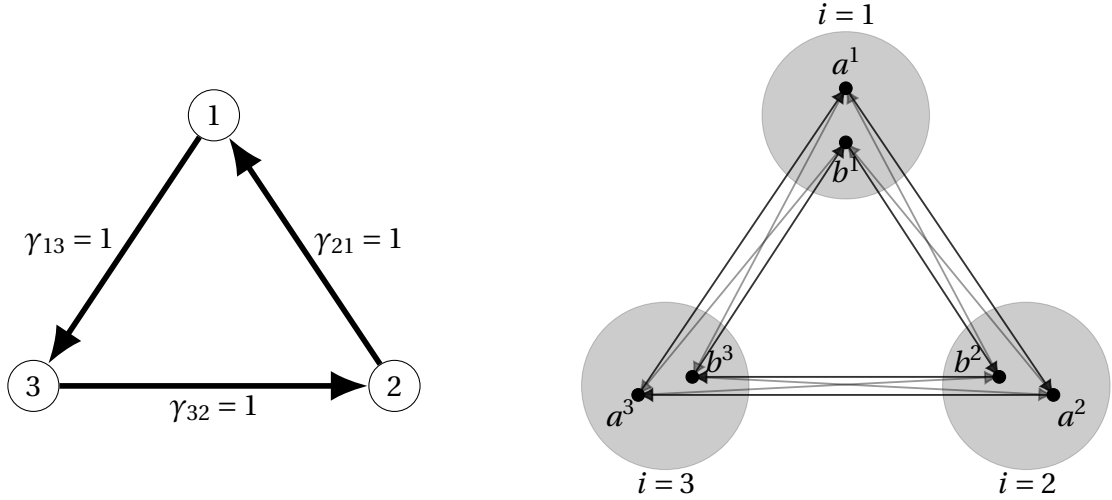
\begin{figure}
\begin{centering}
\begin{tikzpicture}[scale=1.2]   
\coordinate [label=left:{$\gamma_{13}=1$}] (zero) at (2,1.6); 
\coordinate [label=right:{$\gamma_{21}=1$}] (zero) at (4,1.6); 
\coordinate [label=below:{$\gamma_{32}=1$}] (zero) at (3,0); 
\node[circle, draw] (1) at (3,3) {$1$};  
\node[circle, draw] (3) at (1,0) {$3$};  
\node[circle, draw] (2) at (5,0) {$2$}; 
\draw[line width=.75mm,-{Latex[length=5mm, width=4mm]}] (1) to (3); 
\draw[line width=.75mm,-{Latex[length=5mm, width=4mm]}] (2) to (1); 
\draw[line width=.75mm,-{Latex[length=5mm, width=4mm]}] (3) to (2);


\begin{scope}[transparency group, opacity=0.2] \draw[fill=black] (8,0) circle (5ex); 
\end{scope} 
\begin{scope}[transparency group, opacity=0.2] \draw[fill=black] (12,0) circle (5ex); \end{scope} 
\begin{scope}[transparency group, opacity=0.2] \draw[fill=black] (10,3) circle (5ex); \end{scope}
\coordinate [label=below:{$i=3$}] (a) at (8,-.85);   
\coordinate [label=below:{$i=2$}] (a) at (12,-.85);  
\coordinate [label=below:{$i=1$}] (a) at (10,4.35);  
\filldraw[black] (7.7,-.1) circle (2pt) ; 
\filldraw[black] (8.3,.1) circle (2pt) ; 
\coordinate [label=center:{$a^3$}] (a) at (7.5,-.3);   
\coordinate [label=center:{$b^3$}] (a) at (8.6,.3);                       
\filldraw[black] (12.3,-.1) circle (2pt) ; 
\filldraw[black] (11.7,.1) circle (2pt) ; 
\coordinate [label=center:{$a^2$}] (a) at (12.5,-.3);   
\coordinate [label=center:{$b^2$}] (a) at (11.35,.3);  
\filldraw[black] (10,2.7) circle (2pt) ; 
\filldraw[black] (10,3.3) circle (2pt) ; 
\coordinate [label=center:{$b^1$}] (a) at (10,2.4);   
\coordinate [label=center:{$a^1$}] (a) at (10,3.6);  
\begin{scope}[transparency group, opacity=0.8] 
\draw[line width=.25mm,-{Latex[length=2mm, width=2mm]}]  (7.7, -.1) --(10, 3.3); \end{scope} 
\begin{scope}[transparency group, opacity=0.8] \draw[line width=.25mm,-{Latex[length=2mm, width=2mm]}] (10, 3.3)--(12.3,-.1) ; \end{scope} 
\begin{scope}[transparency group, opacity=0.8] \draw[line width=.25mm,-{Latex[length=2mm, width=2mm]}] (12.3,-.1)--(7.7, -.1)  ; \end{scope}
\begin{scope}[transparency group, opacity=0.8] \draw[line width=.25mm,-{Latex[length=2mm, width=2mm]}]  (8.3, .1) --(10, 2.7); \end{scope} 
\begin{scope}[transparency group, opacity=0.8] \draw[line width=.25mm,-{Latex[length=2mm, width=2mm]}] (10, 2.7)--(11.7,.1) ; \end{scope} 
\begin{scope}[transparency group, opacity=0.8] \draw[line width=.25mm,-{Latex[length=2mm, width=2mm]}] (11.7,.1)--(8.3, .1)  ; \end{scope}
\begin{scope}[transparency group, opacity=0.4] \draw[line width=.25mm,-{Latex[length=2mm, width=2mm]}]  (10, 2.7)--(7.7, -.1); \end{scope} 
\begin{scope}[transparency group, opacity=0.4] \draw[line width=.25mm,-{Latex[length=2mm, width=2mm]}]  (10, 3.3)--(8.3, .1); \end{scope}
\begin{scope}[transparency group, opacity=0.4] \draw[line width=.25mm,-{Latex[length=2mm, width=2mm]}]  (7.7, -.1)--(11.7,.1) ; \end{scope} \begin{scope}[transparency group, opacity=0.4] \draw[line width=.25mm,-{Latex[length=2mm, width=2mm]}]  (8.3, .1)--(12.3, -.1); \end{scope}
\begin{scope}[transparency group, opacity=0.4] \draw[line width=.25mm,-{Latex[length=2mm, width=2mm]}]  (11.7,.1)--(10, 3.3) ; \end{scope} \begin{scope}[transparency group, opacity=0.4] \draw[line width=.25mm,-{Latex[length=2mm, width=2mm]}]  (12.3, -.1)--(10, 2.7); \end{scope} 
\end{tikzpicture}
\par\end{centering}
\caption{An example illustrating that imposing connectedness of the network
and of beliefs is not sufficient to ensure joint connectedness, i.e.
irreducibility of the interaction structure $B$.}
\label{fig:counterexample-joint-connectedness}
\end{figure}

\begin{example*}
\label{ex:irreducibility-conditions-1} Suppose that there are three
agents and each agent observes one of two signals, so that $T^{i}=\{a^{i},b^{i}\}$.
The network is given by a cycle, 

\[
\Gamma=\left(\begin{array}{ccc}
0 & 1 & 0\\
0 & 0 & 1\\
1 & 0 & 0
\end{array}\right),
\]
and the information structure is such that each agent $i$ is sure
that agent $i+1$ has observed the same signal and that agent $i-1$
has observed a different signal (under mod $3$ arithmetic for agent
indices). Figure \ref{fig:counterexample-joint-connectedness} illustrates
the network and the information structure. In this example, the network
is connected, and the beliefs are connected; nevertheless, $B$ is
not irreducible, because beliefs and the network are not jointly connected.
The subset $\left\{ a^{1},a^{2},a^{3}\right\} $ places no weight
in $B$ on its complement.

\medskip{}
\end{example*}
The conditions we have discussed have placed restrictions on beliefs
directly, rather than on observable consequences. In Appendix \ref{sec:no-trade},
we give a ``no trade'' behavioral characterization of irreducibility.

\subsubsection{Consensus without Irreducibility\label{subsec:Consensus-without-Irreducibility}}

All our results have analogues when irreducibility fails. As we saw
in Section \ref{sec:consensus-expectation}, what matters for the
limit of $x(n;y)$ as $n\to\infty$ is the behavior of $B^{n}$. This
can be characterized quite generally based on the graph described
in Section \ref{sec:consensus-and-network}. The general result can
be found in many textbooks \citep*[e.g.][Section 8.4]{Meyer2000},
and we summarize it informally. First, consider the set $S_{A}$,
defined as the set of absorbing states in $S$ according to the transition
matrix $B$. For such $t^{i}$, the analysis of $x(n;y)$ can proceed
exactly as in Section \ref{sec:consensus-expectation}, restricting
$B$ to the maximal strongly connected component containing $t^{i}$. 

The simplest case is when $S$ can be partitioned into several strongly
connected components (so $S=S_{A}$) with $B$ having no edges between
these components. This occurs, for instance, if there are exactly
two public (product) events. Then the analysis can be done on each
of these components separately. That is, the analysis can be done
\emph{conditional on public information}. More generally, when there
are public events, our assertion that the consensus expectation is
nonrandom (recall Section \ref{sec:consensus-expectation}) really
means that it is nonrandom conditional on the public event that has
occurred (and which, by definition of its being public, is common
knowledge).

Now suppose there \emph{are }some nonabsorbing states. For each non-absorbing
state $t^{i}\notin S_{A}$, the corresponding row of $B^{\infty}$
is a distribution that allocates mass (in a particular way) across
the set $S_{A}$ of absorbing states. 

An important case is relevant to several of our discussions. When
$S_{A}$ consists of exactly one strongly connected component (though
it may be a strict subset of $S$), we can refine our statements above
to obtain the following generalization of Proposition \ref{prop:char}:
\begin{prop}
\label{prop:char-1}If $S_{A}$ has exactly one strongly connected
component, the consensus expectation exists and
\begin{equation}
c(y;\bm{\pi},\Gamma)=\sum_{t^{i}\in S_{A}}p(t^{i})E^{i}[y\mid t^{i}],\label{eq:c-in-terms-of-p-1}
\end{equation}
where $p\in\Delta(S_{A})$, called the vector of \emph{agent-type
weights}, is the stationary distribution of $B_{S_{A}}$ ($B$ restricted
to $S_{A}$), i.e. the unique vector in $p\in\Delta(S_{A})$ satisfying
$pB_{S_{A}}=p$. Moreover, all entries of $p$ are positive.
\end{prop}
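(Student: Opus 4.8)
The plan is to rerun the argument used for Proposition~\ref{prop:char} in Appendix~\ref{subsec:Existence-and-Characterization-Proof}, replacing the irreducibility of the full matrix $B$ by the hypothesis on $S_{A}$. First I would set $z=Fy$ and
\[
x(\beta)=(1-\beta)\sum_{n=0}^{\infty}\beta^{n}B^{n}z=(1-\beta)(I-\beta B)^{-1}z,
\]
the second equality being valid because $\beta B$ has spectral radius $\beta<1$, so that $(I-\beta B)\,x(\beta)=(1-\beta)z$. Since each $B^{n}$ is row-stochastic, $x(\beta)$ is a convex combination of the vectors $B^{n}z$, hence no entry of $x(\beta)$ exceeds $\max_{t^{i}}|z(t^{i})|$ in absolute value; so the family $\{x(\beta)\}$ lies in a compact set. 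It will therefore suffice to show that along any sequence $\beta_{\kappa}\uparrow1$ for which $x(\beta_{\kappa})$ converges, the limit is the \emph{same} constant vector $a\mathbf{1}$ with $a=\sum_{t^{i}\in S_{A}}p(t^{i})E^{i}[y\mid t^{i}]$; existence of the limit and the formula in (\ref{eq:c-in-terms-of-p-1}) then follow.

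The next step records the structural input. Because $S_{A}$ is the union of the recurrent communicating classes of $B$ and is assumed to be a single strongly connected component, $S_{A}$ is closed under $B$, it is the \emph{unique} recurrent class, and $B_{S_{A}}$ is irreducible; by Perron--Frobenius it has a unique stationary distribution $p\in\Delta(S_{A})$ with strictly positive entries, which gives the last sentence of the proposition. Let $\widetilde p\in\Delta(S)$ be the extension of $p$ by zeros off $S_{A}$. Using closedness of $S_{A}$ --- so that $B(t^{i},t^{j})=0$ whenever $t^{i}\in S_{A}$ and $t^{j}\notin S_{A}$, and $B_{S_{A}}$ is genuinely stochastic --- one checks directly that $\widetilde p B=\widetilde p$. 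Moreover, since $S_{A}$ is the unique recurrent class, the chain is absorbed in $S_{A}$ from any starting state with probability $1$.

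Now I would fix a convergent subsequence $x(\beta_{\kappa})\to x$. Letting $\beta_{\kappa}\uparrow1$ in $(I-\beta_{\kappa}B)x(\beta_{\kappa})=(1-\beta_{\kappa})z$ gives $x=Bx$, hence $x=B^{n}x$ for all $n$, so $x$ is a bounded $B$-harmonic function. On the irreducible finite class $S_{A}$ such a function is constant, and since absorption in $S_{A}$ is certain from every state, $x$ is constant on all of $S$, say $x=a\mathbf{1}$. (Equivalently, one may cite the description of $\frac{1}{N}\sum_{n<N}B^{n}$ in \citet*[Section 8.4]{Meyer2000}: its limit has every row equal to $\widetilde p$, so $x=B^{n}x$ forces $x=(\widetilde p x)\mathbf{1}$.) To identify $a$, premultiply $(I-\beta_{\kappa}B)x(\beta_{\kappa})=(1-\beta_{\kappa})z$ by $\widetilde p$; since $\widetilde p B=\widetilde p$ the left-hand side is $(1-\beta_{\kappa})\widetilde p x(\beta_{\kappa})$, and cancelling $(1-\beta_{\kappa})$ yields $\widetilde p x(\beta_{\kappa})=\widetilde p z$. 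Passing to the limit, $a=\widetilde p x=\widetilde p z=\sum_{t^{i}\in S_{A}}p(t^{i})(Fy)(t^{i})=\sum_{t^{i}\in S_{A}}p(t^{i})E^{i}[y\mid t^{i}]$, independently of the subsequence. This proves the proposition.

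The one genuinely delicate point is the constancy of $x$: because $S_{A}$ may be a proper, and possibly periodic, subset of $S$, neither $\lim_{n}B^{n}$ need exist nor is the usual aperiodic-irreducible argument available. The way around this is exactly to work with the Abel averages $x(\beta)$ (or with the Cesàro averages $\frac{1}{N}\sum_{n<N}B^{n}$) rather than with $B^{n}$ itself, and to exploit that ``$S_{A}$ is a single strongly connected component'' forces $S_{A}$ to be the unique recurrent class --- which is precisely what makes every bounded $B$-harmonic function constant and $\widetilde p$ the unique stationary distribution of $B$, so that the right-hand side of (\ref{eq:c-in-terms-of-p-1}) is well-posed.
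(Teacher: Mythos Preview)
Your proof is correct and is exactly the adaptation of the Appendix~\ref{subsec:Existence-and-Characterization-Proof} argument that the paper has in mind: the paper does not spell out a separate proof of Proposition~\ref{prop:char-1}, but simply presents it as the obvious refinement of Proposition~\ref{prop:char} once one invokes the standard structure theory of finite Markov chains with a unique recurrent class (citing \citet*[Section~8.4]{Meyer2000}). Your write-up supplies precisely those missing details---extending $p$ by zeros to get a stationary vector $\widetilde p$ for the full $B$, and verifying that a bounded $B$-harmonic vector is constant because absorption into $S_{A}$ is certain---and otherwise tracks the compactness/subsequence/premultiply-by-$\widetilde p$ argument verbatim.
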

This specializes to Proposition \ref{prop:char} in case $S_{A}=S$.
In general, the consensus expectation still exists and is unique,
which is what we need for the examples of Section \ref{sec:optimism},
where irreducibility fails to hold. 

Our results do not hold if we relax our maintained finiteness assumption:
in Appendix \ref{subsec:Non-Existence} we report an example of \citet*{Hellman2011}
showing this. 

\subsection{Heterogeneous Coordination Weights and Their Relation to Self-Weights
in the Network\label{subsec:Heterogeneous}}

In the linear best-response game, we assumed that all agents put a
common weight $\beta$ on others' actions, and studied the limit $\beta\uparrow1$.
We again maintain the assumption that $\Gamma$ is irreducible and
consider now a more general class of environments, characterized by
$(\Gamma,\boldsymbol{\beta},y)$, where $\boldsymbol{\beta}=\left(\beta^{i}\right)_{i\in N}$
is a profile of agent-specific weights. In the coordination game associated
with such an environment, the linear best responses are given by 
\begin{equation}
a^{i}=(1-\beta^{i})E^{i}y+\beta^{i}{\displaystyle \sum\limits _{j\neq i}\gamma^{ij}E^{i}a^{j}}.\label{eq:coordination-game-different-beta}
\end{equation}
Paralleling our main study, we can ask what happens as $\beta^{i}\rightarrow1$
simultaneously across $i$. As we show in this section, this issue
is closely related to ``self-weights'' $\gamma^{ii}$ in the network. 

First, we consider some simple examples. Suppose $|N|=2$, with the
network 
\[
\Gamma=\left(\begin{array}{cc}
0 & 1\\
1 & 0
\end{array}\right).
\]
If $\beta^{2}=1$ and $\beta^{1}<1$, then in the limit $\beta^{1}\uparrow1$,
iterating the (modified) best-response equation $a^{i}=(1-\beta^{i})E^{i}y+\beta{\displaystyle ^{i}\sum\limits _{j\neq i}\gamma^{ij}E^{i}a^{j}}$
shows that we would have a convention given by\footnote{See Appendix \ref{periodicity-1} for discussion of such limits, called
\emph{simple }higher-order expectations, and related calculations.}
\[
\lim_{n\to\infty}\left[E^{1}E^{2}\right]^{n}E^{1}y.
\]
On the other hand, suppose $|N|=2$, $\beta^{1}=1$, and $\beta^{2}<1$.
Now, in the limit $\beta^{2}\uparrow1$, we would symmetrically have
a convention given by the (different)\ simple higher-order expectation
\[
\lim_{n\to\infty}\left[E^{2}E^{1}\right]^{n}E^{2}y
\]
Thus, with agent-specific self-weights\textemdash the corresponding
convention will depend on the details of how the limit $\left(\beta^{1},\ldots,\beta^{|N|}\right)\rightarrow\left(1,\ldots,1\right)$
is taken\textemdash in particular, whose $\beta^{i}$ converges faster
to $1$.

We briefly sketch how our analysis can be adapted to this case by
changing the network. In particular, we will show how the linear best-response
game with weights $\left(\beta^{1},\ldots,\beta^{|N|}\right)$ and
network $\Gamma$ has the same solution as the game with a common
coordination weight $\widehat{\beta}$ (that depends on $\left(\beta^{1},\ldots,\beta^{|N|}\right)$)
and an alternative network $\widehat{\Gamma}$. The alternative network
$\widehat{\Gamma}$ may, in general, have nonzero self-weights ($\widehat{\gamma}^{ii}>0$
for some $i$) even if $\Gamma$ has zero self-weights ($\gamma^{ii}=0$
for all $i$). The transformation applies to any $\Gamma$, with or
without positive entries on its diagonal.

Note that, in general, consensus expectations were defined allowing
the possibility of positive self-weights in $\Gamma$, and our analysis
of their basic properties (e.g., Propositions \ref{prop:char}, \ref{prop:representation}
and \ref{prop:cpa-signals}) applies in that case as well. Positive
self-weights are unnatural in some applications. For instance, in
the game with the interpretation that each agent is a single player,
one's best response cannot (by definition) depend on one's own action.
On the other hand, there are other applications where self-weights
have reasonable interpretations. For example, in the game with linear
best responses, if we replace each agent with a continuum of identical
agents (as we have done in the finance application), it would be natural
to think of an agent caring about the average action of individuals
like himself (i.e., in the same class). The same holds in the financial
trading application, assuming there is a possibility that a player
will sell into his own market (whose traders have the same expectation,
and thus the same interim beliefs). In those cases, characterizing
the average action of each population results in the equilibrium equations
we have been studying, but with positive entries permitted on the
diagonal of $\Gamma$. Formally, we can construct an analogue of the
game in Section \ref{subsec:game} and prove, paralleling part of
Fact \ref{fact:rationalizable}, that the game has a unique rationalizable
strategy profile. (The proof is by the same contraction argument used
to prove Fact \ref{fact:rationalizable}.)

To see what that strategy profile is, we describe the transformation
of any environment to one with an agent-independent common weight
on others' actions:
\begin{prop}
\label{prop:heterogeneous-beta}Given an environment with a network
$\Gamma$ and a vector $\boldsymbol{\beta}=(\beta^{i})_{i\in N}$,
define $\widehat{\beta}=\underset{i\in N}{\max}\beta^{i}$ and define
$\widehat{\Gamma}$ by
\[
\widehat{\gamma}^{ii}=\frac{\widehat{\beta}-\beta^{i}}{\widehat{\beta}\left(1-\beta^{i}\right)}\text{ and }\widehat{\gamma}^{ij}=\gamma^{ij}\left(1-\gamma^{ii}\right).
\]
For any $y$, the environments described by $(\widehat{\Gamma},\widehat{\beta},y)$
and $(\Gamma,\boldsymbol{\beta},y)$ have identical play in their
respective unique rationalizable strategy profiles.
\end{prop}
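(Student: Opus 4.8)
The plan is to reduce the statement to Fact \ref{fact:rationalizable} by showing the two games have literally the same best-response fixed point. I would first record two routine preliminaries. (i) $\widehat\Gamma$ is a genuine row-stochastic network: from $\widehat\beta=\max_i\beta^i$ one gets $0\le\widehat\gamma^{ii}\le1$ and $\widehat\gamma^{ij}\ge0$, while each row sums to $1$ because, with $\gamma^{ii}=0$ (the game of Section \ref{subsec:game}), $\widehat\gamma^{ii}+\sum_{j\ne i}\widehat\gamma^{ij}=\widehat\gamma^{ii}+(1-\widehat\gamma^{ii})\sum_{j\ne i}\gamma^{ij}=1$. (ii) The best-response maps of $(\widehat\Gamma,\widehat\beta,y)$ and of $(\Gamma,\boldsymbol\beta,y)$ are both contractions of modulus $\widehat\beta<1$ on $[0,M]^S$ — the possibly positive self-weights $\widehat\gamma^{ii}$ do not spoil this, since $\|\widehat{\mathrm{BR}}(s)-\widehat{\mathrm{BR}}(s')\|_\infty\le\widehat\beta\max_i\sum_j\widehat\gamma^{ij}\,\|s-s'\|_\infty=\widehat\beta\,\|s-s'\|_\infty$. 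By the iterated-deletion argument of Section \ref{subsec:proof-of-rationalizable} (which the text notes carries over to heterogeneous coordination weights and to networks with self-weights), each game therefore has a unique rationalizable strategy profile, equal to its unique best-response fixed point. So it suffices to prove the two fixed-point equations have the same solution.

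The substance is a one-line computation per agent. A profile $s$ is a best-response fixed point of $(\widehat\Gamma,\widehat\beta,y)$ iff for every $i$
\[
s^i=(1-\widehat\beta)\,E^iy+\widehat\beta\,\widehat\gamma^{ii}\,E^is^i+\widehat\beta\sum_{j\ne i}\widehat\gamma^{ij}\,E^is^j .
\]
Since $s^i$ is $T^i$-measurable we have $E^is^i=s^i$; moving that term to the left and dividing by $1-\widehat\beta\widehat\gamma^{ii}$ gives
\[
s^i=\frac{1-\widehat\beta}{1-\widehat\beta\widehat\gamma^{ii}}\,E^iy+\frac{\widehat\beta}{1-\widehat\beta\widehat\gamma^{ii}}\sum_{j\ne i}\widehat\gamma^{ij}\,E^is^j .
\]
Plugging in $\widehat\gamma^{ii}=\dfrac{\widehat\beta-\beta^i}{\widehat\beta(1-\beta^i)}$ yields the identities $1-\widehat\beta\widehat\gamma^{ii}=\dfrac{1-\widehat\beta}{1-\beta^i}$ and $1-\widehat\gamma^{ii}=\dfrac{\beta^i(1-\widehat\beta)}{\widehat\beta(1-\beta^i)}$; the first collapses the coefficient on $E^iy$ to $1-\beta^i$, and, combined with $\widehat\gamma^{ij}=\gamma^{ij}(1-\widehat\gamma^{ii})$ from the second, it collapses the coefficient on $E^is^j$ to exactly $\beta^i\gamma^{ij}$. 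Hence the equation becomes
\[
s^i=(1-\beta^i)\,E^iy+\beta^i\sum_{j\ne i}\gamma^{ij}\,E^is^j ,
\]
which is precisely the best-response fixed-point condition for $(\Gamma,\boldsymbol\beta,y)$. Each step is reversible, so the two systems have the same unique solution; by the preliminaries it is the unique rationalizable profile of both games, which is the claim.

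I do not anticipate a real obstacle: the only substantive idea is that heterogeneity of $\boldsymbol\beta$ can be ``hidden'' inside self-weights $\widehat\gamma^{ii}$, after which $E^is^i=s^i$ lets one solve for $s^i$. What needs a little care is the bookkeeping. The divisions above use $\widehat\beta<1$, and the formula for $\widehat\gamma^{ii}$ presupposes $\widehat\beta>0$, so the wholly degenerate case $\widehat\beta=0$ (all $\beta^i=0$, both games giving $s^i=E^iy$) must be handled directly; one should also check that boundary parameter values behave consistently ($\beta^i=\widehat\beta\Rightarrow\widehat\gamma^{ii}=0$, and $\beta^i=0\Rightarrow\widehat\gamma^{ii}=1$ with fixed point $s^i=E^iy$). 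Finally, for a primitive $\Gamma$ with nonzero diagonal I would first absorb that diagonal by the same $E^is^i=s^i$ trick — replacing $(\beta^i,\gamma^{ij})$ by $\beta^i(1-\gamma^{ii})/(1-\beta^i\gamma^{ii})$ and $\gamma^{ij}/(1-\gamma^{ii})$ for $j\ne i$, which leaves the solution unchanged — thereby reducing to the zero-diagonal case for which the stated formulas are written.
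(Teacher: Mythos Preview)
Your proof is correct and follows essentially the same route as the paper's: start from the fixed-point equation in $(\widehat\Gamma,\widehat\beta,y)$, use $E^is^i=s^i$ to move the self-weight term to the left, divide by $1-\widehat\beta\widehat\gamma^{ii}$, and simplify the two coefficients via the defining formula for $\widehat\gamma^{ii}$ to recover the $(\Gamma,\boldsymbol\beta,y)$ best-response equation; uniqueness then finishes. Your version is slightly more careful than the paper's in recording that $\widehat\Gamma$ is row-stochastic, that the contraction argument still applies with self-weights, and in treating the boundary cases $\widehat\beta=0$ and $\beta^i\in\{0,\widehat\beta\}$, but the substance is identical.
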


\subsubsection{Proof of Proposition  \ref{prop:heterogeneous-beta}}

By Fact \ref{fact:rationalizable}, there is a unique rationalizable
strategy profile in environment $(\widehat{\Gamma},\widehat{\beta},y)$.
In that strategy profile, player $i$'s action given his signal satisfies
\[
a^{i}=(1-\widehat{\beta})E^{i}y+\widehat{\beta}\sum\limits _{j}\widehat{\gamma}^{ij}E^{i}a^{j}.
\]
Splitting the $j=i$ term out of the last summation, and then using
the definition $\widehat{\gamma}^{ij}=\gamma^{ij}\left(1-\gamma^{ii}\right)$,
we have
\[
a^{i}=(1-\widehat{\beta})E^{i}y+\widehat{\beta}\widehat{\gamma}^{ii}E^{i}a^{i}+\widehat{\beta}\left(1-\widehat{\gamma}^{ii}\right){\displaystyle \sum\limits _{j\neq i}\gamma^{ij}E^{i}a^{j}\text{.}}
\]
Rearranging and using $E^{i}a^{i}=a^{i}$ gives 
\[
\left(1-\widehat{\beta}\widehat{\gamma}^{ii}\right)a^{i}=(1-\widehat{\beta})E^{i}y+\widehat{\beta}\left(1-\widehat{\gamma}^{ii}\right){\displaystyle \sum\limits _{j\neq i}\gamma^{ij}E^{i}a^{j}}
\]
and thus 
\begin{eqnarray}
a^{i} & = & \frac{1-\widehat{\beta}}{1-\widehat{\beta}\widehat{\gamma}^{ii}}E^{i}y+\frac{\widehat{\beta}\left(1-\widehat{\gamma}^{ii}\right)}{1-\widehat{\beta}\widehat{\gamma}^{ii}}{\displaystyle \sum\limits _{j\neq i}\gamma^{ij}E^{i}a^{j}}\nonumber \\
 & = & (1-\beta^{i})E^{i}y+\beta^{i}{\displaystyle \sum\limits _{j\neq i}\gamma^{ij}E^{i}a^{j}},\label{eq:original-system}
\end{eqnarray}
where in the last step we have deduced from the formula $\widehat{\gamma}^{ii}=\frac{\widehat{\beta}-\beta^{i}}{\widehat{\beta}\left(1-\beta^{i}\right)}$
the fact that $\frac{1-\widehat{\beta}}{1-\widehat{\beta}\widehat{\gamma}^{ii}}=1-\beta_{i}$
and $\frac{\widehat{\beta}\left(1-\widehat{\gamma}^{ii}\right)}{1-\widehat{\beta}\widehat{\gamma}^{ii}}=\beta_{i}$. 

Now (\ref{eq:original-system}) is an equilibrium of the coordination
game in environment $(\Gamma,\boldsymbol{\beta},y)$ (recall equation
(\ref{eq:coordination-game-different-beta}) defining that game),
and so by uniqueness of the rationalizable outcome, the proof is complete.

\subsection{Separability and Connection to \citet*{samet1998iteratedA}\label{subsec: samet} }

In Section \ref{sec:consensus-and-network}, we showed that\textemdash fixing
the information structure and network\textemdash there are strictly
positive \emph{pseudopriors} $\left(\lambda_{\bm{\pi},\Gamma}^{i}\right)_{i\in N}$
such that 
\begin{equation}
c(y;\bm{\pi},\Gamma)=\sum_{i}e^{i}\mathbf{E}^{\lambda_{\bm{\pi},\Gamma}^{i}}y.\label{eq:representation-1}
\end{equation}
At the same time, we made the observation\textemdash which we have
now made explicit in the subscripts of $\lambda^{i}$\textemdash that
those pseudopriors may depend on both the information structure $\bm{\pi}$
and the network $\Gamma$. We say an information structure $\bm{\pi}$
satisfies \emph{separability} if the pseudopriors depend \emph{only}
on the information structure: 
\begin{defn}
The information structure $\bm{\pi}$ satisfies \emph{separability}
if there exists a profile $(\lambda_{\bm{\pi}}^{i})_{i\in N}$ such
that, for every irreducible $\Gamma$, we have $\lambda_{\bm{\pi},\Gamma}^{i}=\lambda_{\bm{\pi}}^{i}.$ 
\end{defn}
When separability holds, the asymmetric information affects the consensus
expectation in an additively separable way, with each agent's pseudoprior
being weighted by his eigenvector centrality. Thus the incomplete
information and the network can be analyzed separately. Networks matter
only via the network centrality weights, and the information structure
$\bm{\pi}$ affects only the pseudopriors $\bm{\lambda}=(\lambda_{\bm{\pi}}^{i})_{i\in N}$. 

We can illustrate the failure of separability with an example building
on the one in Case II of Section \ref{subsec:Three-Illustrative-Cases}.
Suppose we have three agents arranged in a cycle as shown in Figure
\ref{fig:case-ii}, with each considering his counterclockwise neighbor
over-optimistic and his clockwise neighbor over-pessimistic. The network
$\Gamma$ in which all weight goes counterclockwise (i.e., $\gamma^{i,i-1}=1$
for all $i$, with indices read modulo 3) gives the maximum consensus
expectation. The network\textemdash call it $\Gamma^{\prime}$\textemdash in
which all weight goes clockwise (i.e., $\gamma^{i,i+1}=1$ for all
$i$, with indices read modulo 3) gives the minimum consensus expectation
given the beliefs. Note that all agents are symmetric in each network.
Thus, in both networks, by symmetry all agents have the same eigenvector
centrality. 

If separability held, then the two networks would have the same consensus
expectation: We have just said that the centralities are the same
across them, and that the information structure also remains the same
if we reverse the direction of each link in the network. Since in
fact the consensus expectation differs (indeed, differs as much as
possible) across the two networks, we have a failure of the separability
property. 

We have already given one sufficient condition for separability in
Section \ref{sec:cpa-signals}: a common prior on signals. Thus the
example described above cannot be consistent with a common prior on
signals. In \citet*{GolubMorris2016} we give a \emph{necessary} condition
for separability. We now informally report the condition in stages.
First, note that for higher-order expectations, and therefore consensus
expectations, the only beliefs about others that enter are \emph{marginal}
distributions over another's signal. An agent is never concerned about
the correlation in the signals of two or more others. This already
suggests that the common prior assumption on signals is more than
we need: Recall from Definition \ref{def:cpa-signals} that the common
prior assumption on signals places strong restrictions on beliefs
about \emph{profiles} of signals. In fact, separability is implied
by a weaker sufficient condition\textemdash one that requires priors
about signals to agree only in their marginals on every agent's signal.
Like the existence of a common prior on signals, such a property puts
no restrictions on agents' beliefs about $\Theta$ conditional on
signals, but it also relaxes substantially the restrictions on beliefs
about signals. 

In \citet*{GolubMorris2016} we show that an even weaker condition
is necessary and sufficient: We call it \emph{higher-order expectation-consistency}.
This condition specifies that we can find a ``pseudoprior'' for
each agent with the property that those pseudopriors have the same
expectations of all random variables in a certain class. The class
consists of all higher-order expectations of random variables that
are $\Theta$-measurable. In effect, this necessary and sufficient
condition imposes only those restrictions on higher-order beliefs
that are relevant to higher-order expectations. 

Our results in both this paper and \citet*{GolubMorris2016} relate
closely to and build on those of \citet*{samet1998iteratedA}. Samet
showed that\textemdash if one fixes a state space and agents' information
on that state space (modeled via a partitional information structure)\textemdash then
higher-order expectations of all random variables converge. If the
common prior assumption holds, they converge to ex ante expectations
under the common prior. Our Proposition \ref{prop:cpa-signals} is
a version of this result; critically, however, the reasoning is applied
not to the whole state space but to the space of signal profiles.
Samet also showed a converse: If all higher-order expectations of
any random variable converge to the same number (depending on the
random variable) regardless of the order in which they are taken,
then the information structure must satisfy the common prior assumption.
We do not have a converse in this paper. The characterization of the
separability result in \citet*{GolubMorris2016}, which we have described
above, is tight and thus is the closest analogue to \citet*{samet1998iteratedA}.
There are many conceptual and technical issues that distinguish our
notion of separability from the properties that matter in \citet*{samet1998iteratedA};
these differences are discussed in detail in \citet*{GolubMorris2016}. 

There is also another important technical and methodological connection
to \citet*{samet1998iteratedA}. We follow \citet*{samet1998iteratedA}
in representing information structures\textemdash as well as a network,
which we add to the model\textemdash via a Markov process. However,
we actually work with a different sort of Markov process than the
one in \citet*{samet1998iteratedA}: Our Markov process operates on
the \emph{union} of agents' types (which we denote by $S$), whereas
Samet's process applied to our questions operates on \emph{profiles}
of agents' types $T$.\footnote{Samet works with a partitional formalism; \citet*[Section 6]{GolubMorris2016}
restates our framework in that formalism.} There are a number of reasons why the former Markov process (on $S$)
is the appropriate one for our problem. First, it permits a unified
or symmetric treatment of networks and asymmetric information, as
discussed in Section \ref{subsec:Interpreting-the-Interaction-Structure}.
If one adds a network structure to Samet's Markov formulation, networks
and asymmetric information enter in very different ways in the formalism
(see \citet*{GolubMorris2016} for a presentation along these lines).
Second, and relatedly, our formalism allows us to relate key elements
of our analysis to results in the literature on network games. Finally,
the \citet*{samet1998iteratedA} approach works with matrices whose
rows and columns are indexed by $\Omega=\Theta\times\prod_{i\in N}T^{i}$,
which can be much larger than $S=\bigcup_{i\in N}T^{i}$; thus it
can be convenient to have our formalism for doing explicit computations.

\subsection{Ex Ante and Interim Interpretation\label{subsec:ex-ante-interim}}

We take an ex ante perspective in our analysis: At an initial date,
agents have prior beliefs\textemdash and no information\textemdash about
a state of the world. This interpretation entails common certainty
among the agents of everyone's prior beliefs and the way agents update
their beliefs.\footnote{Under an interim interpretation, it is without loss of generality
to assume common certainty of types' interim beliefs, i.e. how beliefs
are updated: see \citet*[p. 1237]{auma76} and \citet*{Brandenburger1993}.} In this section, we discuss some consequences of our ex ante approach
and interim interpretations of our results

\subsubsection{Dynamic Interpretation: The Arrival of Information }

Under the ex ante perspective, the results of this paper can be given
an explicitly dynamic interpretation. Before the arrival of information,
there is symmetric information and, therefore, the consensus expectation
is equal to the average of agents' ex ante expectations, weighted
by their eigenvector centralities (Section \ref{sec:consensus-and-network}).
In other words, if the agents had to select actions at that stage,
this is what their actions would be equal to. One interpretation of
our results is as an answer to the question, \emph{How does the consensus
expectation change after agents observe their signals?} We show that
common prior over signals is a sufficient condition for \emph{no}
change in the consensus expectation (Proposition \ref{prop:cpa-signals});
second-order optimism causes the consensus expectation to increase
to the highest possible interim belief (Proposition \ref{prop:ignorant-player});
and, under the conditions in the results on the tyranny of the least-informed,
the weights on agents' priors change from those induced by the network
$\Gamma$ to a degenerate vector which places all the weight on the
least informed.

\subsubsection{Interim Interpretation\label{subsec:Interim-Interpretation}}

Though we take an ex ante view throughout, consensus expectations,
which emerge from agents' play at the interim stage, cannot depend
on agents' ex ante beliefs about their own types. Thus consensus expectations
must depend only on agents' interim beliefs (across all possible types).
We have emphasized this in our notation, by first expressing the information
structure in interim terms (i.e., via the beliefs $\pi^{i}(\cdot\mid t^{i})$),
and only then adding in ex ante beliefs over each agent's signals
(the $\lambda^{i}$ in Proposition \ref{prop:representation}). 

Let us discuss how certain main results would look if we were to stick
to a purely interim interpretation. First, results such as the representation
of Proposition \ref{prop:representation} would still make sense,
but the $\lambda^{i}$ would not be interpreted as anyone's beliefs.
More substantially, consider Proposition \ref{prop:cpa-signals}.
Let us focus on a particularly simple consequence of it: Under the
common prior assumption on all of $\Theta\times T$, the consensus
expectation of $y$ is the prior expectation of $y$. To make sense
of this in interim terms, we first have to say what the common prior
assumption means in interim terms. \citet*{samet1998iteratedA} has
characterized that assumption as the conditition that, for any random
variable $y$, \emph{higher-order expectations converge to the same
number, independent of the order in which expectations are taken (as
long as each agent appears infinitely often)}; this number can be
identified with the common prior expectation of $y$. Thus an interim
statement of the simple consequence of Proposition \ref{prop:cpa-signals}
is: Under the italicized condition, the consensus expectation of $y$
is simply the prior expectation of $y$. This is natural: We can write
the consensus expectation as an average of higher-order expectations,
and the irreducibility of $\Gamma$ ensures that all agents appear
infinitely often in each of them. An interim version of Proposition
\ref{prop:cpa-signals} follows from very similar reasoning, with
more attention paid to the network, and this is carried out in \citet*{GolubMorris2016}.

The consequence of Proposition \ref{prop:cpa-signals} that we have
discussed is similar to Corollary \ref{cor:full-cpa} but differs
in an important way. Corollary \ref{cor:full-cpa} does not depend
on there being a common prior on the whole state space (i.e, on signals
and beliefs jointly); rather, it requires that the ex ante first-order
expectations of $y$ be the same across agents. This assumption does
not have an obvious interim interpretation. Thus, the contrast between
the ``full common prior'' result we have discussed in the previous
paragraph and the actual result of Corollary \ref{cor:full-cpa} helps
bring out where an ex ante perspective is important for us.

Our second-order optimism result (Proposition \ref{prop:over-optimism})
is stated in terms of interim beliefs only (the consensus expectation
is equal to the highest possible interim belief), so ex ante beliefs
do not play a role in the interpretation. On the other hand, the common
interpretation of signals property used in the result on the tyranny
of the least-informed (Proposition \ref{prop:ignorant-player}) does
not have any natural interim interpretation.\footnote{The ex ante properties of Definitions \ref{def: CIS_1} and \ref{def:CIS_2}
do imply properties of interim beliefs\textemdash see, for example,
Lemma \ref{lem:Bij-bound} in Section \ref{subsec:Proof-of-Proposition-Ignorant}.}

\subsection{Agent-Specific Random Variables and Incomplete Information about
the Network\label{heterogeneous-1}}

Our focus throughout the paper has been on agents' higher-order expectations
of a given random variable, which is the same across all agents. But
for many applications of interest, there is a different random variable
corresponding to each agent, and then higher-order expectations are
taken. For example, a literature on coordination games in networks
focuses on the case where agents have different preferred actions
(which correspond to the different random variables) in the absence
of coordination motives, and where one's network neighbors also influence
one's choice, with linear best responses assumed \citep*{Ballester2006,cadp15,behm15}. 

This case can be embedded readily into our formalism. Specifically,
suppose that instead of being interested in a (common) random variable
$y\in\mathbb{R}^{\Theta}$ measurable with respect to the external
state, each agent has a different random variable, $y^{i}\in\mathbb{R}^{\Theta}$.
Now, in Section \ref{sec:iterated-average-expectations}, equation
(\ref{eqn:def-first-step}) is changed to 
\[
x^{i}(1;\boldsymbol{y})=E^{i}y^{i}.
\]
Once $x^{i}(1;\boldsymbol{y})$ is set, the higher-order average expectations
are defined by the same equation, (\ref{eqn:def-iteration}), as before:
\[
x^{i}(n+1;y)=\sum_{j\in N}\gamma^{ij}E^{i}x^{j}(n;\boldsymbol{y}).
\]
Correspondingly, in the matrix notation of Section \ref{sec:The-Interaction-Structure},
where the key iteration is $x(n)={B}^{n-1}{F}y$, the vector $Fy$
is replaced by a vector $f\in\mathbb{R}^{S}$, with 
\[
f(t^{i})=E^{i}[y^{i}\mid t^{i}].
\]
The analogue of (\ref{eq:powers-Bn}) is
\[
\lim_{\beta\uparrow1}\left(1-\beta\right)\left({\displaystyle \sum\limits _{n=0}^{\infty}\beta^{n}B^{n}}\right)f\text{,}
\]
and $B^{n}f$ has the interpretation that it describes the higher-order
average expectations of the agents' first-order expectations of their
agent-specific random variables.

One can generalize further and consider a ``pure private values''
setting: $f$ can be replaced by an \emph{arbitrary }vector $f\in\mathbb{R}^{S}$,
with the interpretation that $f(t^{i})$ is the action that agent
$i$ would like to take, when he has signal $t^{i}$, in the absence
of coordination motives\textemdash an action he knows. In this case,
each agent faces no uncertainty about the random variable of interest
to him. Note that the case of different $y^{i}\in\mathbb{R}^{\Theta}$
is a special case of this, because in that case $f(t^{i})$ is agent
$i$'s \emph{expectation} of his own $y^{i}$ given signal $t^{i}$.

This brings us closer to \citet*{cadp15} and \citet*{behm14}. Motivated
by a study of endogenous attention allocation, they work with a network
version of a setting commonly studied in organizational economics
and focus on an analogue of our $\beta\uparrow1$ limit. They show
that it is a weighted average of agents' heterogeneous ideal points
that matters for determining the network consensus. Even though their
setting involves normally distributed random variables and linear\textendash quadratic
preferences, the core calculations boil down to understanding an analogue
of $B^{n}f$, just as we must in order to study the heterogeneous-values
variation of our model we have just presented.

\subsubsection{Equivalence Between Agent-Specific Random Variables and Different
Priors over $\Theta$}

Given any environment with agent-specific random variables and a common
prior on signals, we can find another environment with the same prior
on signals in which agents all care about the \emph{same} random variable
but have heterogeneous beliefs about external states. That is, given
any profile $(y^{i})_{i\in N}$, we can define a new environment with
new beliefs over $\Theta$ and a random variable $y$ so that the
resulting $f\in\mathbb{R}^{S}$ mimics that arising from the original
environment. Then results such as Proposition \ref{prop:cpa-signals}
can be applied.

This equivalence relies on the common prior on signals assumption:
Without a common prior on signals, we could maintain such an equivalence
only if the ``own random variables'' could depend on others' signals
\citep*[cf.][p. 74]{myerson1997}. This is related to the essential
differences we observed between the model with a common prior over
signals (Section \ref{sec:cpa-signals}) and the model without it. 

\subsubsection{Type-Dependent Network Weights\label{sec:type-dependent-gamma}}

A related extension allows for type-dependence in $\gamma^{ij}$.
In this case, we take this network weight to depend on the signal
of $i$, and write $\gamma^{ij}(t^{i})$. Much of our analysis goes
through unchanged: Equation (\ref{eqn:short-xn}) still describes
$x(n)$, but now under the definition
\[
B(t^{i},t^{j})=\gamma^{ij}(t^{i})\pi^{i}(t^{j}\mid t^{i}).
\]
If we interpret $\gamma^{ij}$ as $i$'s probability of meeting or
interacting with $j$, then signal-dependence of these weights corresponds
to private information about interactions. The only results that we
lose in this generalization are those of Section \ref{sec:cpa-signals},
because there is now no information-independent notion of the network
or of centrality. But the limits we study still exist, and much of
their structure (e.g., the structure described in Proposition \ref{prop:char},
with $p$ the left-hand unit eigenvector of the generalized $B$)
is still present and can be used to study this more general setting. 

\section{For Online Publication: Additional Discussion\label{sec:Additional-Discussion}}

\subsection{Periodicity and Simple Higher-Order Expectations\label{periodicity-1} }

In defining consensus expectations, or the limit of higher-order average
expectations, we considered the \emph{Abel average} 
\begin{equation}
\lim_{\beta\uparrow1}\left(1-\beta\right){\displaystyle \sum\limits _{n=0}^{\infty}\beta^{n}}x(n+1;y)=\lim_{\beta\uparrow1}\left(1-\beta\right)\left({\displaystyle \sum\limits _{n=0}^{\infty}\beta^{n}B^{n}}\right)Fy,\label{weighted-1}
\end{equation}
which is always well defined. It is natural to ask how the higher-order
average expectations $x(n;y)$ behave without this averaging, and
about the limit
\begin{equation}
\lim_{n\to\infty}B^{n}Fy.\label{unweighted-1}
\end{equation}
As long as $B$ is aperiodic,\footnote{A matrix is said to be aperiodic if, in the associated weighted directed
graph, the greatest common divisor of all cycles' lengths is equal
to $1$. A sufficient condition for this is that the matrix $\Gamma$
have all positive entries. Even if $\gamma^{ii}=0$ for all $i$\textemdash a
natural special case for some interpretations and applications\textemdash and
if there are at least $3$ agents, $\gamma^{ij}>0$ for all $j\neq i$
is another sufficient condition for aperiodicity.} the limit (\ref{unweighted-1}) exists and is equal to the right-hand
side of (\ref{weighted-1}). 

Aperiodicity, and the existence of the limit (\ref{unweighted-1}),
is not relevant for many of the applications reported in the paper.
\ For the linear best-response game and asset pricing, we are explicitly
interested in the limit of the weighted sum of higher-order expectations,
i.e., (\ref{weighted-1}) above, and not in limits of unweighted higher-order
expectations, i.e., (\ref{unweighted-1}) above. Nothing about the
structure of agent-type weights depends on aperiodicity.

However, periodicity does affect the behavior of the $x(n;y)$ in
the limit, and here we discuss how. Suppose that we have a cycle of
agents $i_{1},i_{2},\ldots,i_{|N|},i_{1}$: that is, that the network
$\Gamma$ has each agent $i_{k}$ putting weight $1$ on agent $i_{k+1}$.
Then the corresponding matrix will not be aperiodic. \ For example,
if there are two agents, $N=\left\{ 1,2\right\} $, and $\gamma^{12}=\gamma^{21}=1$,
then we have 
\[
\Gamma=\left(\begin{array}{cc}
0 & 1\\
1 & 0
\end{array}\right)
\]
and 
\[
B=\left(\begin{array}{cc}
0 & B^{12}\\
B^{21} & 0
\end{array}\right)
\]
(recall the definition of $B^{ij}$ from Section \ref{sec:The-Interaction-Structure})
and $B$ will be periodic and give rise to a two-cycle. In particular,
there will be well-defined limits \ 
\[
\lim_{n\to\infty}\left[E_{2}E_{1}\right]^{n}E_{2}y=c_{2}\mathbf{1}
\]
and 
\[
\lim_{n\to\infty}\left[E_{1}E_{2}\right]^{n}E_{1}\left(y\right)=c_{1}\mathbf{1}
\]
but they will not be equal. In the limit, the vector $x(n)$ will
cycle between 
\[
\left(\begin{array}{c}
c_{1}\mathbf{1}\\
c_{2}\mathbf{1}
\end{array}\right)\text{ and }\left(\begin{array}{c}
c_{2}\mathbf{1}\\
c_{1}\mathbf{1}
\end{array}\right).
\]

For more general cycles of agents, we will have limits of the form
\[
\lim_{n\to\infty}\left[E_{i_{1}}E_{i_{2}}...E_{i_{k}}y\right]^{n}E_{i_{1}}E_{i_{2}}...E_{i_{j}}y
\]
but they will be different for different values of $j=1,\ldots,k$.
\ We will refer to such expressions as \textit{simple higher-order
expectations}. \ If the network $\Gamma$ were given by this cycle,
then the entries of $B^{n}Fy$ would be the simple higher-order expectations.
The general higher-order expectations that we study will end up being
complicated weighted sums of such simple higher-order expectations,
although we will not in general work with the decompositions. 

Without the assumption of finitely many types, behavior more complicated
than cycling can arise, and ${\displaystyle (1-\beta)\sum\limits _{n=0}^{\infty}\beta^{n}}x(n+1;y)$
need not converge as $\beta\uparrow1$. This phenomenon is discussed
in \citet*{typicaltypes} and \citet*{NewT6:online}. A related but
different lack of convergence plays a role in \citet*{han16}: there,
because of the lack of finiteness of the type space, arbitrary higher-order
expectations can obtain.

\subsection{A Behavioral Interpretation of Irreducibility via No Trade\label{sec:no-trade}}

What is the behavioral content of the joint connectedness of beliefs
and the network, i.e., the irreducibility of $B$? 

We report a characterization of the joint connectedness property,
and therefore the existence and uniqueness of a distribution of positive
agent-type weights. Just as the common prior assumption can be characterized
as the non-existence of profitable trades among agents (see \citet*{morr94}
and \citet*{samet1998commonB}), the property we are studying here
has a no-trade characterization.\footnote{See \citet*{Nehring2001} for more on the various relations between
no-trade conditions, higher-order expectations, and common priors.} 

Let $x^{i}$ be a payment rule for agent $i$, $x^{i}:T^{i}\rightarrow\mathbb{R}$,
which is measurable with respect to agent $i$'s signal. \ A trade
consists of a profile of payment rules, $\left(x^{i}\right)_{i\in N}$.
The trade generates \textit{strict expected bilateral gains from trade}
if 
\[
x^{i}\left(t^{i}\right)\leq\sum\limits _{j\neq i}\gamma^{ij}{\displaystyle \sum\limits _{t^{j}\in T^{j}}\pi^{i}\left(t^{j}|t^{i}\right)x^{j}\left(t^{j}\right)}
\]
for each agent $i$ and $t^{i}\in T^{i}$, with strict inequality
for at least one agent $i$ and $t^{i}\in T^{i}$. The interpretation
is that agent $i$ is committed to making a payment $x^{i}\left(t^{i}\right)$
as a function of his signal. But he anticipates receiving the payments
to which others are committed. 
\begin{prop}
There exists a separable trade generating strict expected bilateral
gains from trade if and only if beliefs and the network are jointly
connected. 
\end{prop}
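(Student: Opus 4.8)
The plan is to settle when a separable trade with strict expected bilateral gains can exist by reducing it to a linear feasibility question about the interaction structure $B$, and then to read the answer off a strictly positive invariant vector of $B$ --- the agent-type weights $p$ of Proposition~\ref{prop:char} --- which plays here the role that a common prior plays in the no-trade dualities of \citet*{morr94} and \citet*{samet1998commonB}. First I would put the trade in matrix form. Because $\gamma^{ii}=0$, the receipt $\sum_{j\neq i}\gamma^{ij}\sum_{t^{j}\in T^{j}}\pi^{i}(t^{j}\mid t^{i})x^{j}(t^{j})$ that agent $i$ anticipates is exactly $(Bx)(t^{i})$, where $x\in\mathbb{R}^{S}$ stacks the payment rules and $B(t^{i},t^{j})=\gamma^{ij}\pi^{i}(t^{j}\mid t^{i})$. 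Thus a separable trade generating strict expected bilateral gains is precisely a vector $x\in\mathbb{R}^{S}$ with $Bx\geq x$ in every coordinate and $Bx\neq x$; equivalently, $(B-I)x$ is coordinatewise nonnegative and nonzero. The whole question is whether such a semipositive, non-constant $B$-subharmonic function exists.

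The decisive tool is a theorem of the alternative (Stiemke's lemma) applied to $B-I$: exactly one of the following holds --- either (i) there is $x$ with $(B-I)x\geq 0$ and $(B-I)x\neq 0$, or (ii) there is a strictly positive $p$ (all entries positive) with $(B-I)^{\top}p=0$, i.e.\ $pB=p$. Hence a separable strict-gains trade is infeasible if and only if $B$ admits a strictly positive left-invariant vector. The mechanism by which such a $p$ forbids trade is transparent and self-contained: premultiplying $Bx\geq x$ by $p$ and using $pB=p$ gives
\[
p(Bx-x)=(pB)x-px=px-px=0,
\]
so the nonnegative vector $Bx-x$ has $p$-weighted sum zero; since $p>0$, it cannot be strictly positive in any coordinate, and the gains vanish identically.

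The connection to joint connectedness then runs through Proposition~\ref{prop:char}. Joint connectedness makes $B$ irreducible and furnishes a \emph{unique}, strictly positive $p$ with $pB=p$; alternative (ii) holds, and no separable strict-gains trade exists. For the reverse construction I would manufacture a trade directly from a failure of connectedness: given a forward-closed set $C\subsetneq S$ that can be entered from its complement, the indicator $x=\mathbf{1}_{C}$ satisfies $Bx\geq x$ everywhere, with strict inequality exactly at the signals outside $C$ that place positive $B$-mass on $C$. This pair of arguments is what ties the existence of the trade to the connectedness of beliefs and the network, as promised by the no-trade reading of the statement.

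The hard part will be the reverse direction in full generality, because joint connectedness can fail \emph{benignly}. If $S$ splits into several closed, mutually unreachable classes carrying no flow between them --- the two-disjoint-public-events situation of Section~\ref{subsec:Consensus-without-Irreducibility} --- then $B$ still carries a strictly positive invariant vector, namely a convex combination of the classes' stationary distributions, so by the alternative it supports no strict-gains trade even though it is not jointly connected. The crux is therefore to show that the trade-relevant failures of joint connectedness are exactly those producing a genuinely \emph{transient} signal (equivalently, those that preclude a strictly positive invariant vector), and to handle benign decompositions by conditioning on the realized public event, as in Section~\ref{subsec:Consensus-without-Irreducibility}, so that irreducibility is restored within each event and the clean equivalence holds. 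This is precisely where the ``separable'' qualifier and the maintained treatment of public events must do the work of matching the feasibility of the trade to the connectedness hypothesis.
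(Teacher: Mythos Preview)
Your impossibility direction (jointly connected $\Rightarrow$ no strict-gains trade) is the same as the paper's: both premultiply the trade inequality by the strictly positive invariant vector $p$ of Proposition~\ref{prop:char} to force $Bx=x$. Packaging this via Stiemke's lemma makes the duality explicit and is cleaner than the paper's direct contradiction. (Both you and the paper are tacitly reading the proposition with the opposite polarity from how it is printed; the paper's own proof establishes that a strict-gains trade exists if and only if joint connectedness \emph{fails}.)

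The converse is where the two part ways, and you are more careful than the paper. The paper asserts that if irreducibility fails, some type $t^{i}$ receives zero $B$-weight from every other type, and builds the trade by setting that coordinate negative and all others to zero. But reducibility does not imply a zero column: the paper's own example in Section~\ref{sec:irreducibility-primitives}---two disjoint closed classes $\{a^{1},a^{2},a^{3}\}$ and $\{b^{1},b^{2},b^{3}\}$---has every type receiving positive incoming weight, so the paper's construction does not apply there. Your indicator-of-a-closed-set construction is more robust, and you correctly flag that it too fails on precisely these ``benign'' decompositions: when $S$ splits into mutually unreachable closed classes, $B$ still carries a strictly positive invariant vector (any convex combination of the per-class stationary distributions), and Stiemke then rules out any strict-gains trade even though $B$ is reducible. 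So the equivalence, as literally claimed, is false on such examples. Your diagnosis is right: the clean Stiemke equivalence is between ``no strict-gains trade'' and ``$B$ admits a strictly positive invariant vector,'' which for a stochastic matrix means the absence of transient states---a strictly weaker condition than irreducibility. Conditioning on the realized public event, as you propose (and as Section~\ref{subsec:Consensus-without-Irreducibility} does elsewhere), is one way to repair the statement.
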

\begin{proof}
The existence of a separable trade giving strict expected bilateral
gains from trade is equivalent to the requirement that there exists
a vector $x$ such that $x>Bx$, where $>$ means a weak inequality
on all components and strict inequality on some component. Recall
that irreducibility implies the existence of a strictly positive \textit{vector
of agent-type weights} $p$ with $pB=p$. Now we have $px>pBx=pBx$,
a contradiction. So irreducibility fails. Conversely, suppose that
irreducibility fails. Then there exists at least one type $t^{i}\in S$
that no one assigns positive probability to, so that $\gamma^{ji}\pi^{j}\left(t^{i}\mid t^{j}\right)=0$
for all $\,t^{j}$. \ But now if we set $x^{i}\left(t^{i}\right)<0$
and $x^{j}\left(t^{j}\right)=0$ if $t^{j}\neq t^{i}$, then we have
a separable trade with strict expected gains.
\end{proof}

\subsection{Non-Existence of Consensus Expectations on Infinite State Spaces\label{subsec:Non-Existence}}

We have maintained the assumption that $\Theta$ and all the $T^{i}$
are finite. In general, without finiteness, there may not be a vector
of agent-type weights as defined in Proposition \ref{prop:char}.
An example offered by \citet*[Section 6]{Hellman2011} demonstrates
this. The example uses a version of the two-player information structure
in Rubinstein's \citeyearpar{rubi89} electronic mail game, with $T^{1}$
and $T^{2}$ both having the cardinality of $\mathbb{N},$ the natural
numbers. If we take a network $\Gamma$ on two players such that each
puts all weight on the other, and construct a suitable infinite analogue
of $B$, Hellman's result implies that there is no invariant measure
for $B$\textemdash i.e., no vector $p\in\Delta(S)$ of agent-type
weights such that $pB=p$. Therefore, there is no analogue of Proposition
\ref{prop:char}, which was the foundation for all our results.

We conjecture that if, like \citet*{Hellman2011}, we require the
state space $\Omega$ underlying $T^{1}$ and $T^{2}$ to be compact\footnote{Hellman works with a partitional formalism similar to that of \citet*{samet1998iteratedA};
see \citet*{GolubMorris2016} for a translation of higher-order expectations
into this framework.} and the information structure to be everywhere mutually positive
in his sense, then we can recover suitable analogues of our results.
On the other hand, if the states come from normal distributions and
agents receive noisy signals about them, then the relevant type spaces
are uncountably infinite and the random variables have unbounded support,
but iterated expectations can still be well-behaved. This case is
studied in \citet*{han16}; see also \citet*{mosh02}.
\end{document}